\DeclareMathOperator{\mmse}{mmse}
\newcommand{\distas}[1]{\mathbin{\overset{#1}{\kern\z@\sim}}}%
\newsavebox{\mybox}\newsavebox{\mysim}
\newcommand{\distras}[1]{%
  \savebox{\mybox}{\hbox{\kern3pt$\scriptstyle#1$\kern3pt}}%
  \savebox{\mysim}{\hbox{$\sim$}}%
  \mathbin{\overset{#1}{\kern\z@\resizebox{\wd\mybox}{\ht\mysim}{$\sim$}}}%
}
\newtheorem{theorem}{Theorem}[section]
\newtheorem{lemma}[theorem]{Lemma}
\newtheorem{claim}{Claim}
\newtheorem{define}{Definiton}
\newtheorem{prop}{Proposition}
\newtheorem{remark}{Remark}
\def\cF{\mathcal{F}}
\def\cQ{\mathcal{Q}}
\newcommand{\Pc}[1]{P_{c_{[#1]}}}
\def\Var{\mathrm{Var}}
\newcommand{\Pcp}[1]{P^{\perp}_{c_{[#1]}}}
\newcommand{\Pd}[2]{P_{c'_{[#1]},c_{[#2]}}}
\newcommand{\Pdd}[2]{P_{c_{[#1]},c_{[#2]}}}
\def\PP{P^{\perp}}
\def\PPA{P^{\perp}_{A_1}}
\newcommand{\norm}[1]{\left\| #1 \right\|}
\newcommand{\norma}[1]{\biggr\| #1 \biggr\|}
\newcommand{\Pb}[1]{\mathbb{P}\left[#1 \right]}
\newcommand{\Prb}[1]{\mathbb{P}\biggr[#1 \biggr]}
\newcommand{\dop}[2]{\left\langle #1, #2 \right\rangle}
\newcommand{\Exp}[1]{\exp\left(#1\right)}
\newcommand{\Ex}[1]{\mathbb{E}\left[#1 \right]}
\newcommand{\Exa}[1]{\mathbb{E}\biggr[#1 \biggr]}
\def\EE{\mathbb{E}}
\newcommand{\Exxa}[2]{\mathbb{E}_{#1}\biggr[#2 \biggr]}
\newcommand{\txb}[1]{#1}
\def\phd{\left(1+P'_{n}\sum_{i\in D^c}|H_i|^2\right)}
\def\phdd{\left(1+P'\sum_{i\in D^c}|H_i|^2\right)}
\def\phs{\left(1+P'\sum_{i\in S\setminus S_2^*}|H_i|^2\right)}
\def\phsa{\biggr(1+P'\sum_{i\in S\setminus S_2^*}|H_i|^2\biggr)}
\def\phss{P'\sum_{i\in S}|H_i|^2}
\def\hc{\sum_{i\in S}H_i c_i}
\def\hcc{\sum_{i\in S_2^*}H_i c_i}
\def\cn{\mathcal{CN}}
\def\cs{\mathcal{(CS)}}
\def\cd{\mathcal{C}}
\def\mate{\mathcal{E}}
\def\eqdef{\triangleq}
\def\PT{P_{tot}}
\def\PTX{P_{tot,\nu}}
\def\Ieee{IEEEeqnarray*}
\def\Ieeen{\IEEEyesnumber}
\begin{document}
\title{Fundamental limits of many-user MAC with finite payloads and fading\\
\thanks{This material is based upon work supported by the National Science
 Foundation under Grant No CCF-17-17842 and a grant from Skoltech--MIT Joint Next Generation Program (NGP). This work was also supported in part by the MIT-IBM Watson AI Lab.}}

\author{Suhas~S~Kowshik and~Yury~Polyanskiy,~\IEEEmembership{Senior~Member,~IEEE} \thanks{S. S. Kowshik and Y. Polyanskiy are with the Department of Electrical Engineering and Computer Science, Massachusetts Institute of Technology, Cambridge, MA 02139, USA.}%
\thanks{This paper was presented in part at the 2019 IEEE International Symposium on Information Theory (ISIT)~\citep{kowshik2019quasi}.}%
\thanks{Copyright~\copyright~2021 IEEE. Personal use of this material is permitted.  However, permission to use this material for any other purposes must be obtained from the IEEE by sending a request to pubs-permissions@ieee.org}
}


\maketitle

\begin{abstract}
Consider a (multiple-access) wireless communication system where users are connected to a unique base station over a
shared-spectrum radio links. Each user has a fixed number $k$ of bits to send to the base station, and his signal gets
attenuated by a random channel gain (quasi-static fading). In this paper we consider the many-user asymptotics of
Chen-Chen-Guo'2017, where the number of users grows linearly with the blocklength. Differently, though, we adopt a per-user
probability of error (PUPE) criterion (as opposed to classical joint-error probability criterion). Under
PUPE the finite energy-per-bit communication is possible, and we are able to derive bounds on the tradeoff between
energy and spectral efficiencies. We reconfirm the curious behaviour (previously observed for non-fading MAC) of the
possibility of almost perfect multi-user interference (MUI) cancellation for user densities below a critical threshold.
Further, we demonstrate the suboptimality of standard solutions such as orthogonalization (i.e., TDMA/FDMA) and treating
interference as noise (i.e. pseudo-random CDMA without multi-user detection). Notably, the problem treated here can
be seen as a variant of support recovery in compressed sensing for the unusual definition of sparsity with one non-zero
entry per each contiguous section of $2^k$ coordinates. This identifies our problem with that of the sparse regression
codes (SPARCs) and hence our results can be equivalently understood in the context of SPARCs with sections of 
length $2^{100}$. Finally, we discuss the relation of the almost perfect MUI cancellation property and the
replica-method predictions.
\end{abstract}

\begin{IEEEkeywords}
Finite blocklength, many-user MAC, per-user probability of error, approximate message passing, replica-method
\end{IEEEkeywords}

\section{Introduction}
\label{sec:intro}

We clearly witness two recent trends in the wireless communication technology: the increasing 
deployment density and miniaturization of radio-equipped sensors. The first trend results in progressively worsening
interference environment, while the second trend puts ever more stringent demands on communication energy efficiency.
This suggests a bleak picture for the future networks, where a chaos of 
packet collisions and interference contamination prevents reliable connectivity. 

This paper is part of a series aimed at elucidating the fundamental tradeoffs in this new ``dense-networks'' regime of communication, and on rigorously demonstrating suboptimality of state-of-the-art radio-access solutions (ALOHA,
orthogonalization, or FDMA/TDMA \footnote{In this paper we do not distinguish between TDMA, FDMA or any other orthogonalization strategy. From here on, all orthogonalization strategies will be referred to as ``TDMA''}, and treating interference as noise, or TIN). This suboptimality will eventually lead to dramatic consequences. For example, environmental impact of billions of toxic batteries getting depleted at 1/10 or 1/100 of the planned service time is easy to imagine. In order to future-proof our systems, we should avoid locking in on outdated and unfixable multiple-access architectures causing tens of dB losses in energy efficiency. The information-theoretic analysis in this paper demonstrates that the latter is indeed unavoidable (with state-of-the-art schemes). However, our message is in fact optimistic, as we also demonstrate existence of protocols which are partially immune to the increase of the sensor density.

Specifically, in this paper we consider a problem of $K$ nodes communicating over a
frame-synchronized multiple-access channel.  When $K$ is fixed and the frame size $n$ (which we will also call 
``blocklength'' or the ``number of degrees of freedom'') is taken to  infinity we get the classical
regime~\cite{cover2012elements}, in which the fundamental limits are given by well-known mutual information expressions. A
new regime, deemed \textit{many-access}, was put forward by Chen, Chen and Guo~\cite{chen2017capacity}. In this regime
the number of nodes $K$ grows with blocklength $n$. It is clear that the most natural scaling is linear: $K=\mu n,
n\to\infty$, corresponding to the fact that in time $n$ there are linearly many users that will have updates/traffic to
send~\cite{polyanskiy2017perspective}. That is, if each device wakes up once in every $T$ seconds and transmits over a frame of length $t$, then in time (proportional to) $t$ there are $K\approx t/T$ users where $t$ is large enough for this  approximation to hold but small that no device wakes up twice. Further, asymptotic results obtained from this linear scaling have been shown to approximately predict behavior of the fundamental limit at finite blocklength, e.g. at $n=30000$ and $K<=300$ \citep{polyanskiy2017perspective,ZPT-isit19}. The analysis of~\cite{chen2017capacity} focused on the regime of infinitely large
payloads (see
also~\cite{wei2018fundamental} for a related massive MIMO MAC analysis in this setting). In contrast~\cite{polyanskiy2017perspective} proposed to focus on a model where each of the $K=\mu n$ nodes
has only finitely many bits to send. In this regime, it turned out, one gets the relevant engineering trade-offs. Namely,
the communication with finite energy-per-bit is possible as $n\to\infty$ and the optimal energy-per-bit depends on the
user density $\mu$. For this to happen, however, a second crucial departure from the classical MAC model was needed:
the per-user probability of error, \textit{PUPE}, criterion~\cite{polyanskiy2017perspective}.

These two modifications (the scaling $K=\mu n$ and the PUPE) were investigated in the case of the AWGN channel 
in~\cite{polyanskiy2017perspective,ZPT-isit19,polyanskiy2018_course}. We next describe the main discovery of that work. The channel
model is:
\begin{equation}\label{eq:awgn}
	Y^n = \sum_{i=1}^K X_i + Z^n\,,\qquad Z^n \sim \mathcal{CN}(0,I_n)\,,
\end{equation}
and $X_i = f_i(W_i) \in \mathbb{C}^n$ is the codeword of $i$-th user corresponding to $W_i \in [2^k]$ chosen uniformly
at random. The
system is said to have PUPE $\epsilon$ if there exist decoders $\hat W_i = \hat W_i(Y^n)$ such that
\begin{\Ieee}{LLL}
\label{eq:pupe1}
P_{e,u}= \frac{1}{K} \sum_{i=1}^K \Pb{W_i \neq \hat W_i} \le \epsilon\,.\Ieeen
\end{\Ieee}
The energy-per-bit is defined as 
$$\frac{E_b}{N_0}=\frac{1}{k} \sup_{i\in [K], w\in [2^k]} \|f_i(w)\|^2.$$
The goal
in~\cite{polyanskiy2017perspective,polyanskiy2018_course} was to characterize the asymptotic limit
\begin{equation}
\label{eq:fund_lim}
\mate^*(\mu, k, \epsilon) \eqdef \limsup_{n\to\infty} \inf \frac{E_b}{N_0} 
\end{equation} 
where infimum is taken over all possible encoders $\{f_i\}$ and decoders $\{\hat W_i\}$ achieving the PUPE $\epsilon$
for $K=\mu n$ users. (Note that this problem may be recast in the language of compressed sensing and sparse regression
codes (SPARCs) -- see Section~\ref{sec:cs} below.)

To predict how $\mate^*(\mu, \epsilon)$ behaves, first consider a naive Shannon-theoretic
calculation~\cite{verdu2002spectral}: if $K$ users want to send $k$ bits in $n$ degrees of freedom, then their sum-power $P_{tot}$ should 
satisfy
$$ n\log(1+P_{tot}) = kK\,.$$
In turn, the sum-power $P_{tot} = \frac{kK}{n} \frac{E_b }{ N_0}$. Overall, we get 
$$ \mate^* \approx \frac{2^{\mu k} - 1}{ k\mu}\,. $$
This turns out to be a correct prediction, but only in the large-$\mu$ regime. The true behavior of the fundamental limit is roughly given by
\begin{equation}\label{eq:ebno_approx}
	\mate^*(\mu, k, \epsilon) \approx \max\left(\frac{2^{\mu k} - 1}{ k\mu}, \mate_{\mathrm{s.u.}}\right)\,, 
\end{equation}
where $\mate_{\mathrm{s.u.}}=\mate_{\mathrm{s.u.}}(k,\epsilon)$ does not depend on $\mu$ and corresponds to the single-user minimal energy-per-bit for sending $k$ bits with error $\epsilon$, for which a very tight characterization is given in \citep{polyanskiy2010minimum}. In particular, with good precision for $k\ge 10$ we have
\begin{equation}
\label{eq:E_su}
\mate_{\mathrm{s.u.}}(k,\epsilon)=\frac{1}{2k}\left(\mathcal{Q}^{-1}\left(2^{-k}\right)-\mathcal{Q}^{-1}\left(1-\epsilon\right)\right)^2
\end{equation}
where $\mathcal{Q}$ is the complementary CDF of the standard normal distribution:
$\mathcal{Q}(x)=\frac{1}{\sqrt{2\pi}}\int_{x}^{\infty}e^{-\frac{u^2}{2}}du$. 

 In all, results of~\cite{polyanskiy2017perspective,ZPT-isit19,polyanskiy2018_course} suggest that the minimal energy-per-bit has a certain
``inertia'': as the user density $\mu$ starts to climb from zero up, initially the energy-per-bit should stay the same
as in the single-user  $\mu=0$ limit. In other words, optimal multiple-access architectures should be able to
\textit{almost perfectly cancel all multi-user interference (MUI)}, achieving an essentially single-user performance for each user, \textit{provided the user density is below a critical threshold}. Note that this is much better than orthogonalization, which achieves the same effect at the expense of shortening the available (to each user) blocklength by a factor of $\frac{1}{ K}$. Quite surprisingly, standard approaches to multiple-access such as TDMA and TIN\footnote{Note that pseudo-random CDMA systems without multi-user detection and large load factor provide an efficient
implementation of TIN. So throughout our discussions, conclusions about TIN also pertain to CDMA systems of this kind.}, while having an
optimal performance at $\mu\to0$ demonstrated a significant suboptimality for $\mu>0$ regime. In particular, no
``inertia'' was observed and the energy-per-bit for those suboptimal architectures is always a monotonically increasing
function of the user density $\mu$. This opens the (so far open) quest for finding a future-proof MAC architecture that
would achieve $\mate_{\mathrm{s.u.}}$ energy-per-bit for a strictly-positive $\mu>0$. A thorough discussion of this curious behavior and its connections to replica-method predicted phase transitions is contained in
Section~\ref{sec:curious}.

(We note that in this short summary we omitted another important part of~\cite{polyanskiy2017perspective}: the issue of
random-access -- i.e. communicating when the 
identities/codebooks of active users are unknown a priori. We mention, however, that for the random-access version of the problem, there are a number of
low-complexity (and quite good performing) algorithms that are
available~\cite{kowshik2019energy,ordentlich2017low,vem2017user,amalladinne2018coupled,10.1007/978-3-030-01168-0_15,thompson2018compressed,calderbank2018chirrup}. See~\cite{pradhan2020polar, amalladinne2020unsourced,fengler2019sparcs,andreev2020polar} for more recent developments.)

The \textit{contribution} of this paper is in demonstrating the same almost perfect MUI cancellation effect in a much
more practically relevant communication model, in which the ideal unit power-gains of~\eqref{eq:awgn} are replaced by
random (but static) fading gain coefficients. We consider two cases of the channel state information: known at the
receiver (CSIR) and no channel state information (noCSI).

\textit{Key technical ideas:} For handling the noCSI case we employ the subspace projection decoder similar to the one proposed in \citep{yang2014quasi}, which can be seen as a version of the maximum-likelihood decoding (without prior on fading coefficients) -- an idea often used in support recovery literature~\citep{wainwright2009information,reeves2009note,reeves2012sampling}. Another key idea is to decode only a subset of users corresponding to the strongest channel gains -- a principle originating from
Shamai-Bettesh~\citep{bettesh2000outages}. While the randomness of channel gains increases the energy-per-bit requirements, in a related paper we find~\citep{kowshik2019energy} an unexpected advantage: the inherent randomization helps the decoder disambiguate different users and improves performance of the belief propagation decoder. Our second achievability bound improves projection decoder in the the low user density (low spectral efficiency) regime by applying the Approximate Message Passing (AMP) algorithm~\citep{donoho2009message}. The rigorous analysis of its performance is made possible by results in~\citep{bayati2011dynamics,reeves2012sampling}. On the converse side, we leverage the recent finite blocklength results for the noCSI channel from~\citep{polyanskiy2011minimum,yang2014quasi}.

The paper is organized
as follows. In Section~\ref{sec:model} we formally define the problem and the fundamental limits. In
Section~\ref{sec:cs} relation with compressed sensing is discussed. In
Section~\ref{sec:classical} as a warm-up we discuss the classical regime ($K$--fixed, $n\to\infty$) under the PUPE
criterion. We show that our projection decoder achieves the best known achievability bound in this
setting~\cite{bettesh2000outages}. (We also note that for the quasi-static fading channel model the idea of PUPE is very natural, and implicitly
appears in earlier works, e.g.~\cite{bettesh2000outages,tuninetti2002throughput}, where it is conflated with the outage
probability.) After this short warm-up we go to our
main Section~\ref{sec:manymac}, which contains rigorous achievability and converse bounds 
for the $K=\mu n, n\to\infty$ scaling regime. Some numerical evaluations are presented
Section~\ref{sec:numerical}, where we also compare our bounds with the TDMA and TIN. Finally, in
Section~\ref{sec:curious} we discuss the effect of
almost perfect MUI cancellation and its relation to other phase transitions in compressed sensing.

\subsection{Notations}
Let $\mathbb{N}$ denote the set of natural numbers. For $n\in\mathbb{N}$, let $\mathbb{C}^n$ denote the $n$--dimensional complex Euclidean space. Let $S\subset \mathbb{C}^n$. We denote the projection operator or matrix on to the subspace \textit{spanned} by $S$ as $P_{S}$ and its orthogonal complement as $P^{\perp}_{S}$. For $0\leq p\leq 1$, let $h_2(p)=-p\log_2(p)-(1-p)\log_2(1-p)$ and $h(p)=-p\ln(p)-(1-p)\ln(1-p)$, with $0\ln 0$ defined to be $0$. We denote by $\mathcal{N}(0,1)$ and $\cn(0,1)$ the standard normal and the standard circularly symmetric complex normal distributions, respectively. $\mathbb{P}$ and $\mathbb{E}$ denote probability measure and expectation operator respectively. For $n\in \mathbb{N}$, let $[n]=\{1,2,...,n\}$. $\log$ denotes logarithm to base $2$. Lastly, $\norm{\cdot}$ represents the standard euclidean norm.


\section{System Model}\label{sec:model}
Fix an integer $K\geq 1$ -- the number of users. Let $\{P_{Y^n|X^n}=P_{Y^n|X_1^n,X_2^n,...,X_K^n}:\prod_{i=1}^K \mathcal{X}_i^n\to \mathcal{Y}^n\}_{n=1}^{\infty}$ be a multiple access channel (MAC). In this work we consider only the quasi-static fading AWGN MAC: the channel law $P_{Y^n|X^n}$ is described by
\begin{equation}
\label{eq:sys1}
Y^n=\sum_{i=1}^{K}H_i X_i^n+Z^n
\end{equation}
where $Z^n\distas{}\cn(0,I_n)$, and $H_i\distas{iid}\cn(0,1)$ are the fading coefficients which are independent of $\{X^n_i\}$ and $Z^n$. Naturally, we assume that there is a maximum power constraint:
\begin{equation}
\label{eq:pow}
\norm{X^n_i}^2\leq nP. 
\end{equation}

We consider two cases: 1) no channel state information (no-CSI): neither the transmitters nor the receiver knows the realizations of channel fading coefficients, but they both know the law; 2) channel state information only at the receiver (CSIR): only the receiver knows the realization of channel fading coefficients. The special case of \eqref{eq:sys1} where $H_i=1,\forall i$ is called the Gaussian MAC (GMAC). 

In the rest of the paper we drop the superscript $n$ unless it is unclear.

\begin{define}
\label{def:1}
An $\left((M_1,M_2,...,M_K),n,\epsilon\right)_{U}$ code for the MAC $P_{Y^n|X^n}$ is a set of (possibly randomized) maps $\{f_i:[M_i]\to\mathcal{X}_i^n\}_{i=1}^{K}$ (the encoding functions) and $g:\mathcal{Y}^n\to \prod_{i=1}^{K}[M_i]$ (the decoder) such that if for $j\in [K]$, $X_j=f_j(W_j)$ constitute the input to the channel and $W_j$ is chosen uniformly (and independently of other $W_i$, $i\neq j$) from $[M_j]$ then the average (per-user) probability of error satisfies
\begin{\Ieee}{LLL}
\label{eq:def1}
P_{e,u}=\frac{1}{K}\sum_{j=1}^{K}\Pb{W_j\neq \left(g(Y)\right)_j}\leq \epsilon\Ieeen
\end{\Ieee}
where $Y$ is the channel output.
\end{define}

We define an $\left((M_1,M_2,...,M_K),n,\epsilon\right)_{J}$ code similarly, where $P_{e,u}$ is replaced by the usual joint error
\begin{equation}
\label{eq:def2}
P_{e,J}=\Pb{\bigcup_{j\in [K]}\left\{W_j\neq \left(g(Y)\right)_j\right\}}\leq \epsilon
\end{equation}

Further, if there are cost constraints, we naturally modify the above definitions such that the codewords satisfy the constraints. 

\begin{remark}
Note that in \eqref{eq:def1}, we only consider the \emph{average} per-user probability. But in some situations, it might be relevant to consider \emph{maximal} per-user error (of a codebook tuple) which is the maximum of the probability of error of each user. Formally, let $\cd_{[K]}=\{\cd_1,...,\cd_K\}$ denote the set of codebooks. Then
\begin{\Ieee}{LLL}
\label{eq:pupe_max}
P_{e,u}^{\max}=P_{e,u}^{\max}(\cd_{[K]})\\
=\max\left\{\Pb{W_1\neq\hat{W}_1 },...,\Pb{W_K\neq \hat{W}_K  }\right\}\Ieeen
\end{\Ieee}
where the probabilities are with respect to the channel and possibly random encoding and decoding functions. In this paper we only consider the fundamental limits with respect to $P_{e,u}$ and PUPE always refers to this unless otherwise noted. But we note here that for both asymptotics and FBL the difference is not important. See appendix \ref{app:pupe-max} for a discussion on this -- there we show that by random coding $\Ex{P_{e,u}^{\max}}$ is asymptotically equal to $\Ex{P_{e,u}}$ (expectations are over random codebooks).

\end{remark}

\subsection{Connection to compressed sensing and sparse regression codes}\label{sec:cs}
The system model and coding problem considered in this work (see eqn. \eqref{eq:sys1}) can be cast as a support recovery problem in compressed sensing. Suppose we have $K$ users each with a codebook of size $M$ and blocklength $n$. Let $A_i$ be the $n\times M$ matrix consisting of the codewords of user $i$ as columns. Then the codeword transmitted by the user can be represented as $X_i=A_i W_i$ where $W_i\in\{0,1\}^M$ with a single nonzero entry. Since each codeword is multiplied by a scalar random gain $H_i$, we let $U_i=H_i W_i$ which is again a $1$ sparse vector of length $M$. Finally the received vector $Y$ can be represented as 
\begin{equation}
\label{eq:cs1}
Y=\sum_{i\in[K]}H_i X_i +Z=AU+Z
\end{equation}
where $A=[A_1,\cdots,A_K]$ is $n\times KM$ matrix obtained by concatenating the codebooks, $U=[U_1^T,\cdots,U_K^T]^T$ is
the $MK\times 1$ length vector denoting the codewords (and fading gains) of each user. In our problem, the vector $U$ has a {\it
block-sparse} structure, 
namely $U$ has $K$ sections, each of length $M$, and there is only a single non-zero entry in each section. (Majority of
compressed sensing literature focuses on the {\it non-block-sparse} case, where $U$ has just $K$ non-zero entries, which
can be spread arbitrarily inside $KM$ positions.)
Decoding of the codewords, then, is equivalent to the support recovery problem under the block-sparse structure, a
problem considered in compressed sensing. In our setup, we keep $M$ fixed and let $K,n\to \infty$ with constant $\mu=K/n$. Hence $1/M$ is the sparsity rate and $M\mu$ is the measurement rate. 

This connection is not new and has been observed many times in the past \citep{aeron2010information,jin2011limits}. In \citep{jin2011limits} the authors consider a the exact support recovery problem in the case when the vector $U$ is just sparse (with or without random gains). This corresponds to the random access version of our model where the users share a same codebook \citep{kowshik2020energy}. They analyze the fundamental limits in terms of the rate (i.e., ratio of logarithm of signal size to number of measurements) necessary and sufficient to ensure exact recovery in both cases when sparsity is fixed and growing with the signal size. For the fixed sparsity case and $U$ having only ${0,1}$ entries, this fundamental limit is exactly the symmetric capacity of an AWGN multiple access channel with same codebook (with non colliding messages). With fading gains, they recover the outage capacity of quasi-static MAC \citep{biglieri1998fading,han1998information} (but with same codebook). 

In \citep{aeron2010information}, the authors discuss necessary and sufficient conditions for the
exact and approximate support recovery (in Hamming distortion), and $L_2$ signal recovery with various conditions on signal $X$ and matrix $A$ (deterministic versus random, discrete versus continuous support etc.). These results differ from ours in the sense that they are not for block sparse setting and more importantly, they do not consider approximate support recovery with Hamming distortion when the entries of the support of the signal are sampled from a continuous distribution, which is the case we analyze. Hence our results are not directly comparable.

Work~\citep{reeves2012sampling} comes closest to our work in terms of the flavor of results of achievability. As pointed out in~\citep{reeves2012sampling} itself, many other works like~\citep{aeron2010information} focus on the necessary and sufficient scalings (between sparsity, measurements and signal dimension) for various forms of support recovery. But the emphasis in~\citep{reeves2012sampling} and this work is on the precise constants associated with these scalings. In particular, the authors in~\citep{reeves2012sampling} consider the approximate support recovery (in Hamming distortion) problem when the entries in the support of the signal come from a variety of distributions. They analyze various algorithms, including matched filter and AMP, to find the minimum measurement rate required to attain desired support distortion error in terms of signal to noise ratio and other parameters. Furthermore, they compare these results to that of the optimal
decoder predicted by the replica method~\citep{guo2005randomly}.

The result on using replica method in~\citep{reeves2012sampling} is not directly applicable since our signal has block sparse (as opposed to i.i.d.) coordinates. But the AMP analysis presented there can be extended to our setting.
Because of the generality of the analysis
in~\cite{bayati2011dynamics}, it turns out to be possible to derive rigorous claims (and computable expressions) on the performance of the (scalar) AMP even in the block sparse setting. This is the content of Section~\ref{sec:amp_ach} below. Unlike the achievability side, for the converse we cannot rely on bounds in~\citep{reeves2012sampling} proven for the i.i.d. coordinates of $X$. Even ignoring the difference between the structural assumptions on $X$, we point out also that our converse bounds leverage finite-length results from~\citep{polyanskiy2011minimum}, which makes them tighter than the genie-based bounds in~\citep{reeves2013approximate}. 

The block-sparse assumption, however, comes very naturally in the area of
SPARCs~\citep{barron2012least,CIT-092,rush2018error}. The section error rate (SER) of a SPARC is precisely our PUPE. The vector AMP algorithm has been analyzed for SPARC with i.i.d Gaussian design matrix in \citep{rush2018error} and for the spatially-coupled matrix in~\citep{rush2020capacity} but for the AWGN
channel (i.e., when non-zero entries of $U$ in~\eqref{eq:cs1} are all 1). In\citep{barbier2017approximate}, heuristic derivation of state evolution of the vector-AMP decoder for spatially-coupled SPARCS was presented for various signal classes (this includes our fading scenario). However, the the resulting fixed point equations may not be possible to solve for our block
size as it amounts to computing $2^{100}$ dimensional integrals (and this also prevents evaluation of replica-method
predictions from \citep{barbier2017approximate}).


\section{Classical regime: $K$ fixed, $n\to\infty$}
\label{sec:classical}
In this section, we focus on the channel under classical asymptotics where $K$ is fixed (and large) and $n\to \infty$. Further, we consider two distinct cases of joint error and per-user error. We show that subspace projection decoder \eqref{eq:dec1} achieves a) $\epsilon$--capacity region ($C_{\epsilon,J}$) for the joint error and b) the best known bound for $\epsilon$--capacity region $C_{\epsilon,PU}$ under per-user error. This motivates using projection decoder in the many-user regime.

\subsection{Joint error}

A rate tuple $(R_1,...,R_K)$ is said to be $\epsilon$--achievable \citep{han1998information} for the MAC if there is a sequence of codes whose rates are asymptotically at least $R_i$ such that joint error is asymptotically smaller than $\epsilon$. Then the $\epsilon$--capacity region $C_{\epsilon,J}$ is the closure of the set of $\epsilon$--achievable rates. For our channel \eqref{eq:sys1}, the $C_{\epsilon,J}$ does not depend on whether or not the channel state information (CSI) is available at the receiver since the fading coefficients can be reliably estimated with negligible rate penalty as $n\to \infty$ \citep{biglieri1998fading}\citep{bettesh2000outages}. Hence from this fact and using \citep[Theorem 5]{han1998information} it is easy to see that, for $0\leq \epsilon< 1$, the $\epsilon$--capacity region is given by
\begin{equation}
\label{eq:cap}
C_{\epsilon,J} =\left\{R=(R_1,...,R_K):\forall i,R_i\geq 0 \text{ and } P_{0}(R)\leq \epsilon\right\}
\end{equation}
where the outage probability $P_{0}(R)$ is given by
\begin{IEEEeqnarray}{L}
\label{eq:outage}
P_{0}(R)= \IEEEnonumber\\
\Pb{\bigcup_{S\subset[K],S\neq \emptyset}\left\{\log\left(1+P\sum_{i\in S}|H_i|^2\right)\leq \sum_{i\in S}R_i\right\}}
\end{IEEEeqnarray}

Next, we define a subspace projection based decoder, inspired from \citep{yang2014quasi}. The idea is the following. Suppose there were no additive noise. Then the received vector will lie in the subspace spanned by the sent codewords no matter what the fading coefficients are. To formally define the decoder, let $C$ denote a set of vectors in $\mathbb{C}^n$. Denote $P_{C}$ as the orthogonal projection operator onto the subspace spanned by $C$. Let $P^\perp_{C}=I-P_{C}$ denote the projection operator onto the orthogonal complement of $span(C)$ in $\mathbb{C}^n$. 

Let $\cd_1,...,\cd_K$ denote the codebooks of the $K$ users respectively. Upon receiving $Y$ from the channel the decoder outputs $g(Y)$ which is given by
\begin{IEEEeqnarray}{C}
\label{eq:dec1}
g(Y)=\left(f_1^{-1}(\hat{c}_1),...,f_K^{-1}(\hat{c}_{K})\right)\IEEEnonumber\\
(\hat{c}_1,...\hat{c}_K)=\arg\max_{(c_i\in\cd_i)_{i=1}^{K}}\norm{P_{\{c_i:i\in[K]\}}Y}^2
\end{IEEEeqnarray}
where $f_i$ are the encoding functions.

In this section, we show that using spherical codebook with projection decoding, $C_{\epsilon,J}$ of the $K$--MAC is achievable. We prove the following theorem

\begin{theorem}[Projection decoding achieves $C_{\epsilon,J}$]
\label{th:cap_joint}
Let $R\in C_{\epsilon,J}$ of \eqref{eq:sys1}. Then $R$ is $\epsilon$--achievable through a sequence of codes with the decoder being the projection decoder~\eqref{eq:dec1}.

\begin{proof}
We generate codewords iid uniformly on the power sphere and show that \eqref{eq:dec1} yields a small $P_{e,J}$. See appendix \ref{app:1} for details.
\end{proof}

\end{theorem}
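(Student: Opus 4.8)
The plan is to use i.i.d.\ random coding on the power sphere and a union-bound analysis of the projection decoder, exactly along the lines of the classical analysis of the joint-error capacity region but adapted to the fading channel. Fix a target rate tuple $R\in C_{\epsilon,J}$; by the description of $C_{\epsilon,J}$ in \eqref{eq:cap}--\eqref{eq:outage}, the fading realization $H=(H_1,\dots,H_K)$ lands, with probability at least $1-\epsilon-o(1)$, in the ``good'' set $\mathcal{G}$ on which $\log(1+P\sum_{i\in S}|H_i|^2) > \sum_{i\in S}R_i + \delta$ holds simultaneously for every nonempty $S\subset [K]$ (one can afford a small slack $\delta>0$ and slightly back off the rates). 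The whole argument is then conditioned on $H\in\mathcal{G}$: on this event we must show the projection decoder \eqref{eq:dec1} recovers all messages with probability $\to 1$, which yields $P_{e,J}\le \epsilon + o(1)$ overall.

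Next I would set up the error event. Write $Y = \sum_i H_i c_i + Z$ with $c_i$ the true codewords. The decoder errs only if there is an alternative tuple $(c_i')$, differing from the true one exactly on a nonempty set $S$ of users, with $\norm{P_{\{c_i':i\in[K]\}}Y}^2 \ge \norm{P_{\{c_i:i\in[K]\}}Y}^2$. Decomposing $Y$ into its component in $\mathrm{span}\{c_i:i\in[K]\}$ plus the orthogonal noise component, the true projection captures essentially all the signal energy $\sum_{i\in S}|H_i|^2 \cdot(\text{per-codeword energy})$ plus an $O(K)$-dimensional slice of noise, while a wrong tuple keeps only the signal of the $|S^c|$ unchanged users plus at most a $K$-dimensional noise slice and whatever of $\sum_{i\in S}H_i c_i$ happens to correlate with the new random directions $c_i'$. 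The key quantity to control is $\norm{P_{\{c_i':i\in[K]\}}(\sum_{i\in S}H_i c_i + Z)}^2$: since the $c_i'$ (for $i\in S$) are independent of $(c_i)_{i\in S}$ and of $Z$, this projection of a vector of squared norm roughly $nP\sum_{i\in S}|H_i|^2 + n$ onto a uniformly random subspace of dimension $\le K$ concentrates around $\frac{K}{n}\big(P\sum_{i\in S}|H_i|^2 + 1\big)$, which is vanishing since $K$ is fixed. Meanwhile the true decoder loses only $O(K/n)$ of the signal energy to the orthogonal-noise direction. Hence with overwhelming probability the true tuple strictly wins against any fixed competitor on a fixed $S$.

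Then I would union-bound over competitors: for each $S$ there are at most $\prod_{i\in S} M_i \le 2^{\sum_{i\in S} nR_i}$ alternative tuples, and there are $2^K - 1$ choices of $S$ (a constant). So it suffices that for each $S$ the per-competitor error probability decays faster than $2^{-n\sum_{i\in S}R_i}$. This is where the condition $H\in\mathcal{G}$, i.e.\ $\sum_{i\in S}R_i < \log(1+P\sum_{i\in S}|H_i|^2) - \delta$, enters: a Chernoff/large-deviations bound on the event that a random $\le K$-dimensional projection of the interference-plus-noise vector exceeds the (vanishing but strictly positive) gap left by the true projection gives an exponent that, after the dust settles, beats $n\sum_{i\in S}R_i$ precisely under this inequality — this is the fading analogue of the standard mutual-information threshold, with $I(S) = \log(1+P\sum_{i\in S}|H_i|^2)$ playing the role of the sum-rate bound. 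The main obstacle, and the step needing the most care, is this large-deviations estimate for the norm of the projection of a (non-Gaussian, because it is $\sum_{i\in S}H_i c_i + Z$ with $c_i$ spherical) vector onto a random low-dimensional subspace, uniformly good enough over all $2^K-1$ subsets and with the right constant in the exponent; the spherical (rather than Gaussian) codebooks add a minor technical wrinkle (bounded-ness helps, but one must handle the $H_i$ tails, e.g.\ by further truncating to $|H_i|^2 \le \log n$ inside $\mathcal{G}$). Once that estimate is in hand, combining it with the union bound over the constantly-many $S$ and the conditioning on $\mathcal{G}$ closes the proof; the details are deferred to the appendix.
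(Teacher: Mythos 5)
Your proposal takes essentially the same route as the paper's proof: i.i.d.\ spherical codebooks, projection decoder, union bound over the $2^K-1$ subsets $S$ and the $\prod_{i\in S}M_i$ competing tuples, and the outage condition $\sum_{i\in S}R_i < \log(1+P\sum_{i\in S}|H_i|^2)$ as the threshold that makes the union bound vanish on the good fading event. The organization is slightly different — you condition on $H\in\mathcal{G}$ up front, whereas the paper reaches the outage event at the very end via a sequence of thresholding steps — but that reorganization is cosmetic.

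The substantive point I'd flag is that the step you yourself call ``needing the most care'' is in fact the heart of the proof, and it deserves more than a Chernoff/LDP hand-wave. The paper's route is to prove a clean distributional identity (Claim~\ref{claim:sc_nc1}): conditioned on the true codewords, fading, and noise, the squared norm of the projection of $Y$ onto the wrong subspace $\mathrm{span}\{c_i:i\in S^c\}\cup\{c'_i:i\in S\}$ decomposes as $\norm{P_{c_{S^c}}Y}^2 + \norm{P^\perp_{c_{S^c}}Y}^2\,\mathrm{Beta}(t,\,n-K)$, because the $c'_i$ live uniformly on the sphere and hence the residual direction is a uniformly random $t$-dimensional subspace. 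The error event is then exactly $\mathrm{Beta}(n-K,t)<G_S$ for the ratio $G_S$ of residual energies, and the elementary CDF bound $F_\beta(x;n-K,t)\le \binom{n-K+t-1}{t-1}x^{n-K}$ is what produces the exponential decay at rate $(n-K)\ln(1/G_S)$; the remaining (chi-squared and spherical-concentration) work shows $-\ln G_S \approx \ln(1+P\sum_{i\in S}|H_i|^2)$ on the good set, which is precisely what beats $n\sum_{i\in S}R_i$. Your heuristic that the random projection ``concentrates around $\tfrac{K}{n}(\cdot)$'' is a law-of-large-numbers statement and does not, by itself, deliver the needed exponential tail rate; and comparing the two projection norms side by side (each of which involves the common noise $Z$) is more delicate than working directly with the ratio $G_S$. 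So the plan is correct, but I'd steer you to make the Beta identity the centerpiece rather than treating the spherical codebook as a ``minor technical wrinkle'' — it is exactly what makes the competitor projection's law explicit and the exponent computable. One more small note: the paper truncates $|H_i|^2\le n^{1/4}$ rather than $\log n$; either works, but you do need a polynomially growing truncation level, not just boundedness, to keep the cross terms in $\norm{P^\perp_{c_{S^c}}\sum_{i\in S}H_ic_i}^2$ under control after applying the sphere-concentration lemmas.
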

\begin{remark} Note that~\cite{jin2011limits} also analyzed capacity region of the quasi-static MAC, but under the same codebook requirement, for the joint error probability (as opposed to PUPE), and with a different decoder. 
\end{remark}

\subsection{Per-user error}

In this subsection, we consider the case of per-user error under the classical setting. Further, we assume availability of CSI at receiver (CSIR) which again can be estimated with little penalty. 

The $\epsilon$--capacity region for the channel under per-user error, $C_{\epsilon,PU}$ is defined similarly as $C_{\epsilon,J}$ but with per-user error instead of joint error. $C_{\epsilon,PU}$ is unknown, but the best lower bound is given by the Shamai-Bettesh capacity bound  \citep{bettesh2000outages}: given a rate tuple $R=(R_1,...,R_K)$, an upper bound on the per-user probability of error under the channel \eqref{eq:sys1}, as $n\to\infty$, is given by
\begin{IEEEeqnarray*}{LLL}
\label{eq:pu3}
P_{e,u}& \leq & P_e^S(R) \\
&\equiv &  1-\frac{1}{K}\mathbb{E}\sup\left\{\vphantom{\frac{P'\sum_{i\in S}|H_{i}|^2}{1+P'\sum_{i\in D^c}|H_i|^2}} |D|:D\subset [K], \forall S\subset D, S\neq \emptyset,\right.\\
&& \left. \sum_{i\in S}R_i < \log\left(1+\frac{P\sum_{i\in S}|H_{i}|^2}{1+P\sum_{i\in D^c}|H_i|^2}\right) \right\}\IEEEyesnumber
\end{IEEEeqnarray*}
where the maximizing set, among all those that achieve the maximum, is chosen to contain the users with largest fading coefficients. The corresponding achievability region is 
\begin{IEEEeqnarray*}{LLL}
\label{eq:pu3a}
C^{S.B}_{\epsilon,PU}=\left\{R: P_e^S(R)\leq \epsilon \right\} \IEEEyesnumber
\end{IEEEeqnarray*}
and hence it is an inner bound on $C_{\epsilon,PU}$.

 We note that, in \citep{bettesh2000outages}, only the symmetric rate case i.e, $R_i=R_j\, \forall i,j$ is considered. So \eqref{eq:pu3} is the extension of that result to the general non-symmetric case.

Here, we show that the projection decoding (suitably modified to use CSIR) achieves the same asymptotics as \eqref{eq:pu3} for per-user probability of error i.e., achieves the Shamai-Bettesh capacity bound.  Next we describe the modification to the projection decoder to use CSIR.

Let $\{\mathcal{C}_i\}_{i=1}^{K}$ denote the codebooks of the $K$ users with $|\mathcal{C}_i|=M_i$. We have a maximum power constraint given by \eqref{eq:pow}. Using the idea of joint decoder from \citep{bettesh2000outages}, our decoder works in  2 stages. The first stage finds the following set
\begin{IEEEeqnarray*}{LLL}
\label{eq:pu1}
D\in & \arg \max \left\{\vphantom{\frac{P\sum_{i\in S}|H_{i}|^2}{1+P\sum_{i\in D^c}|H_i|^2}} |D|:D\subset [K], \forall S\subset D, S\neq \emptyset, \right.\\
&\left.\sum_{i\in S}R_i < \log\left(1+\frac{P\sum_{i\in S}|H_{i}|^2}{1+P\sum_{i\in D^c}|H_i|^2}\right) \right\} \IEEEyesnumber
\end{IEEEeqnarray*}
where $D$ is chosen to contain users with largest fading coefficients. The second stage is similar to \eqref{eq:dec1} but decodes only those users in  D. Formally, let $?$ denote an error symbol. The decoder output $g_{D}(Y)\in\prod_{i=1}^{K} \mathcal{C}_i$ is given by
\begin{IEEEeqnarray}{C}
\label{eq:pu2}
(g_D (Y))_i= \begin{cases} 
      f_i^{-1}(\hat{c}_i) & i \in D \\
      ? & i\notin D 
   \end{cases}\IEEEnonumber\\
(\hat{c}_i)_{i\in D}=\arg\max_{(c_i\in\cd_i)_{i\in D}}\norm{P_{\{c_i:i\in D\}}Y}^2
\end{IEEEeqnarray}
where $f_i$ are the encoding functions. Our error metric is the average per-user probability of error \eqref{eq:def2}.

The following theorem is the main result of this section.

\begin{theorem}
\label{th:pu1}
For any $R\in C^{S.B}_{\epsilon,PU}$ there exists a sequence of codes with projection decoder~\eqref{eq:pu1}\eqref{eq:pu2} with asymptotic rate $R$ such that the per-user probability of error is asymptotically smaller than $\epsilon$  

\begin{proof}
We generate iid (complex) Gaussian codebooks $\cn(0,P'I_n)$ with $P'<P$ and show that for $R\in C^{S.B}_{\epsilon,PU}$, \eqref{eq:pu2} gives small $P_{e,u}$. See appendix~\ref{app:1} for details.
\end{proof}

\end{theorem}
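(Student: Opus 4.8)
The plan is to prove the bound by random coding with i.i.d.\ Gaussian codebooks of slightly reduced power and to reduce the analysis, conditionally on the fading realization, to an energy--capture estimate for the projection decoder \eqref{eq:pu2} restricted to the users retained by the first stage \eqref{eq:pu1}.

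\emph{Setup.} For $P'>0$ write $P_e^S(R,P')$ for the right-hand side of \eqref{eq:pu3} with $P$ replaced by $P'$, so $P_e^S(R,P)$ is the original bound. Since $P\mapsto\log\!\left(1+\tfrac{Pa}{1+Pb}\right)$ is nondecreasing for $a,b\ge 0$, $P_e^S(R,P')$ is nonincreasing in $P'$; and for a.e.\ fading vector $(H_1,\dots,H_K)$ each of the finitely many strict inequalities appearing in \eqref{eq:pu3} holds or fails with a nonzero slack, hence is stable under a small perturbation of the power, so by dominated convergence $P_e^S(R,P')\to P_e^S(R,P)$ as $P'\uparrow P$. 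Fix $\delta>0$ and choose $P'<P$ with $P_e^S(R,P')\le\epsilon+\delta$. For each user $i$ draw a codebook $\cd_i$ of $\lceil 2^{nR_i}\rceil$ codewords i.i.d.\ $\cn(0,P'I_n)$, all codewords independent; since $P'<P$, with probability $1-e^{-\Omega(n)}$ all of them satisfy $\norm{X_i}^2\le nP$, and a standard expurgation yields a code obeying \eqref{eq:pow} with $\Ex{P_{e,u}}$ altered by a $1+o(1)$ factor. The decoder is \eqref{eq:pu1}--\eqref{eq:pu2} with $P$ replaced by $P'$ in \eqref{eq:pu1}; note that \eqref{eq:pu2} is invariant to scaling the codewords, so CSIR is used only to form $D$ in \eqref{eq:pu1}.

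\emph{Conditioning on the fading.} Condition on $H=(H_1,\dots,H_K)$, so $D=D(H)$ is a fixed subset of $[K]$. Users outside $D$ are declared $?$ and hence always in error, contributing $|D^c(H)|$ to $\sum_i\Pb{\hat W_i\ne W_i\mid H}$. A user $i\in D$ errs only if the second stage returns a tuple $(\hat c_j)_{j\in D}$ disagreeing with the transmitted $(c^*_j)_{j\in D}$ in coordinate $i$; letting $S\subset D$ be the nonempty set of disagreeing coordinates and unioning over it, $\Pb{\exists i\in D:\hat W_i\ne W_i\mid H}\le\sum_{\emptyset\ne S\subset D}\Pb{\mathcal E_S\mid H}$, where $\mathcal E_S$ is the event that there exist $c_i\in\cd_i\setminus\{c^*_i\}$ $(i\in S)$ with $\norm{P_{\{c^*_j:j\in D\setminus S\}\cup\{c_i:i\in S\}}Y}^2\ge\norm{P_{\{c^*_j:j\in D\}}Y}^2$.

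\emph{The core estimate.} Fix $S$ with $|S|=s$ and project out $V=\operatorname{span}\{c^*_j:j\in D\setminus S\}$, a.s.\ of dimension $d-s$ with $d=|D|$. In $V^\perp$ the residual $\tilde Y\eqdef P_{V^\perp}Y$ has $\norm{\tilde Y}^2$ concentrated at $n(1+P'\sum_{i\in S}|H_i|^2+P'\sum_{j\in D^c}|H_j|^2)$ and $\norm{P^\perp_{\{c^*_j:j\in D\}}Y}^2$ concentrated at $n(1+P'\sum_{j\in D^c}|H_j|^2)$, the codewords of the $D^c$--users acting as fresh Gaussian interference. Conditioned on everything except the challenger codewords, the $P_{V^\perp}c_i$ $(i\in S)$ are i.i.d.\ Gaussian in $V^\perp$ and independent of $\tilde Y$, so $\operatorname{span}\{P_{V^\perp}c_i:i\in S\}$ is a uniformly random $s$--dimensional subspace of the $(n-d+s)$--dimensional space $V^\perp$, and the fraction of $\norm{\tilde Y}^2$ it misses is $\mathrm{Beta}(n-d,s)$--distributed. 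On the concentration event (whose complement has probability $e^{-\Omega(n)}$), $\mathcal E_S$ for a single challenger forces this $\mathrm{Beta}$ variable below $a\eqdef(1+P'\sum_{j\in D^c}|H_j|^2)/(1+P'\sum_{i\in S\cup D^c}|H_i|^2)+o(1)$, of probability at most $\mathrm{poly}(n)\,a^{\,n(1-o(1))}$; a union bound over the $\prod_{i\in S}(|\cd_i|-1)\le 2^{n\sum_{i\in S}R_i}$ challengers, together with $\log(1/a)=\log(1+\tfrac{P'\sum_{i\in S}|H_i|^2}{1+P'\sum_{j\in D^c}|H_j|^2})$, gives
\[
\Pb{\mathcal E_S\mid H}\ \le\ e^{-\Omega(n)}+\mathrm{poly}(n)\,2^{-n(1-o(1))\left[\log\!\left(1+\frac{P'\sum_{i\in S}|H_i|^2}{1+P'\sum_{j\in D^c}|H_j|^2}\right)-\sum_{i\in S}R_i\right]}.
\]
By the definition of $D(H)$ in \eqref{eq:pu1} (with power $P'$), the bracket is strictly positive for every $\emptyset\ne S\subset D$, so each $\Pb{\mathcal E_S\mid H}\to0$ and, summing the $\le 2^{|D|}$ terms, $\Pb{\exists i\in D:\hat W_i\ne W_i\mid H}\to0$ for a.e.\ $H$.

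\emph{Conclusion and the main difficulty.} Hence $\Ex{P_{e,u}\mid H}\le|D^c(H)|/K+r(H)$ with $r(H)\to0$ for a.e.\ $H$ and $0\le r(H)\le1$; since $\Ex{|D^c(H)|/K}=P_e^S(R,P')$, dominated convergence gives $\Ex{P_{e,u}}\le P_e^S(R,P')+o(1)\le\epsilon+\delta+o(1)$ over the random code, so for all large $n$ there is a deterministic code of rates $R$, obeying \eqref{eq:pow}, with $P_{e,u}\le\epsilon+2\delta$. Letting $\delta\downarrow0$ and splicing the resulting codes along growing blocklengths produces one sequence of codes of asymptotic rates $R$ with $\limsup_{n\to\infty}P_{e,u}\le\epsilon$. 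The main obstacle is the core estimate: pinning down the \emph{exact} first-order exponent $\log(1+\mathrm{SINR}_S)$ of the projection (energy--maximizing) decoder on the effective sub-MAC obtained after removing the already-correct users $D\setminus S$ and treating $D^c$ as noise. This needs the lower-tail large deviations of the energy captured by a uniformly random subspace (equivalently, of a $\mathrm{Beta}$ law with a parameter of order $n$), the ``genie'' concentration events fixing $\norm{\tilde Y}^2$ and $\norm{P^\perp_{\{c^*_j:j\in D\}}Y}^2$, and the verification that the per-challenger bound still beats the $2^{n\sum_{i\in S}R_i}$ competitors --- essentially the conditional-on-$H$ counterpart of the analysis behind Theorem~\ref{th:cap_joint}.
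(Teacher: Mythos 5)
Your proposal is correct and follows essentially the same route as the paper's Appendix~\ref{app:1} proof: random i.i.d.\ Gaussian codebooks of reduced power, the two-stage decoder retaining the Shamai--Bettesh set $D(H)$, reduction of the projection-decoder error conditional on $H$ to a $\mathrm{Beta}(n-d,s)$ tail via Claim~\ref{claim:sc_nc1}, high-dimensional concentration of the relevant norms and inner products (the paper's events $E_2,\dots,E_5$), a union bound over challengers using $\log(1/a)\approx\log(1+\mathrm{SINR}_S)$, and dominated convergence. The only cosmetic differences are that the paper lets $P'_n\uparrow P$ while keeping $D$ defined at power $P$ and uses a change of measure to absorb the power-constraint truncation, whereas you fix $P'<P$, define $D$ at power $P'$, argue continuity $P_e^S(R,P')\to P_e^S(R,P)$, expurgate, and send $\delta\downarrow 0$ at the end --- both are standard and equivalent.
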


In the case of symmetric rate, an outer bound on $C_{\epsilon,PU}$ can be given as follows.

\begin{prop}
\label{prop:conv}
If the symmetric rate $R$ is such that $P_{e,u}\leq \epsilon$, then 
\begin{\Ieee}{LLL}
\label{eq:pu_conv}
R\leq  & \min\biggr\{\frac{1}{K(\theta-\epsilon)} \Ex{\log_2\left(1+P\min_{\substack{S\subset[K]\\ |S|=\theta K}}\sum_{i\in S}|H_i|^2\right)},\\
& \log_2\left(1-P\ln(1-\epsilon)\right)\biggr\},\, \forall\theta\in (\epsilon,1]\Ieeen
\end{\Ieee}

\begin{proof}

The first of the two terms in the min in \eqref{eq:pu_conv} follows from Fano's inequality (see \eqref{eq:sc_conv_fan7}, with $\mu=K/n$, $M=2^{nR}$ and taking $n\to\infty$). The second is a single-user based converse using a genie argument. See appendix \ref{app:per_user_converse} for details.
\end{proof}

\begin{remark}
We note here that the second term inside the minimum in \eqref{eq:pu_conv} is the same as the one we would obtain if we used strong converse for the MAC. To be precise, let $\{|H_{(1)}|>|H_{(2)}|>...>|H_{(K)}|\}$ denote the order statistics of the fading coefficients. If $R>\log(1+P|H_{(t)}|^2)$ then, using a Genie that reveals the codewords (and fading gains) of $t-1$ users corresponding to $t-1$ largest fading coefficients, it can be seen that $P_{e,u}\geq \frac{K-t+1}{K}$. Setting $t=\theta K$ and considering the limit as $K\to\infty$ (with $P=\PT/K$) we obtain $S\leq -\PT\log_2(1-\epsilon)$ which is same as that obtained from the second term in \eqref{eq:pu_conv} under these limits.

\end{remark}

\end{prop}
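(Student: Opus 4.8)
The plan is to establish the two terms in the minimum of \eqref{eq:pu_conv} by two independent arguments, both of which are converse (impossibility) bounds under the symmetric-rate PUPE constraint $P_{e,u}\le\epsilon$.

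For the \emph{first term}, I would run a Fano-type argument conditioned on a favourable fading realization. Fix any $\theta\in(\epsilon,1]$. By the PUPE hypothesis, at most an $\epsilon$-fraction of users are in error on average, so there exists a set $D$ of size at least $(\theta-\epsilon)K$ (in expectation / with the right high-probability qualifier) that is decoded correctly with conditional per-user error small relative to $\theta$. Here the key observation is that even with CSIR and even revealing the codewords of all users outside $D$ as a genie, the mutual information $I(W_D;Y\mid H)$ is at most $\log\det(I+P\sum_{i\in D}|H_i|^2 \cdot(\cdot))$-type quantity, and for the worst-case subset $S$ of size $\theta K$ this is bounded by $n\,\Ex{\log_2(1+P\min_{|S|=\theta K}\sum_{i\in S}|H_i|^2)}$ after averaging over $H$ and using concavity/Jensen appropriately. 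Combining Fano's inequality $\,(\theta-\epsilon)K\cdot nR \le I + o(nR)\,$ with $K=\mu n$, $M=2^{nR}$, and letting $n\to\infty$ yields the first bound; the excerpt already points to \eqref{eq:sc_conv_fan7} for the finite-$n$ version, so I would cite that and just take the limit.

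For the \emph{second term}, I would use a single-user genie reduction. Reveal to the receiver the messages and fading coefficients of \emph{all but one} user; what remains is a single-user quasi-static fading channel with input power $\le nP$, for which the finite-blocklength / dispersion-free converse gives that reliable transmission of $nR$ bits with error $\epsilon$ requires $R\le \log_2(1-P\ln(1-\epsilon))$ in the $n\to\infty$ limit (this is the asymptotic form of the meta-converse / $\kappa\beta$ bound for the noCSI AWGN-like channel, equivalently the limiting form of $\mate_{\mathrm{s.u.}}$ in \eqref{eq:E_su}); the per-user error of the original system is at least that of this genie-aided single user, so the bound transfers. The remark following the proposition suggests the authors also want to note that the same second term drops out of a strong-converse argument via order statistics of the $|H_i|^2$, which I would include as an alternative derivation: if $R>\log(1+P|H_{(t)}|^2)$ then revealing the top $t-1$ users forces $P_{e,u}\ge (K-t+1)/K$, and taking $t=\theta K$, $K\to\infty$ with $P=\PT/K$ recovers the same inequality.

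I expect the \textbf{main obstacle} to be the first term: making rigorous the passage from "$P_{e,u}\le\epsilon$" to "there is a decodable subset of the right size whose conditional error is controlled" requires care, because the identity of the good set $D$ depends on the realization of $H$ and on the decoder's behaviour, and one must ensure the worst-case-over-$S$ inside the logarithm is what survives after taking expectations — this is exactly where one needs the set to be chosen as the users with the \emph{largest} fading gains so that $\min_{|S|=\theta K}\sum_{i\in S}|H_i|^2$ is the relevant quantity, and one must handle the $\theta-\epsilon$ normalization and the interchange of limits cleanly. The second term is comparatively routine given the cited single-user finite-blocklength results. I would defer the detailed bookkeeping to the appendix (as the excerpt indicates, \ref{app:per_user_converse}) and present only the reduction skeleton in the main text.
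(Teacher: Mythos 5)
Your overall plan matches the paper's: the first term comes from a genie-aided Fano argument, and the second from a single-user genie reduction. For the second term your reduction is exactly what the paper does — reveal $\hat Z=\sum_{i\ge2}H_iX_i$ (equivalently, the messages and gains of all other users), subtract it, and invoke the single-user $\epsilon$-capacity of the quasi-static fading channel. One note: the paper's classical-regime argument does not invoke any finite-blocklength or $\kappa\beta$ machinery here; it simply uses the outage-capacity formula $C_\epsilon=\sup\{\xi:\Pb{\log_2(1+P|H|^2)\le\xi}\le\epsilon\}$, which for Rayleigh fading evaluates to $\log_2(1-P\ln(1-\epsilon))$. Your mention of $\mate_{\mathrm{s.u.}}$ from \eqref{eq:E_su} (an AWGN quantity) is a red herring.

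For the first term, your framing contains a genuine gap. You argue by locating a subset $D$ of size $(\theta-\epsilon)K$ that is \emph{decoded correctly} and applying Fano to that set. If you actually pursued this, the mutual-information bound would involve $\sum_{i\in D}|H_i|^2$ over subsets of size $(\theta-\epsilon)K$, so the resulting bound would feature $\min_{|S|=(\theta-\epsilon)K}$ inside the logarithm, which is strictly weaker than the stated $\min_{|S|=\theta K}$. The paper's derivation (in the appendix leading to \eqref{eq:sc_conv_fan7}) is cleaner and avoids this: a genie reveals a subset $S_1$ of size $(1-\theta)K$ of codewords and gains, Fano is applied to \emph{all} $\theta K$ remaining users $S_2$ regardless of whether they are decoded correctly, concavity of $h_2$ pools the per-user errors into the global PUPE, and the $(\theta-\epsilon)$ factor appears only at the end from $\theta\log M-\epsilon\log(M-1)-h_2(\epsilon)\approx(\theta-\epsilon)\log M$. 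The bound is then tightened by letting the genie (hence $S_2$) depend on $H_{[K]}$ and taking the infimum over $S_2$, which is what produces $\min_{|S|=\theta K}\sum_{i\in S}|H_i|^2$. Since you ultimately defer to \eqref{eq:sc_conv_fan7} anyway, your written proof reaches the right destination, but the "set of decodable users" heuristic you offer as motivation would not reproduce the inequality if carried out directly; you should instead describe it as "genie on $(1-\theta)K$ users, Fano on the remaining $\theta K$, and the error-rate budget $\epsilon$ is paid in the Fano lower bound."
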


\subsection{Numerical evaluation}
First notice that $C_{\epsilon,J}$ (under joint error) tends to $\{0\}$ as $K\to\infty$ because, it can be seen, for the symmetric rate, by considering that order statistics of the fading coefficients that $P_0(R)\to 1$ for $R_i=O(1/K)$. $C_{\epsilon,PU}$, however, is more interesting. We evaluate trade-off between system spectral efficiency and the minimum energy-per-bit required for a target per-user error for the symmetric rate, in the limit $K\to\infty$ and power scaling as $O(1/K)$. 

\begin{figure}[H]
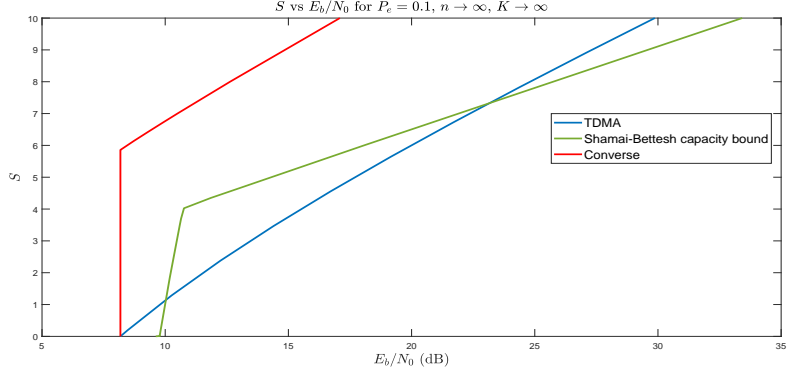

  \begin{center}
{\includegraphics *[height=5.2cm,width=0.7\columnwidth]{{ebn0_speff_shamai_new}}}
     \caption {$S$ vs $E_b/N_0$ for per-user error $\epsilon\leq 0.1$, $n\to\infty$ and then $K\to \infty$}
     \label{fig:0}
      \end{center} 
 \end{figure}

In the above figure we have also presented the performance of TDMA. That is, if we use orthogonalization then for any number of users $K$ (not necessarily large), we have
\begin{equation}
\label{eq:TDMA_classical_pupe}
 \epsilon = \Pb{R > 1/K \log (1+KP |H|^2)}
\end{equation} 
where $\epsilon$ is the PUPE. Thus the sum-rate vs $E_b/N_0$ formula for orthogonalization is

\begin{equation}
\label{eq:TDMA_ebn0}
E_b/N_0= \frac{2^S-1}{S}\frac{1}{-\ln (1-\epsilon)}
\end{equation}
where $S$ is the sum-rate or the spectral efficiency.

 We see that orthogonalization is suboptimal under the PUPE criterion. The reason is that it fails to exploit the multi-user diversity by allocating resources even to users in deep fades.  Indeed, under orthogonalized setting the resources allocated to a user that happens to experience a deep fade become completely wasted, while non-orthogonal schemes essentially adapt to the fading realization: the users in deep fades create very little interference for the problem of decoding strong users. This is the effect stemming from the PUPE criterion for error rate.

\section{Many user MAC: $K=\mu n$, $n\to\infty$}\label{sec:manymac}

This is our main section. We consider the linear scaling regime where the number of users $K$ scales with $n$, and $n\to\infty$. We are interested in the tradeoff of minimum $E_b/N_0$ required for the PUPE to be smaller than $\epsilon$, with the user density $\mu$ ($\mu<1$). So, we fix the message size $k$. Let $S=k\mu$ be the spectral efficiency.

We focus on the case of different codebooks, but under symmetric rate. So if $M$ denotes the size of the codebooks, then $S=\frac{K\log M}{n}=\mu\log M$. Hence, given $S$ and $\mu$, $M$ is fixed. Let $\PT=KP$ denote the total power. Therefore denoting by $\mathcal{E}$ the energy-per-bit, $\mathcal{E}=E_b/N_0=\frac{nP}{\log_2 M}=\frac{\PT}{S}$. For finite $E_b/N_0$, we need finite $\PT$, hence we consider the power $P$ decaying as $O(1/n)$. 

Let $\cd_j=\{c^j_1,...,c^j_M\}$ be the codebook of user $j$, of size $M$. The power constraint is given by $\norm{c^j_i}^2\leq nP=\mathcal{E}\log_2 M,\forall j\in[K],i\in[M]$. The collection of codebooks $\{\cd_j\}$ is called an $\left(n,M,\epsilon,\mathcal{E},K\right)$--code if it satisfies the power constraint described before, and the per-user probability of error is smaller than $\epsilon$. Then, we can define the following fundamental limit for the channel
\begin{\Ieee}{LLL}
\label{eq:sc_conv_1}
\mathcal{E}^*(M,\mu,\epsilon)=\lim_{n\to\infty} \inf\left\{\mathcal{E}:\exists (n,M,\epsilon,\mathcal{E},K=\mu n)-code\right\}.
\end{\Ieee}

We make an important remark here that all the following results also hold for maximal per-user error (PUPE-max) \eqref{eq:pupe_max} as discussed in appendix \ref{app:pupe-max}.

\subsection{No-CSI}

In this subsection, we focus on the no-CSI case. The difficulty here is that, a priori, we do not know which subset of the users to decode. We have the following theorem.

\begin{theorem}
\label{th:scaling_noCSI}
Consider the channel \eqref{eq:sys1} (no-CSI) with $K=\mu n$ where $\mu< 1$. Fix the spectral efficiency $S$ and target probability of error (per-user) $\epsilon$. Let $M=2^{S/\mu}$ denote the size of the codebooks and $\PT=KP$ be the total power. Fix $\nu \in (1-\epsilon,1]$. Let $\epsilon'=\epsilon-(1-\nu)$. Then if $\mathcal{E}>\mathcal{E}^*_{no-CSI}=\sup_{\frac{\epsilon'}{\nu}< \theta\leq 1}\sup_{\xi\in[0,\nu(1-\theta)]}\frac{\PTX(\theta,\xi)}{S}$, there exists a sequence of $\left(n,M,\epsilon_n,\mathcal{E},K=\mu n\right)$ codes such that $\limsup_{n\to\infty} \epsilon_n\leq \epsilon$, where, for $\frac{\epsilon'}{\nu}< \theta\leq 1$ and $\xi\in[0,\nu(1-\theta)]$, 
{\allowdisplaybreaks
\begin{IEEEeqnarray*}{RLL}
\PTX(\theta,\xi)&=&\frac{\hat{f}(\theta,\xi)}{1-\hat{f}(\theta,\xi)\alpha\left(\xi+\nu\theta,\xi+1-\nu(1-\theta)\right)}\label{eq:sc_nc1a}\Ieeen\\
\hat{f}(\theta,\xi)&=&\frac{f(\theta)}{\alpha(\xi,\xi+\nu\theta)}\label{eq:sc_nc1b}\Ieeen\\
f(\theta)&=& \frac{\frac{1+\delta_1^*(1-V_\theta)}{V_\theta}-1}{1-\delta_2^*}\label{eq:sc_nc1c}\Ieeen\\
V_{\theta}&=&e^{-\tilde{V}_{\theta}}\label{eq:sc_nc1d}\Ieeen\\
\tilde{V}_{\theta} &=&\delta^*+\frac{\theta\mu\nu\ln M}{1-\mu\nu}+\\
&&\frac{1-\mu\nu(1-\theta)}{1-\mu\nu}h\left(\frac{\theta\mu\nu}{1-\mu\nu(1-\theta)}\right) + \\
&&\frac{\mu(1-\nu(1-\theta))}{1-\mu\nu}h\left(\frac{\theta\nu}{1-\nu(1-\theta)}\right)\label{eq:sc_nc1e}\Ieeen\\
\delta^* &=&\frac{\mu h(1-\nu(1-\theta))}{1-\mu\nu}\label{eq:sc_nc1f}\Ieeen\\
c_\theta &=&\frac{2V_{\theta}}{1-V_{\theta}}\label{eq:sc_nc1g}\Ieeen\\
q_\theta &=&\frac{\mu h(1-\nu(1-\theta))}{1-\mu\nu(1-\theta)}\label{eq:sc_nc1h}\Ieeen\\
\delta_1^*&=& q_\theta(1+c_\theta)+\\
&& \sqrt{q_\theta^2 (c_\theta^2+2c_\theta)+2q_\theta(1+c_\theta)}\label{eq:sc_nc1i}\Ieeen\\
\delta_2^* &=& \inf \biggr\{x: 0<x<1, \\
&& -\ln(1-x)-x>\frac{\mu h(1-\nu(1-\theta))}{1-\mu\nu(1-\theta)}\biggr\}\label{eq:sc_nc1j}\Ieeen\\
\alpha(a,b)&=& a\ln(a)-b\ln(b)+b-a. \label{eq:sc_nc1k}\Ieeen
\end{IEEEeqnarray*}

Hence $\mathcal{E}^*\leq \mathcal{E}^*_{no-CSI}$.
}
\begin{proof}[Proof Idea]
Before we present the full proof, the main ideas are presented here. Also, over the course, we explain the quantities that are present in the statement of the theorem. We start with choosing independent random Gaussian codebooks for all users. That is, for each message of each user there is an independent complex Gaussian $\cn(0,P'I_n)$ codeword where $P'<P$. The choice $P'<P$ is to ensure we can control the maximum power constraint violation events. 

For simplicity we will consider $\nu=1$. Here $\nu$ represents the fraction of users that the decoder can choose to decode. Due to random coding, we can assume that a particular tuple of codewords $(c_1,c_2,\cdots,c_K)$ were transmitted i.e., the received vector at the decoder is $Y=\sum_{i=1}^K H_i c_i +Z$. Then the decoder performs subspace projection decoding. The idea is that in the absence of noise, the received vector lies in the subspace spanned by the $K$ codewords. Since we assume $\mu=K/n<1$, and the $K$ codewords are linearly independent, we can uniquely decode them by projecting the received vector onto various $K$ dimensional subspaces formed by taking a codeword from each of the codebooks. Formally,

\begin{equation}
\label{eq:rev1}
\left\{\hat{c}_i:i\in[K]\right\}=\arg\max_{\left(c_i\in\cd_i:i\in[K]\right)}\norm{P_{\{c_i:i\in S\}}Y}^2
\end{equation}

Notice that the PUPE is given by 
$$P_e=\frac{1}{K}\sum_{i\in[K]}\Pb{c_i\neq \hat{c}_i}.$$
 We will bound this error with the probability of events $F_t$-- event that exactly $t$ users were misdecoded. That is 
\begin{equation}
\label{eq:rev2}
P_e\leq \epsilon +\Prb{\bigcup_{t>\epsilon K}F_t}
\end{equation}
Hence it is enough to find conditions under which the second term (call it $p_1$) in the above display goes to $0$ in our scaling. To analyze $F_t$, we consider subsets $S\subset[K]$ with $|S|=t$ and a choice of {\it incorrect} codewords $(c'_i\in\cd_i:i\in S)$ where $c'_i\neq c_i$, and bound $F_t$ as union (over $S$ and $(c'_i:i\in [S])$) of events $\left\{\norm{\Pd{S}{[K]\setminus S}Y}^2>\norm{\Pc{[K]}Y}^2\right\}$. With abuse of notation, denote this set as $F(S,t)$. 

Let $c_{[S]}=\{c_i:i\in S\}$, similarly we have $H_{[S]}$. We make a crucial observation that, conditioned on $c_{[K]}$, $H_{[K]}$ and $Z$, the random variable $\norm{\Pd{S}{[K]\setminus S}Y}^2$ can be written as $\norm{\Pc{K\setminus S}Y}^2 +\norm{\Pcp{K\setminus S}Y}^2\mathrm{Beta}(t,n-K)$ where $\mathrm{Beta}(a,b)$ is a beta distributed random variable with parameters $a$ and $b$.

Let $G_S=\frac{\norm{\Pcp{[K]}Y}^2}{\norm{\Pcp{K\setminus S}Y}^2}$. Then we show that 
\begin{\Ieee}{LLL}
\label{eq:rev3}
&&\Pb{F(S,t)|c_{[K]},H_{[K]},Z}\\
&=&\Pb{\mathrm{Beta}(n-K,t)<G_S | c_{[K]},H_{[K]}, Z}\\
&\leq & \binom{n-K+t-1}{t-1}\left(G_S\right)^{n-K}\Ieeen
\end{\Ieee}

Next, we use the idea of random coding union (RCU) bound \cite{polyanskiy2010channel_ieee} to get
\begin{equation}
\label{eq:rev4}
\Prb{\bigcup_t F_t}\leq \Exa{\min\biggr\{1,\sum_{t,S}\Pb{F(S,t)|c_{[K]},H_{[K]},Z}\biggr\}}
\end{equation}

Let $\theta=t/K$, which is the fraction of misdecoded users. Now, by thresholding the value of $G_S$ (this threshold is parameterized by a $\delta>0$) we get from \eqref{eq:rev4} a sum of an exponentially decaying term with combinatorial factors and the probability that $G_S$ violates this threshold for some $S$ and $t$ (call this probability $p_2$). Choosing the right threshold ($\delta^*$ and corresponding threshold value $V_{\theta}$ in the theorem) the first term vanishes (in the limit) and we are left with $p_2$.

This is analyzed by conditioning on $c_{[K]}$ and $H_{[K]}$ along with using concentration of non-central chi-squared distributed variables (see claim \ref{claim:sc_nc3}). We follow similar procedure to above (using RCU and thresholding) multiple times to obtain thresholds parameterized by $\delta_1^*$ and $\delta_2^*$ to vanish combinatorial factors (like $q_{\theta}$ in the theorem which is the exponent of a binomial coefficient) and finally we are left with the {\it bottleneck} term:
\begin{\Ieee}{LLL}
\label{eq:rev5}
\limsup_n P_e\leq \epsilon + \\
\limsup_n \Prb{\bigcup_{t,S}\biggr\{P'\sum_{i\in[S]}|H_i|^2<	g\left(\delta_1^*,\delta_2^*,\delta_3^*,M,\mu,\theta\right)\biggr\}}\\
\Ieeen
\end{\Ieee}
where $g$ is some specific function. In essence, this bottleneck term is precisely the event that $>\epsilon$ fraction of users are outside the Gaussian capacity region!

Next step is to replace $\cup_S$ with $\min_S$ and use the convergence of order statistics of fading coefficients i.e.,  $|H_{(1)}|>\cdots>|H_{(K)}|$:
\begin{\Ieee}{LLL}
\label{eq:rev6}
\limsup_n P_e &\leq &  \epsilon + \limsup_n \Prb{\bigcup_{t}\biggr\{P'\sum_{i=K-t+1}^{K}|H_{(i)}|^2<\\
&&	g\left(\delta_1^*,\delta_2^*,\delta_3^*,M,\mu,t/K\right)\biggr\}}\\
 \Ieeen
\end{\Ieee}

Then we show that, for $t=\theta K$,  $\frac{1}{K}\sum_{i=K-t+1}^{K}|H_{(i)}|^2\to \int_{1-\theta}^1 F^{-1}_{|H|^2}(1-\gamma)\mathop{d\gamma}\equiv \alpha(1-\theta,1)$ in probabilty as $n\to\infty$. Hence the bottleneck term becomes deterministic in the limit. The choice $\PT$ such that this terms vanishes is precisely the one given in the statement of the theorem. 
\end{proof}


\begin{proof}

The proof uses random coding. Let each user generate a Gaussian codebook of size $M$ and power $P'<P$ independently such that $KP'=\PT'<\PT$. Let $W_j$ denote the random (in $[M]$) message of user $j$. So, if $\cd_j=\{c^j_i:i\in[M]\}$ is the codebook of user $j$, he/she transmits $X_j=c^j_{W_j}1\left\{\norm{c^j_{W_j}}^2\leq nP\right\}$. For simplicity let $(c_1,c_2,...,c_K)$ be the sent codewords. Hence the received vector is $Y=\sum_{i\in[K]}H_i c_i +Z$ where $Z$ is the noise vector. Fix $\nu \in (1-\epsilon,1]$. Let $K_1=\nu K$ be the number of users that are decoded. Since there is no knowledge of CSIR, it is not possible to, a priori, decide what set to decode. Instead, the decoder searches of all $K_1$ sized subsets of $[M]$. Formally, let $?$ denote an error symbol. The decoder output $g_{D}(Y)\in\prod_{i=1}^{k} \mathcal{C}_i$ is given by
\begin{IEEEeqnarray*}{C}
\label{eq:sc_nc2}
\left[\hat{S},(\hat{c}_i)_{i\in \hat{S}}\right]=\arg\max_{\substack{S\subset [K]\\|S|=K_1}}\max_{(c_i\in\cd_i)_{i\in S}}\norm{P_{\{c_i:i\in S\}}Y}^2\\
(g_D (Y))_i= \begin{cases} 
      f_i^{-1}(\hat{c}_i) & i \in \hat{S}\\
      ? & i\notin \hat{S} 
   \end{cases}\Ieeen\\
\end{IEEEeqnarray*}
where $f_i$ are the encoding functions. The probability of error (averaged over random codebooks) is given by
\begin{IEEEeqnarray}{LLL}
\label{eq:sc_nc3}
P_e=\frac{1}{K} \sum_{j=1}^{K}\Pb{W_j\neq \hat{W}_j}
\end{IEEEeqnarray}
where $\hat{W}_j=(g(Y))_j$ is the decoded message of user $j$.

 We perform a change of measure to $X_j=c^j_{W_j}$. Since $P_e$ is the expectation of a non-negative random variable bounded by $1$, this measure change adds a total variation distance which can bounded by $p_0=K\Pb{\frac{\chi_2(2n)}{2n}>\frac{P}{P'}}\to 0$ as $n\to \infty$, where $\chi_2(d)$ is the distribution of sum of squares of $d$ iid standard normal random variables (the chi-square distribution). The reason is as follows. If we have two random vectors $U_1$ and $U_2$ on a the same probability space such that $U_1=U_2 1[U_2\in E]$, where $E$ is a Borel set, then for any Borel set A, we have 
\begin{\Ieee}{LLL}
\label{eq:sc_nc3a}
&&|\Pb{U_1\in A}-\Pb{U_2\in A}|\\
&=&|1[0\in A]\Pb{U_2\in E^c}-\Pb{U_2\in A\cap E^c}|\\
&\leq & \Pb{U_2\in E^c}.\Ieeen
\end{\Ieee}

 Henceforth we only consider the new measure.  
 
Let $\epsilon>1-\nu$ and $\epsilon'=\epsilon-(1-\nu)$. Now we have
\begin{\Ieee}{LLL}
\label{eq:sc_nc4}
P_e\leq \epsilon +\Prb{\frac{1}{K} \sum_{j=1}^{K}1[W_j\neq \hat{W}_j] > \epsilon}\\
=\epsilon +\Prb{ \sum_{j=1}^{K}1[W_j\neq \hat{W}_j] > K\epsilon'+K-K_1}\\
=\epsilon +p_1.\Ieeen
\end{\Ieee}

where 
$$p_1=\Prb{\bigcup_{t=\epsilon'K}^{\nu K}\biggr\{\sum_{j=1}^{K}1[W_j\neq \hat{W}_j] =K-K_1+t\biggr\}}.$$

Let $F_t=\left\{\sum_{j=1}^{K}1[W_j\neq \hat{W}_j] =K-K_1+t\right\}$. Let $c_{[S]}\equiv \{c_i:i\in S\}$ and $H_{[S]}\equiv \{H_i:i\in S\}$, where $S\subset [K]$. Conditioning on $c_{[K]}, H_{[K]}$ and $Z$, we have

\begin{\Ieee}{LLL}
\label{eq:sc_nc5}
&&\Pb{F_t|c_{[K]},H_{[K]},Z}\\
&\leq &  \mathbb{P}\biggr[\exists S\subset [K]: |S|=K-K_1+t, \exists S_1\subset S: |S_1|=t, \\
&& \exists\{c'_i\in \cd_i:i\in S_1, c'_i\neq c_i\} :  \norm{\Pd{S_1}{[K]\setminus S}Y}^2> \\
&&\max_{\substack{S_2\subset S\\|S_2|=t}}\norm{\Pdd{S_2}{[K]\setminus S}Y}^2  \biggr | c_{[K]},H_{[K]},Z \biggr]\\
 &\leq & \Prb{\bigcup_{\substack{S\subset [K]\\ |S|=K-K_1+t}}\bigcup_{\substack{S_1\subset S\\|S_1|=t}}\bigcup_{\substack{\{c'_i\in\cd_i:\\ i\in S_1, c'_i\neq c_i\}}} \\
&&  F(S,S_2^*,S_1,t) \biggr |c_{[K]},H_{[K]},Z } \Ieeen
\end{\Ieee}
where 
\begin{\Ieee}{LLL}
F(S,S_2^*,S_1,t)=\\
\biggr\{  \norm{\Pd{S_1}{[K]\setminus S}Y}^2>\norm{\Pdd{S_2^*}{[K]\setminus S}Y}^2\biggr\}
\end{\Ieee}
 and $S_2^*\subset S$ is a possibly random (depending only on $H_{[K]}$) subset of size $t$, to be chosen later. Next we will bound $\Pb{F(S,S_2^*,S_1,t)|c_{[K]},H_{[K]},Z}$.\\

For the sake of brevity, let $A_0=c_{[S_2^*]}\cup c_{[[K]\setminus S]}$, $A_1=c_{[[K]\setminus S]}$ and $B_1=c'_{[S_1]}$. We have the following claim.

\begin{claim}
\label{claim:sc_nc1}
For any $S_1\subset S$ with $|S_1|=t$, conditioned on $c_{[K]}$, $H_{[K]}$ and $Z$, the law of $\norm{\Pd{S_1}{[K]\setminus S}Y}^2$ is same as the law of $\norm{P_{A_1}Y}^2+\norm{(I-P_{A_1})Y}^2 \emph{Beta}(t,n-K_1)$ where $\emph{Beta}(a,b)$ is a beta distributed random variable with parameters $a$ and $b$.
\begin{proof}
 Let us write $V=span\{A_1,B_1\}=A\oplus B$ where $A\perp B$ are subspaces of dimension $K_1-t$ and $t$ respectively, with $A=span(A_1)$ and $B$ is the orthogonal complement of $A_1$ in $V$. Hence $\norm{P_V Y}^2=\norm{P_A Y}^2+\norm{P_B Y}^2$ (by definition, $P_A=P_{A_1}$). Now we analyze $\norm{P_B Y}^2$. We can further write $P_B Y=P_B \PP_A Y$. Observe that the subspace $B$ is the span of $\PP_A B_1$, and, conditionally, $\PP_A B_1\distas{} \cn^{\otimes |S|}(0,P'\PP_A)$ which is the product measure of $|S|$ complex normal vectors in a subspace of dimension $n-K_1+t$. Hence, the conditional law of $\norm{P_B \PP_A Y}^2$ is the law of squared length of projection of a fixed $n-K_1+t$ dimensional vector of length $\norm{(I-P_A)Y}^2$ onto a (uniformly) random $t$ dimensional subspace.

Further, the law of the squared length  of the orthogonal projection of a fixed unit vector in $\mathbb{C}^d$ onto a random $t$--dimensional subspace is same as the law of the squared length  of the orthogonal projection of a random unit vector in $\mathbb{C}^d$ onto a fixed $t$--dimensional subspace, which is $\mathrm{Beta}(t,d-t)$ (see for e.g. \citep[Eq. 79]{yang2013quasi}):  that is, if $u$ is a unit random vector in $\mathbb{C}^d$ and $L$ is a fixed $t$ dimensional subspace, then $\Pb{\norm{P_L u}^2\leq x}=\Pb{\frac{\sum_{i=1}^t |Z_i|^2}{\sum_{i=1}^d |Z_i|^2}\leq x}= F_{\beta}(x;t,n-K_1)$ where $Z_i\distas{iid}\cn(0,1)$ and $F_\beta (x;a,b)=\frac{\Gamma(a+b)}{\Gamma(a)\Gamma(b)}\int_{0}^{x}w^{a-1}(1-w)^{b-1}dw$ denotes the CDF of the beta distribution with parameters $a$ and $b$. Hence the conditional law of $\norm{P_B \PP_A Y}^2$ is  $\norm{(I-P_A)Y}^2 \mathrm{Beta}(t,n-K_1)$.

\end{proof}
\end{claim}

Therefore we have,
\begin{\Ieee}{LLL}
\label{eq:sc_nc6}
\Pb{F(S,S_2^*,S_1,t)|c_{[K]},H_{[K]},Z} \\
= \Pb{Beta(n-K_1,t)<G_S|c_{[K]},H_{[K]},Z }\\
=F_{\beta}\left(G_S;n-K_1,t\right)\Ieeen
\end{\Ieee}
where 
\begin{\Ieee}{LLL}
\label{eq:sc_nc7}
G_S=\frac{\norm{Y}^2-\norm{P_{A_0}Y}^2}{\norm{Y}^2-\norm{P_{A_1}Y}^2}.\Ieeen
\end{\Ieee}
Since $t\geq 1$, we have 
$$F_\beta \left(G_S;n-K_1,t\right)\leq \binom{n-K_1+t-1}{t-1}G_S^{n-K_1}.$$

Let us denote $\bigcup_{t=\epsilon'K}^{\nu K}$ as $\bigcup_t$, $\bigcup_{\substack{S\subset[K]\\|S|=K-K_1+t}}$ as $\bigcup_{S,K_1}$, and $\bigcup_{t}\bigcup_{\substack{S\subset[K]\\|S|=K-K_1+t}}$ as $\bigcup_{t,S,K_1}$; similarly for $\sum$ and $\bigcap$ for the ease of notation. Using the above claim, we get,

\begin{\Ieee}{LLL}
\label{eq:sc_nc8}
\Pb{F_t|c_{[K]},H_{[K]},Z}\leq \\
\sum_{S,K_1}\binom{K-K_1+t}{t}M^t\binom{n-K_1+t-1}{t-1}G_S^{n-K_1}.\Ieeen
\end{\Ieee}

Therefore $p_1$ can be bounded as
{\allowdisplaybreaks
\begin{\Ieee}{LLL}
\label{eq:sc_nc9}
p_1 & = &\Prb{\bigcup_t F_t} \\
&\leq & \Exa{\min\biggr\{1,\sum_{t,S,K_1}\binom{K-K_1+t}{t}M^t\cdot\\
&&\binom{n-K_1+t-1}{t-1}G_S^{n-K_1} \biggr\}}\\
&=&  \Exa{\min\biggr\{1,\sum_{t,S,K_1} e^{(n-K_1)s_t} M^t G_S^{n-K_1} \biggr\}}\Ieeen
\end{\Ieee}
}
where $s_t=\frac{\ln\left(\binom{K-K_1+t}{t}\binom{n-K_1+t-1}{t-1}\right)}{n-K_1}$.

Now we can bound the binomial coefficient~\citep[Ex. 5.8]{Gallager:1968:ITR:578869} as

\begin{IEEEeqnarray*}{LLL}
\label{eq:sc_nc10}
&&\binom{n-K_1+t-1}{t-1} \\
&\leq & \sqrt{\frac{n-K_1+t-1}{2\pi (t-1)(n-K_1)}}e^{(n-K_1+t-1)h(\frac{t-1}{n-K_1+t-1})}\\
& =& O\left(\frac{1}{\sqrt{n}}\right)e^{n(1-\mu\nu(1-\theta))h(\frac{\theta\mu\nu}{1-\mu\nu(1-\theta)})}.\IEEEyesnumber
\end{IEEEeqnarray*}

Similarly, 
\begin{\Ieee}{LLL}
\label{eq:sc_nc11}
&&\binom{K-K_1+t}{t}\\
&\leq & O\left(\frac{1}{\sqrt{n}}\right)e^{n\mu(1-\nu(1-\theta))h\left(\frac{\theta \nu}{1-\nu(1-\theta)}\right) }\Ieeen
\end{\Ieee}

Let $r_t=s_t+\frac{t\ln M}{n-K_1}$. For $\delta>0$, define $\tilde{V}_{n,t}=r_t+\delta$ and $V_{n,t}=e^{-\tilde{V}_{n,t}}$. Let $E_1$ be the event
\begin{\Ieee}{LLL}
\label{eq:sc_nc12}
E_1 &=& \bigcap_{t,S,K_1}\left\{-\ln G_S - r_t>\delta\right\}\\
&=& \bigcap_{t,S,K_1}\left\{G_S<V_{n,t}\right\}.\Ieeen
\end{\Ieee}

Let $p_2=\Pb{\bigcup_{t,S,K_1}\left\{G_S>V_{n,t}\right\}}$. Then

{\allowdisplaybreaks
\begin{\Ieee}{LLL}
\label{eq:sc_nc13}
&& p_1 \\
&\leq & \Exa{\min\biggr\{1,\sum_{t,S,K_1}e^{(n-K_1)r_t}G_S^{n-K_1}\biggr\}\left(1[E_1]+1[E_1^c]\right)}\\
&\leq &  \Ex{\sum_{t,S,K_1} e^{-(n-K_1)\delta}}+p_2\\
&=& \sum_{t} \binom{K}{K-K_1+t} e^{-(n-K_1)\delta}+p_2.\IEEEyesnumber
\end{\Ieee}
}

Observe that, for $t=\theta K_1=\theta\nu K$, 
\begin{\Ieee}{LLL}
s_t &=&\frac{1-\mu\nu(1-\theta)}{1-\mu\nu}h\left(\frac{\theta\mu\nu}{1-\mu\nu(1-\theta)}\right)+\\
&&\frac{\mu(1-\nu(1-\theta))}{1-\mu\nu}h\left(\frac{\theta \nu}{1-\nu(1-\theta)}\right)-O\left(\frac{\ln(n)}{n}\right)
\end{\Ieee}
 and $r_t=s_t+\frac{\theta\mu\nu}{1-\mu\nu}\ln M$. Therefore $n\to\infty$ with $\theta$ fixed, we have

\begin{IEEEeqnarray*}{LLL}
\label{eq:sc_nc14}
 \lim_{n\to\infty} \tilde{V}_{n,\theta\nu\mu}=\tilde{V}_{\theta}\IEEEyesnumber
\end{IEEEeqnarray*}
where $\tilde{V}_{\theta}$ is given in \eqref{eq:sc_nc1e}.

Now, note that, for $1<t<K_1$, 
\begin{\Ieee}{LLL}
\label{eq:sc_nc15}
&&\binom{K}{K-K_1+t} \\
&\leq & \sqrt{\frac{K}{2\pi (K-K_1+t) (K_1-t)}} e^{K h(\frac{K-K_1+t}{K})}.\Ieeen
\end{\Ieee}
Hence choosing $\delta>\frac{K h(\frac{K-K_1+t}{K})}{n-K_1} $ will ensure that the first term in \eqref{eq:sc_nc13} goes to $0$ as $n\to\infty$. So for $t=\theta K_1=\theta\nu K$, we need to have

\begin{equation}
\label{eq:sc_nc16}
\delta >  \delta^*.
\end{equation}
where $\delta^*$ is given in \eqref{eq:sc_nc1f}.

Let us bound $p_2$. Let $\hat{Z}=Z+\sum_{i\in S\setminus S^*_2}H_i c_i$. We have

\begin{claim}
\label{claim:sc_nc2}
\begin{IEEEeqnarray*}{LLL}
\label{eq:sc_nc17}
 p_2 &=&\Prb{\bigcup_{t,S,K_1}\left\{G_S>V_{n,t}\right\}}\\
& \leq &  \Prb{\bigcup_{t,S,K_1}\biggr\{\norma{(1-V_{n,t})\PPA \hat{Z}-V_{n,t}\PPA \sum_{i\in S_2^*}H_i c_i}^2  \\
&&\geq V_{n,t}\norma{\PPA \sum_{i\in S^*_2}H_i c_i}^2\biggr\}}.\IEEEyesnumber
\end{IEEEeqnarray*}

\begin{proof}
See appendix \ref{app:2}.
 
\end{proof}
\end{claim}

Let $\chi'_2(\lambda,d)$ denote the non-central chi-squared distributed random variable with non-centrality $\lambda$ and degrees of freedom $d$. That is, if $W_i\distas{}\mathcal{N}(\mu_i,1),i\in[d]$ and $\lambda=\sum_{i\in[d]}\mu_i^2$, then $\chi'_2(\lambda,d)$ has the same distribution as that of $\sum_{i\in[d]}W_i^2$. We have the following claim.

\begin{claim}
\label{claim:sc_nc3}
Conditional on $H_{[K]}$ and $A_0$,
 \begin{IEEEeqnarray*}{LLL}
\label{eq:sc_nc18a}
&&\norma{\PPA\biggr(\hat{Z}-\frac{V_{n,t}}{1-V_{n,t}}\hcc\biggr)}^2 \\
&\distas{}&\phsa\frac{1}{2}\chi'_2\left(2F,2 n'\right)\IEEEyesnumber
\end{IEEEeqnarray*}
where 
\begin{IEEEeqnarray}{LLL}
F=\frac{\norm{\frac{V_{n,t}}{1-V_{n,t}}\PPA\hcc}^2}{\phs}\label{eq:sc_nc18ba}\\
n'=n-K_1+t\label{eq:sc_nc18bb}.
\end{IEEEeqnarray}

Hence its conditional expectation is
\begin{equation}
\label{eq:eq:sc_nc18c}
\mu=n'+F.
\end{equation}
\begin{proof}
See appendix \ref{app:2}.
\end{proof}
\end{claim}

Now let
\begin{IEEEeqnarray}{LLL}
T=\frac{1}{2}\chi'_2(2F,2n')-\mu\label{eq:sc_nc19}\\
U=\frac{V_{n,t}}{(1-V_{n,t})}\frac{\norm{\PPA\hcc}^2}{\phs}-n' \label{eq:sc_nc20}\\
U^1=\frac{1}{1-V_{n,t}}\left(V_{n,t}W_S-1\right)\label{eq:sc_nc21}\IEEEeqnarraynumspace
\end{IEEEeqnarray}
where 
$$W_S= \biggr(1+\frac{\norm{\PPA\hcc}^2}{n'\phs}\biggr).$$
 Notice that $U=n'U^1$ and $F=\frac{V_{n,t}}{1-V_{n,t}}n'(1+U^1)$.

Then we have

\begin{\Ieee}{LLL}
\label{eq:sc_nc22}
&&\mathrm{RHS\ of\ \eqref{eq:sc_nc17}}\\
&=& \Prb{\bigcup_{t,S,K_1}\biggr\{\norma{\PPA \hat{Z}-\frac{V_{n,t}}{(1-V_{n,t})}\PPA \hcc}^2 \\
&& -\mu \geq U\biggr\}} \\
&=&\Prb{ \bigcup_{t,S,K_1}\left\{T\geq U\right\} }. \Ieeen
\end{\Ieee}

Now, let $\delta_1>0$, and $E_2=\cap_{t,S,K_1}\left\{U^1>\delta_1\right\}$. Taking expectations over $E_1$ and its complement, we have
\begin{\Ieee}{LLL}
\label{eq:sc_nc23}
&&\Prb{ \bigcup_{t,S,K_1}\left\{T\geq U\right\} } \\
&\leq &\sum_{t,S,K_1} \Pb{T>U, U^1>\delta_1}+\Pb{E_2^c}\\
&= &\sum_{t,S,K_1}\Ex{\Pb{\left.T>U\right| H_{[K]},A_0}1[U^1>\delta_1]}\\
&&+\Pb{E_2^c}\Ieeen
\end{\Ieee}
which follows from the fact that $\{U^1> \delta_1\}\in \sigma(H_{[K]},A_0)$. To bound this term, we use the following concentration result from \citep[Lemma 8.1]{birge2001alternative}.

\begin{lemma}[\citep{birge2001alternative}]
\label{lem:chi2}
Let $\chi=\chi_2'(\lambda,d)$ be a non-central chi-squared distributed variable with $d$ degrees of freedom and non-centrality parameter $\lambda$. Then $\forall x>0$
\begin{equation}
\label{eq:sc_nc24}
\begin{split}
& \Pb{\chi-(d+\lambda)\geq 2\sqrt{(d+2\lambda)x}+2x}\leq e^{-x}\\
& \Pb{\chi-(d+\lambda)\leq  -2\sqrt{(d+2\lambda)x}} \leq e^{-x}
\end{split}
\end{equation}
\end{lemma}

Hence, for $x>0$, we have
\begin{equation}
\label{eq:sc_nc25a}
\Pb{\chi-(d+\lambda)\geq x}\leq e^{-\frac{1}{2}\left(x+d+2\lambda-\sqrt{d+2\lambda}\sqrt{2x+d+2\lambda}\right)}.
\end{equation}
and for $x<(d+\lambda)$, we have
\begin{equation}
\label{eq:sc_nc25b}
\Pb{\chi\leq x}\leq e^{-\frac{1}{4}\frac{(d+\lambda-x)^2}{d+2\lambda}}.
\end{equation}

Observe that, in \eqref{eq:sc_nc25a}, the exponent is always negative for $x>0$ and finite $\lambda$ due to AM-GM inequality. When $\lambda=0$, we can get a better bound for the lower tail in \eqref{eq:sc_nc25b} by using \citep[Lemma 25]{reeves2012sampling}.
\begin{lemma}[\citep{reeves2012sampling}]
Let $\chi=\chi_2(d)$ be a chi-squared distributed variable with $d$ degrees of freedom. Then $\forall x>1$
\begin{\Ieee}{LLL}
\label{eq:sc_nc25c}
\Pb{\chi\leq \frac{d}{x}}\leq e^{-\frac{d}{2}\left(\ln x +\frac{1}{x}-1\right)}\Ieeen
\end{\Ieee}
\end{lemma}

Therefore, from \eqref{eq:sc_nc17}, \eqref{eq:sc_nc22}, \eqref{eq:sc_nc23} and \eqref{eq:sc_nc25a},  we have

\begin{IEEEeqnarray*}{LLL}
\label{eq:sc_nc26}
p_2 &\leq & \sum_{t,S,K_1}\Ex{e^{-n'f_n(U^1)}1[U^1>\delta_1]}+\\
&&\Prb{\bigcup_{t,S,K_1}\biggr\{U^1\leq \delta_1\biggr\}}\Ieeen
\end{IEEEeqnarray*}
where $f_n$ is given by
\begin{IEEEeqnarray*}{LL}
\label{eq:sc_nc27}
f_n(x)=  x+1+\frac{2V_{n,t}}{1-V_{n,t}}(1+x)\\
 -\sqrt{1+\frac{2V_{n,t}}{1-V_{n,t}}(1+x)}\sqrt{2x+1+\frac{2V_{n,t}}{1-V_{n,t}}(1+x)}.\IEEEyesnumber \IEEEeqnarraynumspace
\end{IEEEeqnarray*}

Next, we have the following claim.
\begin{claim}
\label{claim:sc_nc4}
For $0<V_{n,t}<1$ and $x>0$, $f_n(x)$ is a monotonically increasing function of $x$.
\begin{proof}
See appendix \ref{app:2}.
\end{proof}
\end{claim}

From this claim, we get
\begin{IEEEeqnarray*}{LLL}
\label{eq:sc_nc28}
p_2\leq \sum_{t,S,K_1}e^{-n'f_n(\delta_1)}+p_3\Ieeen
\end{IEEEeqnarray*}
where $p_3=\Pb{E_2^c}$.

Now, if, for each $t$, $\delta_1$ is chosen such that $f_n(\delta_1)>\frac{K h(\frac{K-K_1+t}{K})}{n-K_1+t}$, then the first term in \eqref{eq:sc_csir11} goes to $0$ as $n\to \infty$. Therefore, for $t=\theta K_1$, setting $c_\theta$ and $q_\theta$ as in \eqref{eq:sc_nc1g} and \eqref{eq:sc_nc1h} respectively, and choosing $\delta_1$ such that

\begin{IEEEeqnarray}{LLL}
\label{eq:sc_nc29}
\delta_1 >\delta_1^*
\end{IEEEeqnarray}
with $\delta_1^*$ given by \eqref{eq:sc_nc1i}, will ensure that the first term in \eqref{eq:sc_nc28} goes to 0 as $n\to \infty$.

Note that
 \begin{IEEEeqnarray}{LLL}
\label{eq:sc_nc30}
p_3 &=&\Pb{E_2^c}\\
&=& \Prb{\bigcup_{t,S,K_1}\biggr\{V_{n,t}W_S-1\leq \delta_1(1-V_{n,t})\biggr\}}.
 \end{IEEEeqnarray}
 Conditional on $H_{[K]}$, 
 $$\norma{\PPA\hcc}^2\distas{}\frac{1}{2}P'\sum_{i\in S_2^*}|H_i|^2\chi_2^{S_2^*}(2n')$$
  where $\chi_2(2n')$ is a chi-squared distributed random variable with $2n'$ degrees of freedom (here the superscript $S_2^*$ denotes the fact that this random variable depends on the codewords corresponding to $S_2^*$). For $1>\delta_2>0$, consider the event $E_4=\bigcap_{t,S,K_1} \left\{\frac{\chi_2^{S_2^*}(2n')}{2n'}> 1-\delta_2\right\}$. Using \eqref{eq:sc_nc25c} , we can bound $p_3$ as 
 
 \begin{IEEEeqnarray*}{LLL}
 \label{eq:sc_nc31}
 p_3\leq \sum_t\binom{K}{K-K_1+t} e^{-n'\left(-\ln (1-\delta_2)-\delta_2\right)}+p_4 \IEEEyesnumber
 \end{IEEEeqnarray*}
 where 
 \begin{IEEEeqnarray*}{LLL}
 \label{eq:sc_nc32}
 p_4 &=& \Pb{E_4^c}\\
&=& \Prb{\bigcup_{t,S,K_1}\biggr\{V_{n,t}\biggr(1+\frac{P'\sum_{i\in S_2^*}|H_i|^2(1-\delta_2)}{\phs}\biggr)\\
&& \leq  1+\delta_1(1-V_{n,t})\biggr\}}.\IEEEyesnumber
\end{IEEEeqnarray*}

 Again, it is enough to choose $\delta_2$ such that
 \begin{equation}
 \label{eq:sc_nc33}
 \delta_2>\delta_2^*
 \end{equation}
with $\delta_2^*$ given by \eqref{eq:sc_nc1j}, to make sure that the first term in \eqref{eq:sc_nc31} goes to $0$ as $n\to\infty$.

Note that the union bound over $S$ is the minimum over $S$, and this minimizing $S$ should be contiguous amongst the indices arranged according the decreasing order of fading powers. Further, $S_2^*$ is chosen to be corresponding to the top $t$ fading powers in $S$. Hence, we get

\begin{\Ieee}{LLL}
\label{eq:sc_nc34}
p_4 &=& \Prb{\bigcup_t\biggr\{\min_{0\leq j\leq K_1-t}\biggr(\frac{P'\sum_{i=j+1}^{j+t}|H_{(i)}|^2(1-\delta_2)}{1+P'\sum_{i=j+t+1}^{j+t+K-K_1}|H_{(i)}|^2}\biggr)\\
&& \leq \frac{ 1+\delta_1(1-V_{n,t})}{V_{n,t}}-1 \biggr\}}.\Ieeen
\end{\Ieee}

We make the following claim
\begin{claim}
\label{claim:sc_nc5}

\begin{IEEEeqnarray*}{LLL}
\label{eq:sc_nc35}
\limsup_{n\to\infty} p_4 \leq 1\biggr[\bigcup_{\theta\in (\frac{\epsilon'}{\nu},1]\cap \mathbb{Q}}\\
\biggr\{ \inf_{\xi\in [0,\nu(1-\theta)]} \biggr(\frac{(1-\delta_2)\PT'\alpha(\xi,\xi+\nu\theta)}{1+\PT'\alpha(\xi+\nu\theta,\xi+1-\nu(1-\theta))}\biggr)\\
 \leq \frac{1+\delta_1(1-V_{\theta})}{V_\theta}-1\biggr\}\biggr]\IEEEyesnumber
\end{IEEEeqnarray*}
where $\alpha(a,b)$ is given by \eqref{eq:sc_nc1k}.

\begin{proof}

We have $|H_1|^2,...,|H_{K}|^2$ with CDF $F(x)=(1-e^{-x})1[x>=0]$. Let $\tilde{F}_K (x)=\frac{1}{K}\sum_{i=1}^{K}1[|H_i|^2\leq x]$ be the empirical CDF (ECDF). Then standard Chernoff bound gives, for $0<r<1$,
\begin{\Ieee}{LLL}
\label{eq:sc_nc36}
\Pb{|\tilde{F}_K(x)-F(x)|> r F(x)}\leq 2e^{-KcF(x)r^2}\Ieeen
\end{\Ieee}
where $c$ is some constant. 

From \citep{bahadur1966note}, we have the following representation. Let $0<\gamma<1$. Then
\begin{\Ieee}{LLL}
\label{eq:sc_nc37}
|H_{\left(\lceil n\gamma \rceil\right)}|^2 =\\
 F^{-1}(1-\gamma) - \frac{\tilde{F}_K(F^{-1}(1-\gamma))-(1-\gamma)}{f\left(F^{-1}(1-\gamma)\right)}+R_K\Ieeen
\end{\Ieee}
where $f$ is the pdf corresponding to F, and with probability 1, we have $R_K=O(n^{-3/4}\log(n))$ as $n\to \infty$. 

Let $\tau>0$. Then using \eqref{eq:sc_nc36} and \eqref{eq:sc_nc37}, we have 
\begin{\Ieee}{LLL}
\label{eq:sc_nc38}
\left||H_{\left(\lceil n\gamma \rceil\right)}|^2 -F^{-1}(1-\gamma)\right| \leq O\left(\frac{1}{n^{\frac{1-\tau}{2}}}\right)\Ieeen
\end{\Ieee}
with probability atleast $1-e^{-O(n^\tau)}$.

Hence, for $0<\xi<\zeta<1$, we have, with probability $1-e^{-O(n^\tau)}$, 

\begin{\Ieee}{LLL}
\label{eq:sc_nc39}
\frac{1}{K}\sum_{i=\lceil\alpha K\rceil}^{\lceil\beta K\rceil}|H_{\left(i\right)}|^2= \\
\left[\frac{1}{K}\sum_{i=1}^{K}|H_i|^2 1\left[b\leq |H_i|^2\leq a\right]\right] +o(1)\Ieeen
\end{\Ieee}
where $a=F^{-1}(1-\xi)$ and $b=F^{-1}(1-\zeta)$. Now, by law of large numbers (and Bernstein's inequality \citep{boucheron2013concentration}), with overwhelming probability (exponentially close to 1), we have 

\begin{\Ieee}{LLL}
\label{eq:sc_nc40}
\frac{1}{K}\sum_{i=1}^{K}|H_i|^2 1\left[b\leq |H_i|^2\leq a\right] = \int_{b}^a x dF(x)+ o(1)\Ieeen
\end{\Ieee}

and $\int_{b}^a x dF(x)=\int_{\xi}^{\zeta}F^{-1}(1-\gamma)d\gamma=\alpha (\xi,\zeta)$.

Define the events
\begin{\Ieee}{RLL}
J_{n,\theta,\xi} &=&\biggr\{\biggr(\frac{P'\sum_{i=\lceil\xi K\rceil +1}^{\lceil (\xi+\nu\theta)K\rceil } |H_{(i)}|^2(1-\delta_2)}{1+P'\sum_{i=\lceil (\xi+\nu\theta)K\rceil+1}^{\lceil (\xi+1-\nu(1-\theta))K\rceil}|H_{(i)}|^2}\biggr)\\
&&\leq \frac{ 1+\delta_1(1-V_{n,\lceil\theta\nu K\rceil})}{V_{n,\lceil\theta\nu K\rceil}}-1\biggr\} \label{eq:sc_nc41a}\Ieeen\\
I_{n,\theta,\xi} &= &\biggr\{\biggr(\frac{(1-\delta_2)\PT'\alpha(\xi,\xi+\nu\theta)}{1+\PT'\alpha(\xi+\nu\theta,\xi+1-\nu(1-\theta))}\biggr) \\
&&\leq \frac{1+\delta_1(1-V_{n,\lceil\theta\nu K\rceil})}{V_{n,\lceil\theta\nu K\rceil}}-1\biggr\}\label{eq:sc_nc41b}\Ieeen\\
I_{\theta,\xi} &= &\biggr\{\biggr(\frac{(1-\delta_2)\PT'\alpha(\xi,\xi+\nu\theta)}{1+\PT'\alpha(\xi+\nu\theta,\xi+1-\nu(1-\theta))}\biggr) \\
&&\leq \frac{1+\delta_1(1-V_{\theta})}{V_\theta}-1\biggr\}\label{eq:sc_nc41c}\Ieeen\\
E_{n,\theta,\xi}&= &\biggr\{\biggr| \frac{1}{K}\sum_{i=\lceil\xi K\rceil +1}^{\lceil (\xi+\nu\theta)K\rceil }|H_{(i)}|^2-\alpha\left(\xi,\xi+\nu\theta\right)\leq o(1)\biggr|\biggr\}\\
&& \bigcap\biggr\{\biggr| \frac{1}{K}\sum_{i=\lceil (\xi+\nu\theta)K\rceil+1}^{\lceil (\xi+1-\nu(1-\theta))K\rceil}|H_{(i)}|^2-\\
&&\alpha\left(\xi+\nu\theta,\xi+1-\nu(1-\theta)\right)\leq o(1)\biggr|\biggr\}\label{eq:sc_nc41d}\Ieeen\\
E_n &= &\left(\bigcap_{\theta\in A_n}\bigcap_{\xi \in B_{K,\theta}} E_{n,\theta,\xi} \right)\label{eq:sc_nc41e}\Ieeen
\end{\Ieee}

where $A_n=\left(\frac{\epsilon'}{\nu},1\right]\cap \left\{\frac{i}{K_1}: i\in [K_1]\right\}$ and $B_{K,\theta}=[0,\nu(1-\theta)]\cap \left\{\frac{i}{K}:i\in[K]\right\}$. Note that, from \eqref{eq:sc_nc39} and \eqref{eq:sc_nc40}, $\Pb{E^c_{n,\theta,\xi}}$ is exponentially small in $n$.

Then we have

\begin{\Ieee}{LLL}
\label{eq:sc_nc42}
p_4 &=& \Pb{\bigcup_{\theta\in A_n}\bigcup_{\xi \in B_{K,\theta}}J_{n,\theta,\xi}}\\
&\leq &\Prb{\bigcup_{\theta\in A_n} \bigcup_{\xi \in B_{K,\theta}} J_{n,\theta,\xi} \cap E_{n,\theta,\xi}}\\
&& + \sum_{\theta\in A_n}\sum_{\xi\in B_{K,\theta}}\Pb{E^c_{n,\theta,\xi}}\\
&\leq & 1\biggr[\bigcup_{\theta\in A_n}\bigcup_{\xi\in B_{K,\theta}}I_{n,\theta,xi}\biggr]+ o(1)\\
&\leq & 1\biggr[\bigcup_{\theta \in (\frac{\epsilon'}{\nu},1] } \bigcup_{\xi\in [0,\nu(1-\theta)] }I_{n,\theta,\xi}\biggr]+o(1).\Ieeen
\end{\Ieee}

Therefore
\begin{\Ieee}{LLL}
\limsup_{n\to\infty} p_4\leq 1\biggr[\bigcup_{\theta \in (\frac{\epsilon'}{\nu},1] } \bigcup_{\xi\in [0,\nu(1-\theta)] }I_{\theta,\xi}\biggr] \Ieeen
\end{\Ieee}

This concludes the proof of claim \ref{claim:sc_nc5}.

\end{proof}
\end{claim}

The statement of the theorem follows by choosing $\PT'$ to make sure that $\limsup_{n\to\infty}p_4=0$.

\begin{remark}
In retrospect, our analysis is rather similar to the one in \citep{reeves2012sampling}. We remind that the problem considered there can be seen (as argued in \citep{polyanskiy2017perspective}) as a version of the many-MAC problem with random-access, cf. Section~\ref{sec:cs} for more.
\end{remark}

\end{proof}

\end{theorem}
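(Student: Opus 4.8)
The plan is to combine i.i.d.\ Gaussian random codebooks with the subspace-projection decoder that commits to only $K_1=\nu K$ users, and to show that the per-user error concentrates below $\epsilon$ as soon as $\PT$ exceeds the stated threshold. First I would draw, for each user $j$ and message $i$, an independent codeword $c^j_i\sim\cn(0,P'I_n)$ with $P'<P$, and remove the maximum-power indicator by a change of measure whose total-variation cost is at most $p_0=K\,\Pb{\chi_2(2n)/(2n)>P/P'}\to0$. Since the decoder emits an error symbol for $(1-\nu)K$ users automatically, writing $\epsilon'=\epsilon-(1-\nu)$ and letting $F_t$ be the event that exactly $K-K_1+t$ users are misdecoded, it suffices to show $p_1=\Pb{\bigcup_{t\ge\epsilon'K}F_t}\to0$. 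Conditioning on the transmitted codewords $c_{[K]}$, the fading $H_{[K]}$ and the noise $Z$, I would union-bound $F_t$ over the misdecoded set $S_1$ (size $t$), the set $S$ the decoder ``opens up'' (size $K-K_1+t$), and the wrong codewords $c'_{[S_1]}$, comparing each wrong configuration against a \emph{genie-chosen} true subset $S_2^*\subset S$ of size $t$ consisting of the strongest fading gains (the genie is purely a proof device; the decoder never sees it).

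The crucial distributional fact is that, conditionally, $\norm{\Pd{S_1}{[K]\setminus S}Y}^2$ has the law $\norm{P_{A_1}Y}^2+\norm{\PPA Y}^2\,\mathrm{Beta}(t,n-K_1)$ with $A_1=c_{[[K]\setminus S]}$, because the $t$ fresh Gaussian directions span a uniformly random $t$-plane inside the $(n-K_1+t)$-dimensional orthogonal complement of $A_1$. Hence the pairwise error probability equals $F_\beta(G_S;n-K_1,t)\le\binom{n-K_1+t-1}{t-1}G_S^{\,n-K_1}$ with $G_S=\norm{\PP_{A_0}Y}^2/\norm{\PPA Y}^2$ and $A_0=A_1\cup c_{[S_2^*]}$. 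Plugging this into the RCU bound $p_1\le\Ex{\min\{1,\sum_{t,S}\binom{K-K_1+t}{t}M^t\binom{n-K_1+t-1}{t-1}G_S^{\,n-K_1}\}}$, bounding each binomial by its entropy exponent (these are exactly the $h(\cdot)$ terms appearing in $\tilde V_\theta$), and then \emph{thresholding} — splitting on whether $G_S<V_{n,t}=e^{-\tilde V_{n,t}}$ for all $(t,S)$ simultaneously or not — makes the exponentially weighted sum vanish on the good event provided the margin satisfies $\delta>\delta^*$, leaving only the residual probability $p_2=\Pb{\bigcup_{t,S}\{G_S>V_{n,t}\}}$.

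The remaining work is to show $p_2\to0$ when $\PT$ is large enough. Using $\norm{\PP_{A_0}Y}^2=\norm{\PPA Y}^2-\norm{P_{\PPA c_{[S_2^*]}}Y}^2$ and $\hat Z=Z+\sum_{i\in S\setminus S_2^*}H_ic_i$, I would recast $G_S>V_{n,t}$ as the event $\norm{(1-V_{n,t})\PPA\hat Z-V_{n,t}\PPA\hcc}^2\ge V_{n,t}\norm{\PPA\hcc}^2$; conditionally on $H_{[K]}$ and $A_0$ the left side is a scaled non-central $\chi^2$ with $2n'$ degrees of freedom, $n'=n-K_1+t$. Applying the Laurent--Massart/Birg\'e tail bound (Lemma~\ref{lem:chi2}) and thresholding twice more — first on $U^1>\delta_1$ (using monotonicity of the resulting exponent $f_n$, Claim~\ref{claim:sc_nc4}), then on the chi-square lower tail $\chi_2(2n')/(2n')>1-\delta_2$ via the Reeves bound — absorbs the two remaining binomial factors at the cost of $\delta_1>\delta_1^*$ and $\delta_2>\delta_2^*$, and reduces everything to the \emph{bottleneck} event $p_4$: that for some misdecoding fraction $\theta>\epsilon'/\nu$ and some window offset $\xi\in[0,\nu(1-\theta)]$ a contiguous block of order statistics $|H_{(i)}|^2$ fails an SINR-type inequality. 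Finally I would invoke convergence of the fading order statistics — a Chernoff bound on the empirical CDF plus the Bahadur representation — to show $\tfrac1K\sum_{i=\lceil\xi K\rceil+1}^{\lceil(\xi+\nu\theta)K\rceil}|H_{(i)}|^2\to\alpha(\xi,\xi+\nu\theta)$ in probability, uniformly over a rational grid of $\xi$; then $\limsup_n p_4$ becomes the indicator of a deterministic event in $(\theta,\xi)$, which is killed precisely by taking $\PT'<\PT$ just above $\sup_{\theta,\xi}\PTX(\theta,\xi)$ and letting $P'\uparrow P$, giving $\limsup_n P_{e,u}\le\epsilon$.

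I expect the main obstacle to be the bookkeeping of the iterated thresholding: each of the three combinatorial factors $\binom{n-K_1+t-1}{t-1}$, $\binom{K-K_1+t}{t}$, $\binom{K}{K-K_1+t}$ must be absorbed by its own exponential margin, and one must check that the exponents (the $h$-terms, then $f_n$, then the chi-square rate) dominate in the right order while the surviving term at each stage is genuinely a \emph{probability} that can be pushed to the next stage. The other delicate point is the very last step: justifying that the union over subsets $S$ reduces to a minimum over a contiguous window in the fading order, that $S_2^*$ is the top $t$ of that window, and that the order-statistic sums are controlled uniformly even as the window abuts the top of the spectrum ($\xi+\nu\theta\to1$, where the density of $|H|^2$ vanishes).
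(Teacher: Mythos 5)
Your proposal follows the paper's proof essentially step-by-step: Gaussian random coding with power backoff and change of measure, the $\nu K$-subset projection decoder with genie comparison set $S_2^*$, the conditional Beta law for the projected norm, the RCU bound, the three nested thresholdings ($\delta>\delta^*$, $\delta_1>\delta_1^*$, $\delta_2>\delta_2^*$) absorbing the combinatorial factors via Birg\'e and Reeves chi-square tails, and the final reduction of $p_4$ to a deterministic condition via the Bahadur representation and Chernoff control of the empirical CDF of $|H_i|^2$. The only small slip is in your closing remark: the first two binomials $\binom{n-K_1+t-1}{t-1}$ and $\binom{K-K_1+t}{t}$ feed into the threshold $\tilde V_\theta$ itself, while it is $\binom{K}{K-K_1+t}$ that must be absorbed (repeatedly) by each of the margins $\delta,\delta_1,\delta_2$ -- but this does not affect the correctness of the argument you sketch.
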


\subsection{CSIR}

In this subsection, we focus on the CSIR scenario. We could use projection decoding to decode a fraction of users where decoding set is a function of CSIR. But a better bound is obtained by directly using euclidean metric to decode, similar to \citep{polyanskiy2017perspective}. Then have the following theorem.

\begin{theorem}
\label{th:scaling_CSIR}
Consider the channel \eqref{eq:sys1} (with CSIR) with $K=\mu n$ where $\mu<1$. Fix the spectral efficiency $S$ and target probability of error (per-user) $\epsilon$. Let $M=2^{S/\mu}$ denote the size of the codebooks and $\PT=KP$ be the total power. Fix $\nu \in (1-\epsilon,1]$. Let $\epsilon'=\epsilon-(1-\nu)$. Then if $\mathcal{E}>\mathcal{E}^*_{CSIR}=\sup_{\frac{\epsilon'}{\nu}< \theta\leq 1}\inf_{0\leq \rho\leq 1}\frac{\PTX(\theta,\rho)}{S}$, there exists a sequence of $\left(n,M,\epsilon_n,\mathcal{E},K=\mu n\right)$ codes such that $\limsup_{n\to\infty} \epsilon_n\leq \epsilon$, where, for $\frac{\epsilon'}{\nu}< \theta\leq 1$, 

{\allowdisplaybreaks
\begin{IEEEeqnarray*}{LLL}
\PTX(\theta,\rho) \\
= \frac{(1+\rho)\left(e^{\mu\nu\left(\frac{ h(\theta)}{\rho}+\theta \ln M\right)}-1\right)}{\alpha(\nu(1-\theta),\nu)-\left(e^{\mu\nu\left(\frac{ h(\theta)}{\rho}+\theta \ln M\right)}-1\right)\alpha(\nu,1)(1+\rho)}\label{eq:sc_csir1a}\\
\Ieeen\\
\alpha(a,b) = a\ln(a)-b\ln(b)+b-a.\label{eq:sc_csir1b}\Ieeen
\end{IEEEeqnarray*}
Hence $\mathcal{E}^*\leq \mathcal{E}^*_{CSIR}$.
}

The proof idea is a combination of techniques similar to \citep{polyanskiy2017perspective} and theorem \ref{th:scaling_noCSI}

\begin{proof}

Let each user generate a Gaussian codebook of size $M$ and power $P'<P$ independently such that $KP'=\PT'<\PT$. Let $W_j$ denote the random (in $[M]$) message of user $j$. So, if $\cd_j=\{c^j_i:i\in[M]\}$ is the codebook of user $j$, he/she transmits $X_j=c^j_{W_j}1\left\{\norm{c^j_{W_j}}^2\leq nP\right\}$. For simplicity let $(c_1,c_2,...,c_K)$ be the sent codewords. Fix $\nu \in (1-\epsilon,1]$. Let $K_1=\nu K$ be the number of users that are decoded. Fix a decoding set $D\subset[K]$, possibly depending on $H_{[K]}$ such that $|D|=K_1, a.s$. Since the receives knows $H_{[K]}$, we can use the euclidean distance used in \citep{polyanskiy2017perspective} as the decoding metric. Formally, the decoder output $g_{D}(Y)\in\prod_{i=1}^{K} \mathcal{C}_i$ is given by

\begin{\Ieee}{LLL}
\label{eq:sc_csir2}
(g_D (Y))_i= \begin{cases} 
      f_i^{-1}(\hat{c}_i) & i \in D \\
      ? & i\notin D 
   \end{cases}\IEEEnonumber\\
(\hat{c}_i)_{i\in D}=\arg\min_{(c_i\in\cd_i)_{i\in D}}\norm{Y-\sum_{i\in D}H_i c_i}^2.
\end{\Ieee}

The probability of error is given by
\begin{IEEEeqnarray}{LLL}
\label{eq:sc_csir3}
P_e=\frac{1}{K} \sum_{j=1}^{K}\Pb{W_j\neq \hat{W}_j}
\end{IEEEeqnarray}
where $\hat{W}_j=(g(Y))_j$ is the decoded message of user $j$.
 Similar to the no-CSI case, we perform a change of measure to $X_j=c^j_{W_j}$ by adding a total variation distance bounded by $p_0=K\Pb{\frac{\chi_2(2n)}{2n}>\frac{P}{P'}}\to 0$ as $n\to \infty$.
  
Let $\epsilon'=\epsilon-(1-\nu)$. Now we have
{\allowdisplaybreaks
\begin{IEEEeqnarray*}{LLL}
\label{eq:sc_csir4a}
P_e &=&\Ex{\frac{1}{K} \sum_{j=1}^{K}1\{W_j\neq \hat{W}_j\}} \\
&=&\frac{K-K_1}{K}+\Ex{\frac{1}{K} \sum_{j\in D}1\{W_j\neq \hat{W}_j\}}\\
& \leq & (1-\nu) + \epsilon'+\nu\Pb{\frac{1}{K} \sum_{j\in D}1\{W_j\neq \hat{W}_j\} \geq \epsilon'}\\
&=& \epsilon + \nu p_1\IEEEyesnumber
\end{IEEEeqnarray*} 
}
where $p_1=\Pb{\bigcup_{t=\epsilon' K}^{\nu K}\left\{\sum_{j\in D}1\{W_j\neq \hat{W}_j\} =t\right\}}$.

From now on, we just write $\bigcup_{t}$ to denote $\bigcup_{t=\epsilon'K}^{\nu K}$, $\sum_t$ for $\sum_{t=\epsilon'K}^{\nu K}$, and $\sum_S$ for $\sum_{\substack{S\subset D\\ |S|=t}}$. Let $c_{[S]}\equiv \{c_i:i\in[S]\}$ and $H_{[K]}=\{H_i:i\in[K]\}$.

Let $F_t=\left\{\sum_{j\in D}1\{W_j\neq \hat{W}_j\} =t\right\}$. Let $\rho\in[0,1]$. We bound $\Pb{F_t}$ using Gallager's rho trick similar to \citep{polyanskiy2017perspective} as

\begin{\Ieee}{LLL}
\label{eq:sc_csir4b}
&&\Pb{F_t| Z,c_{[K]},H_{[K]}}\\
&\leq & \Prb{\exists S\subset D: |S|=t, \exists \{c'_i\in \cd_i:i\in S, c_i'\neq c_i\}:\\
&&  \norma{Y-\sum_{i\in S}H_i c'_i-\sum_{i\in D\setminus S }H_ic_i}^2<\\
&&\norma{ Y-\sum_{i\in D }H_ic_i}^2 \biggr| Z,c_{[K]},H_{[K]}}\\
& \leq & \sum_{S}\Prb{\bigcup_{\substack{c'_i\in \cd_i:i\in S\\ c_i'\neq c_i}} \biggr\{\norma{Z_D+\hc -\sum_{i\in S}H_i c'_i}^2<\\
&&\norma{Z_D}^2\biggr\} \biggr| Z,c_{[K]},H_{[K]}}\\
& \leq & \sum_S M^{\rho t} \Prb{\norma{Z_D+\hc -\sum_{i\in S}H_i c'_i}^2<\\
&&\norma{Z_D}^2 \biggr| Z,c_{[K]},H_{[K]}}^\rho\Ieeen
\end{\Ieee}
where $Z_D=Z+\sum_{i\in [K]\setminus D}H_i c_i$ and $c'_{[S]}$ in the last display denotes a generic set of unsent codewords corresponding to codebooks of users in set $S$.

We use the following simple lemma which is a trivial extension of a similar result used in \citep{polyanskiy2017perspective} to compute the above probability.

\begin{lemma}
\label{lem:sc_csir1}
Let $Z\distas{}\cn(0,I_n)$ and $u\in \mathbb{C}^n$. Let $D=diag(d_1,...,d_n)\in \mathbb{C}^{n\times n}$ be a diagonal matrix. If $\gamma>\sup_{j\in[n]}-\frac{1}{|d_j|^2}$, then
\begin{\Ieee}{LLL}
\label{eq:sc_csir5}
\Ex{e^{-\gamma\norm{DZ+u}^2}}=\frac{1}{\prod_{j\in[n]}\left(1+\gamma |d_j|^2\right)}e^{-\gamma\sum_{j\in[n]}\frac{|u_j|^2}{1+\gamma |d_j|^2}}
\end{\Ieee}

\begin{proof}
Omitted.

\end{proof}
\end{lemma}

So, using the above lemma, we have, for $\lambda_1>0$, 

\begin{\Ieee}{LLL}
\label{eq:sc_csir6}
&& \Exxa{\{c'_{S}\}}{\Prb{ \norma{Z_D+\hc -\sum_{i\in S}H_i c'_i}^2<\\
&& \norm{Z_D}^2\biggr| Z,c_{[K]},H_{[K]}}^\rho}\\
&=&\Exxa{\{c'_{S}\}}{\Prb{ \Exp{-\lambda_1\norma{Z_D+\hc -\sum_{i\in S}H_i c'_i}^2}>\\
&&\Exp{-\lambda_1\norm{Z_D}^2}\biggr| Z,c_{[K]},H_{[K]}}^\rho}\\
& \leq & \frac{e^{\rho\lambda_1 \norm{Z_D}^2}}{\left(1+\lambda_1 \phss\right)^{\rho n}}e^{\frac{-\rho\lambda_1\norm{Z_D+\hc}^2}{1+\lambda_1 \phss}}\Ieeen
\end{\Ieee}
where $\mathbb{E}_{c'_{S}}$ denotes taking expectation with respect to $\{c'_i:i\in S\}$ alone, and $1+\lambda_1 \phss>0$. 

Let $\lambda_2=\frac{\rho \lambda_1}{1+\lambda_1 \phss}$. Note that $\lambda_2$ is a function of $H_{S}$. Now using lemma \ref{lem:sc_csir1} again to take expectation over $c_{S}$, we get

{\allowdisplaybreaks
\begin{IEEEeqnarray*}{LLL}
\label{eq:sc_csir7}
&&\Exxa{c_{S}}{\frac{e^{\rho\lambda_1 \norm{Z_D}^2}}{\left(1+\lambda_1 \phss\right)^{\rho n}}e^{\frac{-\rho\lambda_1\norm{Z_D+\hc}^2}{1+\lambda_1 \phss}}} \\
&\leq &\frac{1}{\left(1+\lambda_1 \phss\right)^{\rho n}}\frac{1}{\left(1+\lambda_2 \phss\right)^{ n}}\cdot\\
&&e^{\left(\rho\lambda_1-\frac{\lambda_2}{1+\lambda_2\phss}\right)\norm{Z_D}^2}\Ieeen
\end{IEEEeqnarray*}
}
with $1+\lambda_2\phss>0$. Finally, taking expectation over $Z$, we get

\begin{\Ieee}{LLL}
\label{eq:sc_csir8}
\Pb{F_t|H_{[K]}}\leq \sum_S M^{\rho t}e^{-nE_0(\lambda_1;\rho,H_{[K]},S)}\Ieeen
\end{\Ieee}
 
where  
\begin{\Ieee}{LLL}
\label{eq:sc_csir9}
&&E_0(\lambda_1;\rho,H_{[K]},S)\\
&=&\rho \ln\biggr(1+\lambda_1\phss\biggr)+\ln\biggr(1+\lambda_2\phss\biggr)+\\
&& \ln\biggr(1-\phdd\cdot\\
&&\biggr(\rho\lambda_1-\frac{\lambda_2}{1+\lambda_2\phss}\biggr)\biggr)\Ieeen
\end{\Ieee}
with 
\begin{\Ieee}{LLL}
1>\phdd\left(\rho\lambda_1-\frac{\lambda_2}{1+\lambda_2\phss}\right).
\end{\Ieee}

It is easy to see that the optimum value of $\lambda_1$ that maximizes $E_0$ is given by

\begin{\Ieee}{LLL}
\label{eq:sc_csir10}
\lambda_1^*=\frac{1}{\phdd (1+\rho)}\Ieeen
\end{\Ieee}
and hence the maximum value of the exponent 
$$E_0(\rho,H_{[K]},S)=E_0(\lambda_1^*;\rho,H_{[K]},S)$$
 is given by
\begin{\Ieee}{LLL}
\label{eq:sc_csir11}
E_0(\rho, H_{[K]},S)\\
=\rho \ln\left(1+\frac{\phss}{(1+\rho)\phdd}\right).
\end{\Ieee}

Therefore, we have

\begin{\Ieee}{LLL}
\label{eq:sc_csir12}
p_1\leq \Ex{\sum_t\sum_S e^{\rho t \ln M} e^{-n E_0(\rho,H_{[K]},S)}}.\Ieeen
\end{\Ieee}

Since we want an upper bound for \eqref{eq:sc_csir12}, we would like to take minimum over $S\subset D: |S|=t$. For a given choice of $D$, this corresponds to minimizing $\phss$ which mean we take $S$ to contain indices in $D$ which correspond to $t$ smallest fading coefficients (within $D$). Then, the best such bound is obtained by choosing $D$ that maximizes $\frac{\phss}{\phdd}$. Clearly this corresponds to choosing $D$ to contain indices corresponding to top $K_1$ fading coefficients.

Therefore, we get
\begin{\Ieee}{LLL}
\label{eq:sc_csir13}
p_1\leq\\
 \Exa{\sum_t \binom{K_1}{t}e^{\rho t \ln M} e^{-n\rho\ln\left(1+\frac{P'\sum_{i=K_1-t+1}^{K_1}|H_{(i)}|^2}{(1+\rho)\left(1+P'\sum_{i=K_1+1}^{K}|H_{(i)}|^2\right)}\right)}}.
\end{\Ieee}

Let $A_n=[\frac{\epsilon'}{\nu},1]\cap \left\{\frac{i}{K_1}:i\in[K_1]\right\}$. For $\theta\in A_n$ and $t=\theta K_1$, using \citep[Ex. 5.8]{Gallager:1968:ITR:578869} again, we have

\begin{IEEEeqnarray*}{LLL}
\label{eq:sc_csir14}
\binom{K_1}{t}\leq  \sqrt{\frac{K_1}{2\pi t (K_1-t)}}e^{K_1 h(\frac{t}{K_1})} = O\left(\frac{1}{\sqrt{n}}\right)e^{n\mu\nu h(\theta)}.\\
\IEEEyesnumber
\end{IEEEeqnarray*}

The choice of $\rho$ was arbitrary, and hence,

\begin{\Ieee}{LLL}
\label{eq:sc_csir15}
p_1 &\leq &  \Exa{\min\biggr\{1,\sum_{\theta\in A_n} \exp\biggr(-n\sup_{\rho\in[0,1]}\biggr( \\
&&\rho\ln\biggr(1+\frac{P'\sum_{i=\nu(1-\theta)K+1}^{\nu K}|H_{(i)}|^2}{(1+\rho)(1+P'\sum_{i=\nu K+1}^{K}|H_{(i)}|^2)}\biggr) \\
&&-\mu\nu h(\theta)-\mu\nu\theta\ln M \biggr)\biggr)\biggr\}}\\
&\leq &  \Exa{\min\biggr\{1,|A_n| \exp\biggr(-n\inf_{\theta\in A_n}\sup_{\rho\in[0,1]}\\
&&\biggr( \rho\ln\biggr(1+\frac{P'\sum_{i=\nu(1-\theta)K+1}^{\nu K}|H_{(i)}|^2}{(1+\rho)(1+P'\sum_{i=\nu K+1}^{K}|H_{(i)}|^2)}\biggr) \\
&&-\mu\nu h(\theta)-\mu\nu\theta\ln M \biggr)\biggr)\biggr\}}\Ieeen
\end{\Ieee}

where we have used min since $p_1\leq 1$.
Now, using similar arguments as in the proof of claim \ref{claim:sc_nc5} and taking limits, we can see that

\begin{\Ieee}{LLL}
\label{eq:sc_csir16}
\inf_{\theta\in A_n}\sup_{\rho\in[0,1]} \biggr( \rho\ln\biggr(1+\frac{P'\sum_{i=\nu(1-\theta)K+1}^{\nu K}|H_{(i)}|^2}{(1+\rho)(1+P'\sum_{i=\nu K+1}^{K}|H_{(i)}|^2)}\biggr)\\
 -\mu\nu h(\theta)-\mu\nu\theta\ln M \biggr) =\\
 \inf_{\theta\in A_n}\sup_{\rho\in[0,1]}\biggr( \rho\ln\biggr(1+\frac{\PT' \alpha(\nu(1-\theta),\nu)}{(1+\rho)(1+\PT'\alpha(\nu,1))}\biggr)\\
 -\mu\nu h(\theta)-\mu\nu\theta\ln M \biggr) + o(1)\Ieeen
\end{\Ieee}
with exponentially high probability. Hence, 

\begin{\Ieee}{LLL}
\label{eq:sc_csir17}
p_1 &\leq &\Exa{|A_n|\exp\biggr(o(n)-n\inf_{\theta\in A_n}\sup_{\rho\in[0,1]}\\
&&\biggr( \rho\ln\biggr(1+\frac{\PT' \alpha(\nu(1-\theta),\nu)}{(1+\rho)(1+\PT'\alpha(\nu,1))}\biggr) -\mu\nu h(\theta)\\
&&-\mu\nu\theta\ln M \biggr) \biggr)} +o(1)\\
&\leq &  \Exa{|A_n|\exp\biggr(o(n)-n\inf_{\theta\in A}\sup_{\rho\in[0,1]}\\
&&\biggr( \rho\ln\biggr(1+\frac{\PT' \alpha(\nu(1-\theta),\nu)}{(1+\rho)(1+\PT'\alpha(\nu,1))}\biggr) -\mu\nu h(\theta)\\
&&-\mu\nu\theta\ln M \biggr) \biggr)} +o(1) \Ieeen
\end{\Ieee}
where $A=[\frac{\epsilon'}{\nu},1]$.

Therefore, choosing $\PT'>\sup_{\theta\in A}\inf_{\rho\in[0,1]}\PT(\theta,\rho)$ will ensure that $\limsup_{n\to\infty} p_1 =0$.

\end{proof}

\end{theorem}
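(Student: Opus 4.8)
The plan is to mimic the random-coding/Gallager-$\rho$ argument of~\citep{polyanskiy2017perspective}, with the CSIR modifications already used in the proof of Theorem~\ref{th:scaling_noCSI}. Let every user draw an i.i.d.\ $\cn(0,P'I_n)$ codebook of size $M$, with $P'<P$ chosen so that $\PT'=KP'<\PT$, and transmit the truncated codeword $X_j=c^j_{W_j}1\{\norm{c^j_{W_j}}^2\le nP\}$. Because $P_e$ is the mean of a $[0,1]$-valued random variable, replacing $X_j$ by the untruncated $c^j_{W_j}$ costs only a total-variation term $p_0=K\,\Pb{\chi_2(2n)/(2n)>P/P'}\to 0$ (the estimate~\eqref{eq:sc_nc3a} of the no-CSI proof applies verbatim), so it suffices to bound $P_e$ under the Gaussian measure. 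The decoder, knowing $H_{[K]}$, fixes $D\subset[K]$ to be the $K_1=\nu K$ users with the largest $|H_i|$ and returns the minimum-distance estimate $(\hat c_i)_{i\in D}=\arg\min_{(c_i\in\cd_i)_{i\in D}}\norm{Y-\sum_{i\in D}H_ic_i}^2$, declaring the erasure symbol outside $D$. Splitting the error as in~\eqref{eq:sc_csir4a}, $P_e\le(1-\nu)+\epsilon'+\nu\,\Pb{\tfrac1K\sum_{j\in D}1\{W_j\ne\hat W_j\}\ge\epsilon'}=\epsilon+\nu p_1$ where $p_1=\Pb{\bigcup_{t=\epsilon'K}^{\nu K}F_t}$ and $F_t$ is the event that exactly $t$ decoded users are wrong; the whole task thus reduces to showing $\limsup_n p_1=0$ under the hypothesis $\mathcal{E}>\mathcal{E}^*_{CSIR}$.

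Next I would bound $\Pb{F_t\mid Z,c_{[K]},H_{[K]}}$ by a union over subsets $S\subset D$ with $|S|=t$ and over the (fresh Gaussian) unsent codeword tuples; Gallager's $\rho$-trick ($\min\{1,x\}\le x^\rho$ for $\rho\in[0,1]$) pulls the inner union out at the cost of a factor $M^{\rho t}$ and a power $\rho$ on the pairwise error probability $\Pb{\norm{Z_D+\hc-\sum_{i\in S}H_ic'_i}^2<\norm{Z_D}^2\mid\cdots}$, where $Z_D=Z+\sum_{i\in[K]\setminus D}H_ic_i$ absorbs the undecoded users. The workhorse is Lemma~\ref{lem:sc_csir1}: a Chernoff bound with tilt $\lambda_1>0$, followed by integrating successively over the three Gaussian vectors $\sum_{i\in S}H_ic'_i\sim\cn(0,\phss I_n)$, then $\hc\sim\cn(0,\phss I_n)$ (which introduces the composite tilt $\lambda_2=\rho\lambda_1/(1+\lambda_1\phss)$), and finally $Z_D\sim\cn(0,\phdd I_n)$, each step legal on the stated domain of its tilt parameter, yields $\Pb{F_t\mid H_{[K]}}\le\sum_S M^{\rho t}e^{-nE_0(\lambda_1;\rho,H_{[K]},S)}$ with $E_0$ as in~\eqref{eq:sc_csir9}. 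Maximizing over $\lambda_1$ gives $\lambda_1^\ast=1/(\phdd(1+\rho))$ and the clean exponent $E_0(\rho,H_{[K]},S)=\rho\ln\!\left(1+\tfrac{\phss}{(1+\rho)\phdd}\right)$.

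Taking the expectation over $H_{[K]}$, I would collapse the remaining unions to a single worst case: among $S\subset D$ with $|S|=t$ the term is largest when $S$ consists of the $t$ weakest gains inside $D$, and over the choices of $D$ the ratio $\phss/\phdd$ is maximized when $D$ is the top-$K_1$ set; hence $S$ corresponds to order statistics $|H_{(\nu(1-\theta)K+1)}|,\dots,|H_{(\nu K)}|$ and $D^c$ to $|H_{(\nu K+1)}|,\dots,|H_{(K)}|$, with $t=\theta K_1$. Using $\binom{K_1}{t}\le O(n^{-1/2})e^{n\mu\nu h(\theta)}$ and $e^{\rho t\ln M}=e^{n\rho\theta\mu\nu\ln M}$ turns the bound into $p_1\le\Ex{\min\{1,\,|A_n|\,e^{-n\inf_{\theta}\sup_{\rho\in[0,1]}(\rho\ln(1+X_n(\theta,\rho))-\mu\nu h(\theta)-\rho\mu\nu\theta\ln M)}\}}$, where $X_n$ is the finite-$n$ version of $\tfrac{\PT'\alpha(\nu(1-\theta),\nu)}{(1+\rho)(1+\PT'\alpha(\nu,1))}$ built from the partial sums of $|H_{(i)}|^2$. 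Then, reusing the order-statistics machinery of the proof of Claim~\ref{claim:sc_nc5} (Bahadur's representation together with the Chernoff/Bernstein concentration bounds), I would replace, uniformly over $\theta$, those partial sums by their deterministic limits $\alpha(\nu(1-\theta),\nu)=\int_{\nu(1-\theta)}^{\nu}F^{-1}_{|H|^2}(1-\gamma)\,d\gamma$ and $\alpha(\nu,1)$, up to $o(1)$ errors on an event of probability $1-e^{-\Omega(n^\tau)}$, making the exponent deterministic in the limit. Clearing denominators in $\rho\ln\!\left(1+\tfrac{\PT'\alpha(\nu(1-\theta),\nu)}{(1+\rho)(1+\PT'\alpha(\nu,1))}\right)>\mu\nu\!\left(\tfrac{h(\theta)}{\rho}+\theta\ln M\right)$ gives exactly $\PT'>\PTX(\theta,\rho)$; hence $\limsup_n p_1=0$ whenever $\PT'>\sup_{\theta}\inf_{\rho}\PTX(\theta,\rho)$, and letting $\PT'$ approach this value (noting $\mathcal{E}=\PT/S\ge\PT'/S$) yields $\mathcal{E}^\ast\le\mathcal{E}^\ast_{CSIR}$.

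I expect the real difficulty to be the last stage: making the passage from the random order-statistic partial sums to the deterministic integrals $\alpha(\cdot,\cdot)$ rigorous \emph{uniformly} over the continuum of fractions $\theta\in A_n$, including the boundary behaviour as the upper fraction tends to $1$, and then justifying that $\limsup_n$ commutes with $\min\{1,\cdot\}$ and with the expectation — essentially re-running the argument of Claim~\ref{claim:sc_nc5} in this setting. By comparison, the threefold Gaussian integration and the one-line optimization over $\lambda_1$ are mechanical once Lemma~\ref{lem:sc_csir1} is in hand, and the Gallager-$\rho$ step and the worst-case choice of $S$ and $D$ are routine.
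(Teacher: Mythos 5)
Your proposal follows the paper's proof step for step: random Gaussian codebooks with a truncation/change-of-measure correction, the decode-the-top-$\nu K$ strategy under CSIR, the $\epsilon + \nu p_1$ decomposition, Gallager's $\rho$-trick followed by three nested Gaussian integrations via Lemma~\ref{lem:sc_csir1}, the closed-form optimizer $\lambda_1^* = 1/(\phdd(1+\rho))$, worst-case $S$ inside $D$ and worst-case $D$, and finally the order-statistic concentration argument borrowed from Claim~\ref{claim:sc_nc5}. One minor note in your favor: your exponent correctly carries the factor $\rho$ on the $\mu\nu\theta\ln M$ term (coming from $e^{\rho t \ln M}$), which is what the formula for $\PTX(\theta,\rho)$ in the theorem statement requires, whereas the displayed equations \eqref{eq:sc_csir15}--\eqref{eq:sc_csir17} in the paper appear to have dropped that $\rho$ by typo.
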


\begin{remark} 
Note that the analysis of the CSIR case in this paper and the AWGN case in~\cite{polyanskiy2017perspective} are similar, in particular both analyze a (suboptimal for PUPE) maximum likelihood decoder. However, there are two new subtleties, compared to~\cite{polyanskiy2017perspective}.  First,~\cite{polyanskiy2017perspective} applies Gallager's  $\rho$-trick twice, where the second application (with parameter $\rho_1$ in the notation of~\cite{polyanskiy2017perspective}) is applied just before taking the expectation over $Z$ in~\eqref{eq:sc_csir8}. In the CSIR case, the summands of $\sum_S$ actually depend on the subset $S$ through the fading gains, which makes the $\rho$-trick less appealing, and that is why we omitted it here. Secondly, because the summands depend on $S$, we upper bound each by taking the maximum over $S$, and this requires analysis of order statistics which is, of course, not present in the AWGN case.
\end{remark}

\subsection{Achievability bound via scalar AMP}
\label{sec:amp_ach}
In this section, we will given an achievability bound on $E_b/N_0$ for the no-CSI case by the asymptotic analysis of the
scalar AMP algorithm \citep{donoho2009message,bayati2011dynamics,barbier2017approximate,reeves2012sampling}. Here, we
recall the compressed sensing view of our model \eqref{eq:cs1} where $U$ is block sparse. As discussed in section
\ref{sec:cs}, a better algorithm to use in this case would be the vector or block version of AMP, whose analysis is also well studied, e.g.~\citep{barbier2017approximate}. However, as we discussed in Section~\ref{sec:cs} evaluation of performance of this block-AMP requires computing $M=2^{100}$ dimensional integrals, and thus does not result in computable bounds. 
Instead, here we take a different approach by analyzing the scalar AMP algorithm, whose asymptotic analysis
in~\cite{bayati2011dynamics} in fact only requires that the empirical distribution of entries of $U$ be convergent -- a
fact emphasized in~\citep{reeves2012sampling}. Let us restate the signal model we have:
\begin{equation}\label{eq:amp1}
	Y=AU+Z, \qquad A_{i,j} \stackrel{iid}{\sim}\cn(0,\PT/\mu), i\in[n], j\in[KM]\,,
\end{equation}
where $U\in\mathbb{C}^{KM}$ is block sparse with
$K=\mu n$ blocks each of length $M$, with a single non-zero entry $U_j$ in each block with $U_j\sim \cn(0,1)$ (Rayleigh
fading), and $Z\distas{}\cn(0,I_n)$.
The support of $U$, denoted by $S\in \{0,1\}^{KM}$, is sampled uniformly from all such block sparse supports (there are
$M^K$ of them). The goal is to get an estimate $\hat S=\hat S(Y,A)$ of $S$ where our figure of merit is the following:
\begin{equation}\label{eq:amp_pupe}
	\text{PUPE}(\hat S) = \frac{1}{K} \sum_{k=1}^K \Pb{S_{1+(k-1)M}^{kM} \neq \hat S_{1+(k-1)M}^{kM}}\,,
\end{equation}
which is also known as section error rate (SER) in the SPARC literature~\cite{CIT-092}.

The AMP-based algorithm operates as follows. First we estimate $U$ iteratively, then after estimating $U$ we threshold
its values to obtain an estimator for $S$. 

To describe scalar AMP we first introduce the following scalar problem. For each $\sigma>0$ define
$\mu^{(\sigma)} = P_{X,V}$ to be the joint distribution of variables $X$ and $V$:
\begin{equation}\label{eq:amp_scalar}
	V = X + \sigma W, \qquad X \perp\!\! W \sim \mathcal{CN}(0,1)
\end{equation}
and
\begin{equation}
\label{eq:ber_gauss}
X\distas{}\mathrm{BG}(1,1/M)=\begin{cases}
      \cn(0,1) & \mathrm{w.p.}\, \frac{1}{M}\\
      0 & \mathrm{w.p.}\, 1-\frac{1}{M}
\end{cases} 
\end{equation}
We also define
\begin{equation}
\label{eq:amp_scalar_mmse}
 \eta(z,\sigma^2) \eqdef \EE[X | V = z]\,, \mmse(\sigma^2) \eqdef \EE[(X-\EE[X|V])^2]\,.
 \end{equation}

Next, start with $U^{(0)}=0\in\mathbb{C}^{KM}$, $R^{(0)}=Y$, $\hat\sigma^2_0=\frac{\mu}{\PT}+\mu$. Then for $t=1,2,\cdots $ we have the following iterations
\begin{\Ieee}{LLL}
U^{(t)}= \eta\left(A^* R^{(t-1)}+U^{(t-1)},\hat\sigma^2_{t-1}\right)\label{eq:amp_iter1}\Ieeen\\
R^{(t)} = Y-AU^{(t)}+\\
\mu M R^{(t-1)}\frac{1}{KM}\sum_{i=1}^{KM}\eta'\left(\left(A^*R^{(t-1)}+U^{(t-1)}\right)_i,\hat\sigma_{t-1}^2\right)\label{eq:amp_iter2}\\
\Ieeen\\
\hat \sigma_t^2 =\frac{1}{n}\norm{R^{(t)}}^2\label{eq:amp_iter3}\Ieeen
\end{\Ieee}
where $\eta'(x+iy,\sigma^2)$ denotes $\frac{1}{2}\left(\frac{\partial \eta(x+iy,\sigma^2)}{\partial x}-i\frac{\partial \eta(x+iy,\sigma^2)}{\partial y}\right)$ and $i=\sqrt{-1}$ is the imaginary unit (see \citep{meng2015concise,zou2018concise} for a more general derivation of complex AMP). The estimate of $U$ after $t$ steps is 
given by (see \citep{reeves2012sampling} for more details) 
\begin{equation}
\label{eq:amp_iter4}
\hat U^{(t)}=A^* R^{(t)}+U^{(t)}
\end{equation}
To convert $\hat U^{(t)}$ into $\hat S^{(t)}$ we perform a simple thresholding:
\begin{equation}\label{eq:amp_decoder}
	\hat S^{(t)}(\theta) = \{i\in[KM]:|\hat U_i^{(t)}|^2>\theta\}\,.
\end{equation}

\begin{theorem}[Scalar AMP achievability]
\label{th:amp_ach} Fix any $\mu>0$, $P_{tot}>0$ and $M\ge 1$. Then for every $\mathcal{E} > \frac{P_{tot}}{\mu \log_2 M}$
there exist a sequence of
$\left(n,M,\epsilon_n,\mathcal{E},K=\mu n\right)$ codes (noCSI) such that AMP decoder~\eqref{eq:amp_decoder} (with
a carefully chosen $\theta = \theta(\mathcal{E},M,\mu)$ and sufficiently large $t$) achieves
$$ \limsup_{n\to\infty} \epsilon_n \le \pi^*(\sigma^2_{\infty}, M)\,,$$
where $\pi^*(\tau, M) = 1-\frac{1}{1+\tau} \left((M-1)\left(\frac{1}{\tau}+1\right)\right)^{-\tau}$ 
and $\sigma^2_\infty$ is found from 
 \begin{\Ieee}{LLL}
 \label{eq:amp_fp}
 \sigma_{\infty}^2 &\equiv & \sigma_{\infty}^2(\mu,\PT,M)\\
 &=&\sup\left\{\tau\geq 0:\tau=\frac{\mu}{\PT}+\mu M\mmse(\tau) \right\}\Ieeen
 \end{\Ieee}
\end{theorem}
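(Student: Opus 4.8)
The proof has three parts: (i) reducing the analysis of the AMP iterates \eqref{eq:amp_iter1}--\eqref{eq:amp_iter3} to a scalar \emph{state evolution} (SE) recursion via \citep{bayati2011dynamics}; (ii) identifying the limit of that recursion with the fixed point \eqref{eq:amp_fp}; and (iii) turning the resulting estimate of $\hat U^{(t)}$ into a bound on the section error rate \eqref{eq:amp_pupe}.

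\textbf{Step 1 (state evolution).} As in the proof of Theorem~\ref{th:scaling_noCSI}, I would generate i.i.d.\ $\cn(0,P')$ codebooks with $nP'<nP=\mate\log_2 M$ -- the strict inequality $\mate>\PT/(\mu\log_2 M)$ leaves the needed slack -- and remove the power-constraint violations by a change of measure costing $o(1)$ in PUPE. After the scalar rescaling that normalizes the columns of the sensing matrix (which turns the effective noise variance into $\mu/\PT$, consistent with the initialization $\hat\sigma^2_0=\tfrac{\mu}{\PT}+\mu$), the model \eqref{eq:amp1} is of the form to which the analysis of \citep{bayati2011dynamics} applies: a Gaussian design, additive Gaussian noise, and a signal whose empirical coordinate distribution \emph{converges}. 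The key observation is that, although $U$ is block-sparse and hence its coordinates are not i.i.d., for \emph{every} $n$ the empirical distribution of its $KM$ entries is \emph{exactly} $\mathrm{BG}(1,1/M)$, because each of the $K=\mu n$ blocks contributes one $\cn(0,1)$ entry and $M-1$ zeros; as stressed in \citep{reeves2012sampling}, convergence of this empirical distribution (not independence) is all the SE theorem needs, and the complex-valued iteration with the Wirtinger derivative $\eta'$ is handled as in \citep{meng2015concise,zou2018concise}. Consequently, for every fixed $t$ and every order-$2$ pseudo-Lipschitz $\phi:\mathbb{C}^2\to\mathbb{R}$,
\begin{equation*}
\frac{1}{KM}\sum_{i=1}^{KM}\phi\!\left(\hat U^{(t)}_i,U_i\right)\;\xrightarrow[n\to\infty]{\text{a.s.}}\;\EE\!\left[\phi(X+\sigma_t W,X)\right],
\end{equation*}
with $X\sim\mathrm{BG}(1,1/M)$ and $W\sim\cn(0,1)$ independent, $\hat\sigma^2_t\to\sigma^2_t$ a.s., and $\sigma^2_t=\tfrac{\mu}{\PT}+\mu M\,\mmse(\sigma^2_{t-1})$ started at $\sigma^2_0=\hat\sigma^2_0=\tfrac{\mu}{\PT}+\mu$.

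\textbf{Step 2 (fixed point).} The map $\Phi(\tau)\eqdef\tfrac{\mu}{\PT}+\mu M\,\mmse(\tau)$ is continuous and non-decreasing (the MMSE is non-decreasing in the noise variance), and $\Phi(\tau)\le\Phi(\infty)=\tfrac{\mu}{\PT}+\mu M\cdot\tfrac{1}{M}=\sigma^2_0$ since $\mmse(\tau)\uparrow\EE|X|^2=\tfrac{1}{M}$. Hence $\sigma^2_1=\Phi(\sigma^2_0)\le\Phi(\infty)=\sigma^2_0$, so by monotonicity $\{\sigma^2_t\}$ is non-increasing and nonnegative, converging to a fixed point $\sigma^2_\infty$ of $\Phi$; and since every fixed point is $\le\Phi(\infty)=\sigma^2_0$, it is the largest one, i.e.\ $\sigma^2_\infty$ is precisely \eqref{eq:amp_fp}.

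\textbf{Step 3 (from $\hat U^{(t)}$ to PUPE).} Fix $t$, write $\tau=\sigma^2_t$, and use the threshold decoder \eqref{eq:amp_decoder}. A block is decoded incorrectly only if its true (nonzero) coordinate has $|\hat U^{(t)}|^2\le\theta$ or one of its $M-1$ zero coordinates has $|\hat U^{(t)}|^2>\theta$; a union bound over the latter $M-1$ coordinates and re-indexing the per-block sums by the $KM$ coordinates gives
\begin{equation*}
\mathrm{PUPE}\;\le\;M\,\Ex\!\left[\tfrac{1}{KM}\textstyle\sum_i\mathbbm{1}\{U_i\neq0\}\mathbbm{1}\{|\hat U^{(t)}_i|^2\le\theta\}\right]+M\,\Ex\!\left[\tfrac{1}{KM}\textstyle\sum_i\mathbbm{1}\{U_i=0\}\mathbbm{1}\{|\hat U^{(t)}_i|^2>\theta\}\right].
\end{equation*}
Sandwiching the discontinuous indicators between order-$2$ pseudo-Lipschitz functions, applying Step~1, using bounded convergence, and then letting the sandwich shrink (legitimate since $|G+\sigma W|^2$ and $|\sigma W|^2$ have no atoms for $G\sim\cn(0,1)$, and $X$ has none away from $0$), the two averages tend to $1-e^{-\theta/(1+\tau)}$ and $(M-1)e^{-\theta/\tau}$. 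Minimizing $g(\theta)\eqdef 1-e^{-\theta/(1+\tau)}+(M-1)e^{-\theta/\tau}$ over $\theta\ge0$: for $M\ge2$ the derivative changes sign once, the stationary point is $\theta^\star=\tau(1+\tau)\ln\!\big((M-1)(1+\tfrac{1}{\tau})\big)$, and substituting (using $(M-1)e^{-\theta^\star/\tau}=\tfrac{\tau}{1+\tau}\big((M-1)(1+\tfrac1\tau)\big)^{-\tau}$) yields $g(\theta^\star)=\pi^*(\tau,M)$. Hence $\limsup_{n\to\infty}\epsilon_n\le\pi^*(\sigma^2_t,M)$ for each $t$; since $\sigma^2_t\downarrow\sigma^2_\infty$ and $\tau\mapsto\pi^*(\tau,M)$ is continuous, a diagonal argument over increasing $t$ (also absorbing the $o(1)$ terms of Step~1) produces a single sequence of codes with $\limsup_{n\to\infty}\epsilon_n\le\pi^*(\sigma^2_\infty,M)$.

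\textbf{Main obstacle.} The delicate point is the rigorous transfer of the SE theorem of \citep{bayati2011dynamics} to our setting: the signal is \emph{complex} and \emph{block-sparse}, so one must check that (a) only convergence of the empirical coordinate law of $U$ is invoked -- which holds here exactly -- and (b) the complex AMP update with the Wirtinger $\eta'$ is within scope. The second subtlety is quantitative bookkeeping but essential: because each section has $M=2^{100}$ coordinates, the crude ``section error rate $\le M\times$ coordinate error rate'' bound is useless, and one must union-bound the false-alarm event over the $M-1$ off-support coordinates -- this is exactly what makes $(M-1)e^{-\theta/\tau}$ balance the miss term at $\theta^\star$ and collapse to the clean expression $\pi^*(\sigma^2_\infty,M)$.
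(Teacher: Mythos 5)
Your proposal is correct and follows essentially the same route as the paper's proof: invoke the Bayati--Montanari state evolution (emphasizing, as in \citep{reeves2012sampling}, that only weak convergence of the empirical coordinate law of $U$ is needed, so block-sparsity is harmless), reduce the section error to $M$ times the per-coordinate Hamming error via a union bound over the $M$ coordinates of a block, pass to the scalar channel \eqref{eq:amp_scalar}, and optimize the threshold $\theta$ in closed form to obtain $\pi^*$ --- your $g(\theta)$ is identical to $M\psi(\tau,\theta,M)$ and your $\theta^\star$ matches Claim~\ref{claim:amp_psi_opt}. You supply two details the paper only cites references for: the monotonicity argument showing $\sigma_t^2$ decreases to the largest fixed point, and the sandwich/mollification argument transferring the pseudo-Lipschitz SE guarantee to the indicator functions; both are correct and welcome. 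Two minor imprecisions in your write-up: the empirical distribution of the entries of $U$ is not ``exactly $\mathrm{BG}(1,1/M)$'' for finite $n$ (the zero fraction is exact but the nonzero values are a finite i.i.d.\ $\cn(0,1)$ sample, so convergence is a.s.\ weak, not exact), and your ``main obstacle'' remark that the bound ``$\mathrm{PUPE}\le M\times$ coordinate error rate'' is ``useless'' is misleading --- that is precisely the bound \eqref{eq:amp_pupe_ham} used here, and it works because the threshold $\theta^\star$ makes the per-coordinate false-alarm rate scale as $1/M$.
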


\begin{proof}
Denote the Hamming distance 
 \begin{equation}
 \label{eq:amp_ham}
 d_{H}(S,\hat S)=\frac{1}{KM}\sum_{i=1}^{KM}1[S_i\neq \hat S_i]
 \end{equation}
 Note that according to the definition~\eqref{eq:amp_pupe} we have a bound

 \begin{equation}
 \label{eq:amp_pupe_ham}
 \text{PUPE}(\hat S^{(t)}(\theta))\leq M\Ex{d_H(S,\hat S^{(t)}(\theta)))}
 \end{equation}
 Indeed, this is a simple consequence of upper bounding each probability in~\eqref{eq:amp_pupe} by the union bound. 
 
 The key result of~\cite{bayati2011dynamics} shows the following. Let the empirical joint
distribution of entries in $(U,\hat{U}^{(t)})$ be denoted by
	$$\hat{\mu}_{U,\hat U^{(t)}} \eqdef \frac{1}{KM}\sum_{i=1}^{KM}\delta_{(U_i, \hat U_i^{(t)})}\,,$$
where $\delta_{x}$ is the Dirac measure at $x$. Then as $n\to\infty$ this (random) distribution on $\mathbb{C}^2$
converges weakly to a deterministic limit $\mu_{X,V}$ almost surely. 
More precisely, from~\cite[Lemma 1(b)]{bayati2011dynamics} and proof of~\cite[Theorem 5]{reeves2012sampling} for any bounded Lipschitz continuous function $f:\mathbb{C}^2 \to \mathbb{R}$ we have  

\begin{equation}
\label{eq:amp_a.s_convergence}
\lim_{n\to\infty}\int f\,d\hat{\mu}_{U,\hat U^{(t)}}=\int f\,d\mu^{(\sigma_t)}\, \mathrm{a.s.}
\end{equation}
where $\mu^{(\sigma_t)}$ is the joint distribution of $(X,V=X+\sigma_t W)$ defined in \eqref{eq:amp_scalar}, and $\sigma_{t}$ can further be determined from the so called state evolution sequence: Set $\sigma_0^2=\frac{\mu}{\PT}+\mu$
and then
\begin{equation}
\label{eq:amp_se}
\sigma_t^2=\frac{\mu}{\PT}+\mu M\mmse(\sigma^2_{t-1})
\end{equation}
where $\mmse$ is defined in \eqref{eq:amp_scalar_mmse}.

 Note that the assumptions on $U$ $A$ and $Z$ in \cite[Lemma 1(b)]{bayati2011dynamics} hold in our case. In particular, since the support of $U$ is sampled uniformly from all block sparse supports of size $K$ and the entries in the support are iid $\cn(0,1)$ random variables, we have that the empirical distribution of entries of $U$ converge weakly almost surely to the distribution $P_X$ of $X$ defined in \eqref{eq:ber_gauss}. Further the moment conditions in \citep[Theorem 2]{bayati2011dynamics} are also satisfied. 
%
We note here that although \cite{bayati2011dynamics,reeves2012sampling} consider only real valued signals, the results there also hold for the complex case (see \citep[Theorem III.15]{maleki2013asymptotic}, \citep[Chapter 7]{barbier2015statistical}).
 
We next consider the support recovery in the scalar model \eqref{eq:amp_scalar}. Let $S_0=1[X\neq 0]$ denote the
indicator of the event when $X$ is non-zero. Let $\hat{S}_0\equiv\hat S_0(\theta)=1[|V|^2>\theta]$ denote an estimator
of $S_0$ using the observation $V=X+\sigma W$ in~\eqref{eq:amp_scalar}. Let 
\begin{equation}
\label{eq:amp_scalar_pe}
\psi(\sigma^2,\theta,M)=\Pb{S_0\neq \hat{S}_0}
\end{equation} 
 denote the probability of error in the scalar model \eqref{eq:amp_scalar} with $\sigma$ dependence made explicit as an argument of $\psi$. The from the convergence of $\hat \mu_{U,\hat U^{(t)}}$ we conclude as in~\citep{reeves2012sampling} that for any number $t$ of 
steps of the AMP algorithm $\hat S^{(t)}(\theta)$ achieves 
\begin{equation}\label{eq:amp_pupe2}
	\lim_{n\to\infty} \text{PUPE}(\hat S^{(t)}(\theta)) \le M\psi(\sigma^2_t, \theta, M)\,.
\end{equation}
Since this holds for any $t$ and any $\theta$ we can optimize both by taking $t\to\infty$ and $\inf_{\theta>0}$. 
From the proof of \citep[Theorem 6]{reeves2012sampling} it follows that $\lim_{t\to\infty}\sigma_t^2=\sigma^2_{\infty}$
exists and $\sigma_{\infty}$ satisfies~\eqref{eq:amp_fp}. 
The
proof is completed by the application of the following Claim, which allows us to compute infimum over $\theta$ in closed
form.
\end{proof}
\begin{claim}
\label{claim:amp_psi_opt}
\begin{equation}
\label{eq:amp_psi_opt}
M\inf_{\theta}\psi(\tau,\theta,M)=1-\frac{1}{1+\tau} \frac{1}{\left((M-1)\left(\frac{1}{\tau}+1\right)\right)^{\tau}}
\end{equation}
\end{claim}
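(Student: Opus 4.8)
The plan is a direct computation: make the scalar error probability $\psi(\tau,\theta,M)$ explicit as a function of the threshold $\theta$, and then minimize over $\theta>0$.

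First I would condition on whether $X=0$ in the scalar channel $V=X+\sigma W$ with $\sigma^2=\tau$. If $X=0$ (probability $1-\tfrac1M$) then $V\sim\cn(0,\tau)$, so $|V|^2/\tau$ is a unit-mean exponential and $\Pb{|V|^2>\theta}=e^{-\theta/\tau}$; the estimator $\hat S_0=1[|V|^2>\theta]$ errs precisely on this event. If $X\neq0$ (probability $\tfrac1M$) then $X\sim\cn(0,1)$ is independent of $W$, so $V\sim\cn(0,1+\tau)$ and the error event $\{\hat S_0=0\}=\{|V|^2\le\theta\}$ has probability $1-e^{-\theta/(1+\tau)}$. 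Combining,
\[
M\psi(\tau,\theta,M)=(M-1)e^{-\theta/\tau}+1-e^{-\theta/(1+\tau)}\,.
\]

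Next I would optimize over $\theta>0$. Differentiating in $\theta$ and equating to zero gives $\tfrac{M-1}{\tau}e^{-\theta/\tau}=\tfrac{1}{1+\tau}e^{-\theta/(1+\tau)}$; taking logarithms this is linear in $\theta$, with the unique root $\theta^\star=\tau(1+\tau)\ln\!\big((M-1)(\tfrac1\tau+1)\big)$. Writing $C=(M-1)(\tfrac1\tau+1)$, one gets $e^{-\theta^\star/\tau}=C^{-(1+\tau)}$ and $e^{-\theta^\star/(1+\tau)}=C^{-\tau}$, hence
\[
M\psi(\tau,\theta^\star,M)=(M-1)C^{-(1+\tau)}+1-C^{-\tau}=1+C^{-\tau}\big((M-1)C^{-1}-1\big)=1-\frac{1}{1+\tau}\,C^{-\tau},
\]
where the last step uses the cancellation $(M-1)C^{-1}=\tfrac{\tau}{1+\tau}$, so that $(M-1)C^{-1}-1=-\tfrac1{1+\tau}$. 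Since $C^{-\tau}=\big((M-1)(\tfrac1\tau+1)\big)^{-\tau}$, this is exactly the claimed identity.

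Finally, I would verify that $\theta^\star$ is the global minimizer rather than a spurious stationary point. The map $\theta\mapsto M\psi(\tau,\theta,M)$ is smooth on $(0,\infty)$, tends to $M-1$ as $\theta\to0^+$ and to $1$ as $\theta\to\infty$, and $\theta^\star$ is its only stationary point; since the value $1-\tfrac1{1+\tau}C^{-\tau}$ there is strictly below both limiting values (using $M\ge2$, which guarantees $C>1$ and $M-1\ge1$), $\theta^\star$ is the global minimum. There is no real obstacle here — the computation is elementary; the only points needing a little care are the sign bookkeeping in the cancellation above and the degenerate case $M=1$ (a singleton support set), for which the statement is vacuous.
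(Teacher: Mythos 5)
Your proof is correct and follows essentially the same route as the paper: write $\psi$ explicitly using the exponential law of $|V|^2$ given $S_0$, differentiate in $\theta$, and substitute the unique stationary point $\theta^\star=\tau(1+\tau)\ln\bigl((M-1)(\tfrac1\tau+1)\bigr)$. You add the (brief, correct) verification that this stationary point is in fact the global minimizer by checking the boundary limits, a step the paper leaves implicit.
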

\begin{proof}
Let us define $\tau=\sigma^2$. We have
\begin{\Ieee}{LLL}
\psi(\tau,\theta,M)&=&\Pb{S_0\neq \hat S_0}\\
&=& \frac{1}{M}\Pb{\hat S_0=0|S_0=1}+\\
&&\left(1-\frac{1}{M}\right)\Pb{\hat S_0=1|S_0=0}\\
\end{\Ieee}

Now conditioned on $S_0=1$, $|V|^2\distas{}(1+\tau)\mathrm{Exp}(1)$ and conditioned on $S_0=0$, $|V|^2\distas{}\tau \mathrm{Exp}(1)$ where $\mathrm{Exp}(1)$ is the Exponential distribution with density function $p(x)=e^{-x}1[x\geq 0]$.
Hence
\begin{equation}
\label{eq:amp_psi}
\psi(\tau,\theta,M)=\frac{1}{M}\left(1-e^{-\frac{\theta}{1+\tau}}\right)+\left(1-\frac{1}{M}\right)e^{-\frac{\theta}{\tau}}
\end{equation}
The claim follows by optimizing \eqref{eq:amp_psi} over $\theta$. The optimum occurs at 
$$\theta^*=\tau(1+\tau)\ln\left(\frac{1+\tau}{\tau}(M-1)\right)$$
 Substituting $\theta^*$ in \eqref{eq:amp_psi} proves the claim.
\end{proof}
\subsection{Converse}

In this section we derive a converse for $\mathcal{E}^*$, based on the Fano inequality and the results from \citep{polyanskiy2011minimum}.

\begin{theorem}
Let $M$ be the codebook size. Given $\epsilon$ and $\mu$, let $S=\mu \log M$. Then assuming that the distribution of $|H|^2$ has a density with $\Ex{|H|^2}=1$ and $\Ex{|H|^4}< \infty$, $\mathcal{E}^*(M,\mu,\epsilon)$ satisfies the following two bounds
\begin{enumerate}
\item \begin{\Ieee}{LLL}
\label{eq:sc_conv_2}
\mathcal{E}^*(M,\mu,\epsilon)\geq \inf \frac{\PT}{S}\Ieeen
\end{\Ieee}
where infimum is taken over all $\PT>0$ that satisfies 
\begin{\Ieee}{LLL}
\label{eq:sc_conv_3}
\theta S -\epsilon\mu \log\left( 2^{S/\mu}-1\right) -\mu h_2(\epsilon)\leq\\
 \log\left(1+\PT\alpha\left(1-\theta,1\right)\right),\,\forall \theta\in [0,1]\Ieeen
\end{\Ieee}
where $\alpha(a,b)=\int_{a}^{b}F_{|H|^2}^{-1}(1-\gamma)d\gamma$, and $F_{|H|^2}$ is the CDF of squared absolute value of the fading coefficients.

\item \begin{\Ieee}{LLL}
\label{eq:sc_conv_4}
\mathcal{E}^*(M,\mu,\epsilon)\geq \inf \frac{\PT}{S}\Ieeen
\end{\Ieee}
where infimum is taken over all $\PT>0$ that satisfies 
\begin{\Ieee}{LLL}
\label{eq:sc_conv_5}
\epsilon \geq 1-\Ex{ Q\left(Q^{-1}\left(\frac{1}{M}\right)-\sqrt{\frac{2\PT}{\mu}|H|^2}\right)}\\
\Ieeen
\end{\Ieee}
where $\mathcal{Q}$ is the complementary CDF function of the standard normal distribution.
\end{enumerate}

\begin{proof}

First, we use the Fano inequality.

Let $W=(W_1,...,W_K)$, where $W_i\distas{iid}Unif[M]$ denote the sent messages of $K$ users. Let $X=(X_1,...,X_K)$ where $X_i\in\mathbb{C}^n$ be the corresponding codewords, $Y\in\mathbb{C}^n$ be the received vector. Let $\hat{W}=\left(\hat{W}_1,...\hat{W}_K\right)$ be the decoded messages. Then $W\to X\to Y\to\hat{W}$ forms a Markov chain. Then $\epsilon=P_e=\frac{1}{K}\sum_{i\in[K]}\Pb{W_i\neq \hat{W}_i}$.

Suppose a genie $G$ reveal a set $S_1\subset [K]$ for transmitted messages $W_{S_1}=\{W_i:i\in S_1\}$ and the corresponding fading coefficients $H_{S_1}$ to the decoder. So, a converse bound in the Genie case is a converse bound for our problem (when there is no Genie). Further, the equivalent channel at the receiver is 
\begin{equation}
\label{eq:sc_conv_fan1}
Y_G=\sum_{i\in S_2}H_i X_i +Z
\end{equation}
where $S_2=[K]\setminus S_1$, and the decoder outputs a $[K]$ sized tuple. So, PUPE with Genie is given by 
\begin{equation}
\label{eq:sc_conv_fan2}
P_e^G=\frac{1}{K}\sum_{i\in[K]}\Pb{W_i\neq \hat{W}^G_i}.
\end{equation}

Now, it can be seen that the optimal decoder must have the codewords revealed by the Genie in the corresponding locations in the output tuple, i.e., if $\hat{W}^G$ denotes the output tuple (in the Genie case), for $i\in S_1$, we must have that $W_i=\hat{W}^G_i$. Otherwise, PUPE can be strictly decreased by including these Genie revealed codewords. 

So, letting $E_i=1[W_i\neq \hat{W}^G_i]$ and $\epsilon_i^G=\Ex{E_i}$, we have that $\epsilon_i^G=0$ for $i\in S_1$. For $i\in S_2$, a Fano type argument gives
\begin{\Ieee}{LLL}
\label{eq:sc_conv_fan3}
I(W_i;\hat{W}^G_i)\geq \log M -\epsilon_i^G\log(M-1)-h_2(\epsilon_i^G).\Ieeen
\end{\Ieee}

So, using the fact that 
\begin{\Ieee}{LLL}
\sum_{i\in S_2}I(W_i;\hat{W}^G_i)&\leq & I(W_{S_2};\hat{W}^G_{S_2})\\
&\leq & n\Ex{\log(1+P\sum_{i\in S_2}|H_i|^2)}
\end{\Ieee}
 we have
\begin{\Ieee}{LLL}
\label{eq:sc_conv_fan4}
|S_2|\log M -\sum_{i\in S_2}\epsilon_i^G \log(M-1)-\sum_{i\in S_2}h_2(\epsilon_i^G)\\
\leq  n\Ex{\log(1+P\sum_{i\in S_2}|H_i|^2)}.\Ieeen
\end{\Ieee}

By concavity of $h_2$, we have 
\begin{equation}
\label{eq:sc_conv_fan5}
\frac{1}{K}\sum_{i\in S_2}h_2(\epsilon_i^G)=\frac{1}{K}\sum_{i\in [K]}h_2(\epsilon_i^G)\leq h_2(P_e^G).
\end{equation}

Hence we get 
\begin{\Ieee}{LLL}
\label{eq:sc_conv_fan6}
\frac{|S_2|}{K}\log M -P_e^G \log(M-1)-h_2(P_e^G)\\
\leq  \frac{n}{K}\Ex{\log(1+P\sum_{i\in S_2}|H_i|^2)}.\Ieeen
\end{\Ieee}

Next, notice that $P_e^G\leq P_e\leq 1-\frac{1}{M}$ and hence $P_e^G \log(M-1)+h_2(P_e^G)\leq P_e \log(M-1)+h_2(P_e)$. Further the inequality above hols for all $S_2\subset [K]$ (which can depend of $H_{[K]}$ as well). Hence, letting $|S_2|=\theta K$
\begin{\Ieee}{LLL}
\label{eq:sc_conv_fan7}
\theta\log M -P_e \log(M-1)-h_2(P_e)\\
\leq \frac{1}{\mu}\Ex{\log\left(1+\inf_{S_2:|S_2|=\theta K} \frac{\PT}{K}\sum_{i\in S_2}|H_i|^2\right)}.\Ieeen
\end{\Ieee}

Now, taking limit as $K\to \infty$ and using results on strong laws of order statistics \citep[Theorem 2.1]{van1980strong}, we get that 
\begin{\Ieee}{LLL}
\label{eq:sc_conv_fan8}
\log\left(1+\inf_{S_2:|S_2|=\theta K} \frac{\PT}{K}\sum_{i\in S_2}|H_i|^2\right)\\
\to \log\left(1+\PT\alpha(1-\theta,1)\right).\Ieeen
\end{\Ieee}

For any $a,b\in [0,1]$ with $a<b$, let $S_K\equiv S_K(a,b)=\frac{1}{K}\sum_{i=aK}^{bK}|H_{(i)}|^2$. Note that $S_K\to \alpha(a,b)$ as $K\to\infty$. Then 
\begin{\Ieee}{LLL}
\label{eq:sc_conv_fan9}
\Ex{S_K^2}&\leq & \Ex{\left(\frac{1}{K}\sum_{i=1}^{K}|H_i|^2\right)^2}\\
&=& 1+\frac{\Ex{|H|^4}-1}{K}\leq \Ex{|H|^4}.\Ieeen
\end{\Ieee}

Hence the family of random variables $\{S_K:K\in \mathbb{N}\}$ is \textit{uniformly integrable}. Further 
$$0\leq \log(1+\PT S_K)\leq \PT S_K.$$
 Hence the family $\{\log(1+\PT S_K):K\geq 1\}$ is also uniformly integrable. Then from theorem~\citep[Theorem 9.1.6]{rosenthal2006first}, 
 $$\Ex{\log(1+\PT S_K)}\to \log(1+\PT\alpha(a,b)).$$
  Using this in \eqref{eq:sc_conv_fan7} with $a=1-\theta$ and $b=1$, we obtain \eqref{eq:sc_conv_3}.

Next we use the result from \citep{polyanskiy2011minimum} to get another bound.

Using the fact that $S/\mu$ bits are needed to be transmitted under a per-user error of $\epsilon$, we can get a converse on the minimum $E_b/N_0$ required by deriving the corresponding results for a single user quasi-static fading MAC. In \citep{polyanskiy2011minimum}, the authors gave the following non-asymptotic converse bound on the minimum energy required to send $k$ bits for an AWGN channel. Consider the single user AWGN channel $Y=X+Z$, $Y,X\in\mathbb{R}^\infty$, $Z_i\distas{iid}\mathcal{N}(0,1)$. Let $M^*(E,\epsilon)$ denote the largest $M$ such that there exists a $(E,M,\epsilon)$ code for this channel: codewords $(c_1,...,c_M)$ with $\norm{c_i}^2\leq E$ and a decoder such that probability of error is smaller than $\epsilon$. The following is a converse bound from \citep{polyanskiy2011minimum}.

\begin{lemma}[\citep{polyanskiy2011minimum}]
\label{lem:conv_awgn}
Any $(E,M,\epsilon)$ code satisfies 
\begin{\Ieee}{LLL}
\label{eq:sc_con1}
\frac{1}{M}\geq Q\left(\sqrt{2E}+Q^{-1}\left(1-\epsilon\right)\right)\\\Ieeen
\end{\Ieee}
\end{lemma}

Translating to our notations, for the channel $Y=HX+Z$, conditioned on $H$, if $\epsilon(H)$ denotes the probability of error for each realization of $H$, then we have 

\begin{\Ieee}{LLL}
\label{eq:sc_con2}
\frac{1}{M}\geq Q\left(\sqrt{\frac{2\PT}{\mu}|H|^2}+Q^{-1}\left(1-\epsilon(H)\right)\right).\Ieeen
\end{\Ieee}

Further $\Ex{\epsilon(H)}=\epsilon$. Therefore we have
\begin{\Ieee}{LLL}
\label{eq:sc_con3}
\epsilon \geq 1-\Ex{ Q\left(Q^{-1}\left(\frac{1}{M}\right)-\sqrt{\frac{2\PT}{\mu}|H|^2}\right)}.\Ieeen
\end{\Ieee}

Hence we have the required converse bound.

\begin{remark}
We also get the following converse from \citep[theorem 7]{yang2014quasi} by taking the appropriate limits $P=\frac{\PT}{\mu n}$ and $n\to \infty$.
\begin{\Ieee}{LLL}
\label{eq:sc_con4}
\log M\leq -\log\left(\Ex{\mathcal{Q}\left(\frac{c+\frac{\PT |H|^2}{\mu}}{\sqrt{\frac{2\PT |H|^2}{\mu}}}\right)}\right)\Ieeen
\end{\Ieee}
where $c$ satisfies
\begin{\Ieee}{LLL}
\label{eq:sc_con5}
\Ex{\mathcal{Q}\left(\frac{c-\frac{\PT |H|^2}{\mu}}{\sqrt{\frac{2\PT |H|^2}{\mu}}}\right)}=1-\epsilon.\Ieeen
\end{\Ieee}

But this is strictly weaker than \eqref{eq:sc_con3}. This is because, using lemma \ref{lem:conv_awgn}, we perform hypothesis testing (in the meta-converse) for each realization of $H$ but in the bound used in \citep{yang2014quasi}, hypothesis testing is performed over the joint distribution (including the distribution of $H$). This is to say that if $H$ is presumed to be constant (and known), then in \eqref{eq:sc_con4} and \eqref{eq:sc_con5} we can remove the expectation over $H$ and this gives precisely the same bound as \eqref{eq:sc_con2}.
\end{remark}

\end{proof}

\end{theorem}

Bounds tighter than \eqref{eq:sc_conv_3} can be obtained if further assumptions are made on the codebook. For instance, if we assume that each codebook consists of iid entries of the form $\frac{C}{K}$ where $C$ is sampled from a distribution with zero mean and finite variance, then using ideas similar to \citep[Theorem 3]{reeves2013approximate} we have the following converse bound.

\begin{theorem}
\label{Th:conv2}
Let $M$ be the codebook size, and let $\mu n$ users ($\mu<1$) generate their codebooks independently with each code symbol iid of the form $\frac{C}{K}$ where $C$ is of zero mean and variance $\PT$. Then in order for the iid codebook to achieve PUPE $\epsilon$ with high probability, the energy-per-bit $\mathcal{E}$ should satisfy
\begin{\Ieee}{LLL}
\label{eq:sc_conv_new1}
\mathcal{E}\geq \inf \frac{\PT}{\mu\log M}\Ieeen
\end{\Ieee}
where infimum is taken over all $\PT>0$ that satisfies 
\begin{\Ieee}{LLL}
\label{eq:sc_conv_new2}
\ln M - \epsilon \ln(M-1)-h(\epsilon)\\
\leq  \left(M\mathcal{V}\left(\frac{1}{\mu M},\PT\right)-\mathcal{V}\left(\frac{1}{\mu},\PT\right)\right) \Ieeen
\end{\Ieee}
where $\mathcal{V}$ is given by \citep{reeves2013approximate}
\begin{\Ieee}{LLL}
\mathcal{V}(r,\gamma)&=& r\ln\left(1+\gamma-\mathcal{F}(r,\gamma)\right)+\ln\left(1+r\gamma-\mathcal{F}(r,\gamma)\right)\\
&&-\frac{\mathcal{F}(r,\gamma)}{\gamma}\label{eq:sc_conv_new3a}\Ieeen\\
\mathcal{F}(r,\gamma)&=&\frac{1}{4}\left(\sqrt{\gamma\left(\sqrt{r}+1\right)^2+1}-\sqrt{\gamma\left(\sqrt{r}-1\right)^2+1}\right)^2\label{eq:sc_conv_new3b}\\
\Ieeen\\
\end{\Ieee}

\end{theorem}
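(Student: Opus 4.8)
The plan is to bound a single mutual information both ways --- below by Fano, above using only the second-order statistics of the i.i.d.\ codebook --- which is exactly the strategy of~\citep[Thm.~3]{reeves2013approximate}; the function $\mathcal{V}$ of~\eqref{eq:sc_conv_new3a} enters as the Shannon transform of a Marchenko--Pastur law.

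\emph{Step 1 (Fano; reduction to the support).} Write the channel in the compressed-sensing form $Y=AU+Z$ of~\eqref{eq:cs1}, where $A\in\mathbb{C}^{n\times N}$ with $N=KM=\mu nM$ is the random concatenated codebook (i.i.d.\ entries scaled so each codeword has power $\approx nP=\PT/\mu$), $U$ has exactly one non-zero entry, distributed $\cn(0,1)$, per block of length $M$, so that the support $S$ of $U$ is in bijection with the message vector $W$ while the non-zeros are the fading gains, and $Z\distas{}\cn(0,I_n)$. Fix a realization $A=a$. Since the decoder is $\hat W=g(Y,a)$ and, conditionally on $A=a$, $S-Y-\hat W$ is a Markov chain while $W$ and $S$ are in bijection, data processing gives $I(W;\hat W\mid A=a)\le I(S;Y\mid A=a)$. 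If moreover this codebook achieves PUPE $\le\epsilon$, then per-user Fano in nats together with concavity of $x\mapsto h(x)+x\ln(M-1)$ and Jensen (as in the Fano step of the converse~\eqref{eq:sc_conv_3}) yields $I(W;\hat W\mid A=a)\ge K\big(\ln M-\epsilon\ln(M-1)-h(\epsilon)\big)$. Hence it suffices to show that, for almost every codebook $a$ in the i.i.d.\ ensemble, $\tfrac1K I(S;Y\mid A=a)\le M\,\mathcal{V}\big(\tfrac1{\mu M},\PT\big)-\mathcal{V}\big(\tfrac1{\mu},\PT\big)+o(1)$.

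\emph{Step 2 (spectral upper bound on $I(S;Y\mid A=a)$).} Write $I(S;Y\mid A=a)=h(Y\mid A=a)-h(Y\mid S,A=a)$. Marginalizing $U$ over $S$ and over the $\cn(0,1)$ fading values gives $\mathrm{Cov}(U)=\tfrac1M I_N$, hence $\mathrm{Cov}(Y\mid A=a)=\tfrac1M aa^{*}+I_n$ and, by the Gaussian maximum-entropy inequality, $h(Y\mid A=a)\le h(Z)+\ln\det\!\big(I_n+\tfrac1M aa^{*}\big)$. On the other hand, conditioned on $S=s$ the channel is $Y=a_sV+Z$ with $V\distas{}\cn(0,I_K)$ ($a_s$ the submatrix of the $K$ active columns of $a$), so $h(Y\mid S=s,A=a)=h(Z)+\ln\det\!\big(I_n+a_sa_s^{*}\big)$ \emph{exactly}, whence $h(Y\mid S,A=a)=h(Z)+\Exx{S}{\ln\det(I_n+a_Sa_S^{*})}$. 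Subtracting,
\[
 I(S;Y\mid A=a)\ \le\ \ln\det\!\Big(I_n+\tfrac1M aa^{*}\Big)\ -\ \Exx{S}{\ln\det\big(I_n+a_Sa_S^{*}\big)}.
\]
The mechanism is transparent: revealing $S$ turns the channel into a bona fide Gaussian channel whose output entropy is exactly a log-det and is strictly smaller than the (maximum-entropy-bounded) marginal output entropy, and $S$ can carry no more than the gap.

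\emph{Step 3 (random-matrix limits) and conclusion.} Write $a=\sqrt{\PT/K}\,\tilde a$ with $\tilde a$ i.i.d.\ unit-variance, so $\tfrac1M aa^{*}=\PT\cdot\tfrac1N\tilde a\tilde a^{*}$ and $a_Sa_S^{*}=\PT\cdot\tfrac1K\tilde a_S\tilde a_S^{*}$. By the almost-sure convergence of the Shannon transform under the Marchenko--Pastur law (universal; only $\mathrm{Var}(C)<\infty$ is needed), $\tfrac1n\ln\det(I_n+\tfrac1M aa^{*})\to\mu M\,\mathcal{V}(\tfrac1{\mu M},\PT)$ a.s.; likewise $\tfrac1n\ln\det(I_n+a_sa_s^{*})\to\mu\,\mathcal{V}(\tfrac1{\mu},\PT)$ a.s.\ for each fixed $s$, and since all $a_S$ are equidistributed the average $\tfrac1n\Exx{S}{\ln\det(I_n+a_Sa_S^{*})}$ has the same a.s.\ limit, where $\mathcal{V}(r,\gamma)$ is precisely the Marchenko--Pastur Shannon transform (aspect ratio $r$, SNR $\gamma$) with the closed form~\eqref{eq:sc_conv_new3a}--\eqref{eq:sc_conv_new3b} recorded in~\citep{reeves2013approximate}. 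Multiplying Step 2 by $\tfrac nK$ and letting $n\to\infty$ gives $\tfrac1K I(S;Y\mid A=a)\le M\mathcal{V}(\tfrac1{\mu M},\PT)-\mathcal{V}(\tfrac1{\mu},\PT)+o(1)$ for a.e.\ $a$, which combined with Step 1 forces~\eqref{eq:sc_conv_new2}; since $\mathcal{E}=\PT/(\mu\log M)$ the theorem follows by minimizing over all $\PT$ satisfying~\eqref{eq:sc_conv_new2}. The single non-elementary ingredient is the a.s.\ log-det limit of Step 3 --- everything else is an exact one-line entropy computation --- and that is where the machinery of~\citep{reeves2013approximate} is borrowed; this is the step I expect to be the main obstacle, in particular controlling the finite-$n$ fluctuations of the two spectral statistics, which is what produces the $o(1)$ rather than an exact identity. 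Finally, the sharpening over~\eqref{eq:sc_conv_3} is due entirely to the subtracted $\mathcal{V}(\tfrac1{\mu},\PT)$: it discounts the part of the output entropy already explained by the $K$ Gaussian fading values on the revealed support, which an i.i.d.\ codebook cannot help ``spending''.
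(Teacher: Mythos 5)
Your proposal reproduces the paper's argument: the decomposition $I(S;Y\mid A)=h(Y\mid A)-h(Y\mid S,A)$ is algebraically the same as the paper's $I(\boldsymbol{H}W;Y\mid A)-I(\boldsymbol{H}W;Y\mid A,W)$, the Gaussian max-entropy bound is the paper's supremum over second-moment-constrained $U$, your exact evaluation of $h(Y\mid S,A)$ as a Gaussian log-det is the same quantity the paper reaches via a pseudoinverse/data-processing step (which is in fact tight for Gaussian fading), and the Marchenko--Pastur Shannon-transform limit is the same invocation of~\citep{reeves2013approximate}. So this is essentially the paper's proof; your only deviation is computing the conditional output entropy directly rather than routing through the DPI inequality, which is a cosmetic simplification and does not change the bound.
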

\begin{proof}[Proof sketch]
The proof is almost the same as in \citep[Theorem 3]{reeves2013approximate} (see \citep[Remark 3]{reeves2013approximate} as well). We will highlight the major differences here. First, our communication system can be modeled as a support recovery problem as follows. Let $A$ be the $n\times KM$ matrix consisting of $n\times M$ blocks of codewords of users. Let $\boldsymbol{H}$ be the $KM\times KM$ block diagonal matrix with block $i$ being a diagonal $M\times M$ matrix with all diagonal entries being equal to $H_i$. Finally let $W\in\{0,1\}^{KM}$ with $K$ blocks of size $M$ each and within each $M$ sized block, there is exactly one $1$. So the position of $1$ in block $i$ of $W$ denotes the message or codeword corresponding to the user $i$ which is the corresponding column in block $i$ of matrix $A$. Hence our channel can be represented as 
\begin{equation}
\label{eq:sys_comp_sens}
Y=A\boldsymbol{H}W+Z
\end{equation}
with the goal of recovering $W$.

Next the crucial step is bound $R^K (\epsilon,M)$ in \eqref{eq:sc_conv_fan3} as 
\begin{\Ieee}{LLL}
\label{eq:sc_conv_new4}
R^K(\epsilon,M)&\leq & I(W;Y|A)\\
&=& I(\boldsymbol{H}W;Y|A)-I(\boldsymbol{H}W;Y|A,W)\Ieeen
\end{\Ieee}
where the equality in the above display follows from \citep[equation (78)]{reeves2013approximate}. The first term in above display is bounded as
\begin{\Ieee}{LLL}
\label{eq:sc_conv_new5}
 I(\boldsymbol{H}W;Y|A=A_1) &=& I(\boldsymbol{H}W;A_1 HW+Z ) \\
 &\leq & \sup_{U} I(U;A_1 U+Z) \Ieeen
\end{\Ieee}
where $A_1$ is a realization of $A$ and supremum is over random vectors $U\in \mathbb{C}^{KM}$ such that $\Ex{U}=0$ and $\Ex{UU^*}=\Ex{(\boldsymbol{H}W)(\boldsymbol{H}W)^* }=\frac{\Ex{|H_1|^2}}{M}I_{KM\times KM}$. Now similar to \citep{reeves2013approximate}, the supremum is achieved when 
$$U\distas{}\cn\left(0,\frac{\Ex{|H_1|^2}}{M}I_{KM\times KM}\right).$$
 Hence
\begin{\Ieee}{LLL}
\label{eq:sc_conv_new6}
I(\boldsymbol{H}W;Y|A=A_1)\leq \log \det \left(I_{n\times n}+\frac{1}{M}AA^*\right).\Ieeen
\end{\Ieee}

Next, for any realization $A_1$ and $W_1$ of $A$ and $W$ respectively, we have
\begin{\Ieee}{LLL}
\label{eq:sc_conv_new7}
&&I(\boldsymbol{H}W;Y|A=A_1,W=W_1)\\
&=&I(\boldsymbol{H}W_1;A_1\boldsymbol{H}W_1+Z)\\
&=& I(\tilde{\boldsymbol{H}};(A_1)_{W_1}\tilde{\boldsymbol{H}}+Z)\\
&\geq & I(\tilde{\boldsymbol{H}};\tilde{\boldsymbol{H}}+(A_1)_{W_1}^{\dagger}Z)\Ieeen
\end{\Ieee}
where $\tilde{\boldsymbol{H}}=[H_1,...,H_{K}]^T$ and $(A_1)_{W_1}$ is the $n\times K$ submatrix of $A_1$ formed by columns corresponding to the support of $W_1$ and $\dagger$ denotes the Moore-Penrose inverse (pseudoinverse). The last equality in the above follows from the data processing inequality. Now, by standard mutual information of Gaussians, we have 
\begin{\Ieee}{LLL}
\label{eq:sc_conv_new7a}
I(\tilde{\boldsymbol{H}};\tilde{\boldsymbol{H}}+(A_1)_{W_1}^{\dagger}Z)\\
=\log\det\left(I_{K\times K}+\left((A_1)_{W_1}\right)^*(A_1)_{W_1}\right)\Ieeen.
\end{\Ieee}

Hence
\begin{\Ieee}{LLL}
\label{eq:sc_conv_new7b}
I(\boldsymbol{H}W;Y|A,W)= \Ex{\log\det \left(I_{K\times K}+A^{*}_{W}A_{W}\right)}.\Ieeen
\end{\Ieee}

Hereafter, the we can proceed similarly to the proof of \citep[Theorem 3]{reeves2013approximate} using results from random-matrix theory \citep{verdu1999spectral,salo2006asymptotic} to finish the proof.
\end{proof}

We remark here that for a general fading distribution, the term $I(\tilde{\boldsymbol{H}};\tilde{\boldsymbol{H}}+(A_1)_{W_1}^{\dagger}Z)$ can be lower bounded similar to the proof of \citep[Theorem 3]{reeves2013approximate} using EPI (and its generalization \citep{zamir1993generalization}) to get
\begin{\Ieee}{LLL}
\label{eq:sc_conv_new8}
I(\tilde{\boldsymbol{H}};\tilde{\boldsymbol{H}}+\left((A_1)_{W_1}\right)^{\dagger}Z)\\
\geq K\log \left(1+N_{H}\left(\det\left(\left((A_1)_{W_1}\right)^*(A_1)_{W_1}\right)\right)^{\frac{1}{K}}\right)\Ieeen
\end{\Ieee}
where $N_H=\frac{1}{\pi e}\exp(h(H))$ is the entropy power of fading distribution. Hence
\begin{\Ieee}{LLL}
\label{eq:sc_conv_new9}
I(\boldsymbol{H}W;Y|A,W)\\
\geq K\Ex{\log \left(1+N_{H}\left(\det\left(A^{*}_{W}A_{W}\right)\right)^{\frac{1}{K}}\right)}.\Ieeen
\end{\Ieee}

Again, we can use results from random-matrix theory  \citep{salo2006asymptotic} and  proceed similarly to the proof of \citep[Theorem 3]{reeves2013approximate} to get a converse bound with the second term in \eqref{eq:sc_conv_new2} replaced by $\mathcal{V}_{LB}\left(\frac{1}{\mu},\PT\right)$ and 
\begin{equation}
\label{eq:sc_conv_new10}
\mathcal{V}_{LB}(r,\gamma)=\ln\left(1+\gamma r \left(\frac{r}{r-1}\right)^{r-1}\frac{1}{e}\right)
\end{equation}

We make a few observations regarding the preceding theorem. First and foremost, this hold only for the case of no-CSI because the term analogous to $I(\boldsymbol{H}W;Y|A,W)$ in the case of CSIR is $ I(\boldsymbol{H}W;Y|A,\boldsymbol{H},W)$ which is zero. Next, it assumes that the codebooks have iid entries with variance scaling $\Theta(1/n)$. This point is crucial to lower bounding $I(\boldsymbol{H}W;Y|A,W)$, and this is where a significant improvement comes when compared to \eqref{eq:sc_conv_3}. Indeed, EPI and results from random matrix theory give $O(n)$ lower bound for $I(\boldsymbol{H}W;Y|A,W)$. This once again brings to focus the the difference between classical regime and the scaling regime, where in the former, this term is negligible. Further this leaves open the question of whether we could improve performance in the high-density of users case by using non-iid codebooks.

 Now, as to what types of codebooks give a $\Theta(n)$ lower bound for $I(\boldsymbol{H}W;Y|A,W)$, a partial answer can be given by carefully analyzing the full proof of the theorem. In particular, if $\mathcal{S}=\mathrm{supp}W$ i.e, the support of $W$, then as seen from \citep[equation (85)]{reeves2013approximate}, any non zero lower bound on $\det(A_{\mathcal{S}}^{*}A_{\mathcal{S}})^{1/K}$ in the limit is enough. So if the matrix $A_{\mathcal{S}}^{*}A_{\mathcal{S}}$ possesses strong diagonal dominance then it is possible to have such a non zero lower bound on $\det(A_{\mathcal{S}}^{*}A_{\mathcal{S}})^{1/K}$ for every $S$ \citep{li2012some}. These could be ensured by having codewords that are overwhelmingly close to orthogonal.

\section{Numerical evaluation and discussion}\label{sec:numerical}
In this section, we provide the results of numerical evaluation of the bounds in the paper. We focus on the trade-off of user density $\mu$ with the minimum energy-per-bit $\mathcal{E}^*$ for a given message size $k$ and target probability of error $P_e$. 

For $k=100$ bits, we evaluate the trade-off from the bounds in this paper for $P_e=0.1$ and $P_e=0.001$ in figures
\ref{fig:3} and \ref{fig:4} respectively. For TDMA, we split the frame of length $n$ equally among $K$ users, and
compute the smallest $P_{tot}$ the ensures the existence of a single user quasi-static AWGN code of rate $S$,
blocklength $\frac{1}{\mu}$ and probability of error $\epsilon$ using the bound from \citep{yang2014quasi}. The
simulations of the single user bound is performed using codes from \citep{Spectre}. TIN is computed using a method
similar to theorem \ref{th:scaling_CSIR}. In particular, the codeword of user $i$ is decoded as $\hat c_i=\arg\min_{c'\in \cd_i}\norm{Y-H_i c_i}^2$ where we assume that the decoder has the knowledge of CSI. The analysis proceeds in a similar way as theorem \ref{th:scaling_CSIR}.

\textit{Achievability bounds.} It can be seen that for small $\mu$ the scalar-AMP bound of Theorem~\ref{th:amp_ach} is
better than the projection decoder bounds of Theorems~\ref{th:scaling_CSIR} and~\ref{th:scaling_noCSI}. The latter
bounds have another artifact. For example, the no-CSI bound on $\mathcal{E}^*$ from Theorem~\ref{th:scaling_noCSI}
increases sharply as $\mu\downarrow 0$, in fact one can show that the said bound behaves as
$\mathcal{E}=\Omega(\sqrt{-\ln \mu})$.

\textit{Engineering insights.}
From these figures, we clearly observe the \textit{perfect MUI cancellation} effect mentioned in the introduction and
previously observed for the non-fading model~\citep{polyanskiy2017perspective,polyanskiy2018_course}. Namely, as
$\mu$ increases from $0$, the $\mathcal{E}^*$ is almost a constant, $\mathcal{E}^*(\mu,\epsilon,k) \approx
\mathcal{E}_{\mathrm{s.u.}}(\epsilon,k)$ for $0<\mu < \mu_{\mathrm{s.u.}}$. As $\mu$ increases beyond
$\mu_{\mathrm{s.u.}}$ the tradeoff undergoes a ``phase transition'' and the
energy-per-bit $\mathcal{E}^*$ exhibits a more familiar increase with $\mu$. Further,
standard schemes for multiple-access like TDMA and TIN do not have this behavior. Moreover, although these suboptimal
schemes have an optimal trade-off at $\mu\to 0$ they show a significant suboptimality at higher $\mu$. We note again
that this perfect MUI cancellation which was observed in standard GMAC
\citep{polyanskiy2017perspective,polyanskiy2018_course} is also present in the more practically relevant quasi-static
fading model. So, we suspect that this effect is a general characteristic of the many-user MAC. 

\textit{Suboptimality of orthogonalization.}
The fact that orthogonalization is not optimal is one of the key practical implications of our work. It
was observed before in the GMAC and here we again witness it in the more relevant QS-MAC. How to understand this suboptimality? First, in the fading case we have already seen this effect even in the classical regime (but under PUPE) -- see \eqref{eq:TDMA_ebn0}. To give another intuition we consider a $K=\mu n$ user binary adder MAC

\begin{equation}
\label{eq:adder_MAC}
Y=\sum_{i=1}^{K}X_i
\end{equation}
where $X_i\in \{0,1\}$ and addition is over $\mathbb{Z}$. Now, using TDMA on this channel, each user can send at most $n/K=1/\mu$ bits. Hence the message size is bounded by

\begin{equation}
\label{eq:msg_TDMA}
\log M\leq \frac{1}{\mu}.
\end{equation}

Next, let us consider TIN. Assume $X_i\distas{}\mathrm{Ber}(1/2)$. For user 1, we can treat $\sum_{i=2}^{\mu n}X_i$  as noise. By central limit theorem, this noise can be approximated as $\sqrt{\frac{1}{4}\mu n}Z$ where $Z\distas{}N(0,1)$. Thus we have a binary input AWGN (BIAWGN) channel
\begin{equation}
\label{eq:BIAWGN}
Y=X_1+\sqrt{\frac{1}{4}\mu n}Z.
\end{equation}
Therefore, the message size is bounded as
\begin{\Ieee}{LLL}
\label{eq:msg_TIN}
\log M &\leq n C_{\mathrm{BIAWGN}}\left(1+\frac{4}{\mu n}\right)\\
& \leq \frac{n}{2}\log \left(1+\frac{4}{\mu n}\right) \to \frac{2}{\mu \ln 2}\Ieeen
\end{\Ieee}
where $C_{\mathrm{BIAWGN}}$ is the capacity of the BIAWGN channel. Note that in both the above schemes the achievable message size is a constant as $n\to\infty$.

On the other hand, the true sum-capacity of the $K$-user adder MAC is given by
$$ C_{sum} = \max_{X_1 , \cdots , X_K} H(X_1 + \cdots +X_K)\,.$$
As shown in~\citep{shepp1981entropy} this maximum as achieved at $X_i \stackrel{iid}{\sim} \mathrm{Ber}(1/2)$. Since the 
the entropy of binomial distributions~\citep{knessl1998integral} can be computed easily, we obtain
	$$ C_{sum} = \frac{1}{2} \log K + o(\log K)\,.$$
In particular, for our many-user MAC setting we obtain from the Fano inequality (and assuming PUPE is small)
$$ \log M \lessapprox \frac{\log (\mu n)}{2\mu }\,.$$ 
Surprisingly, there exist explicit codes that achieve this limit and with a very low-complexity (each message bit is
sent separately),-- a construction rediscovered several times~\citep{lindstrom1965combinatorial,cantor1966determination,khachatrian1995new}. 
Hence the optimal achievable message size is
\begin{equation}
\label{eq:msg_lindstrom}
\log M \approx \frac{\log n}{2\mu}\to\infty
\end{equation}
as $n\to\infty$. And again, we see that TDMA and TIN are severely suboptimal for the many-user adder MAC as well.

 \begin{figure}[h]
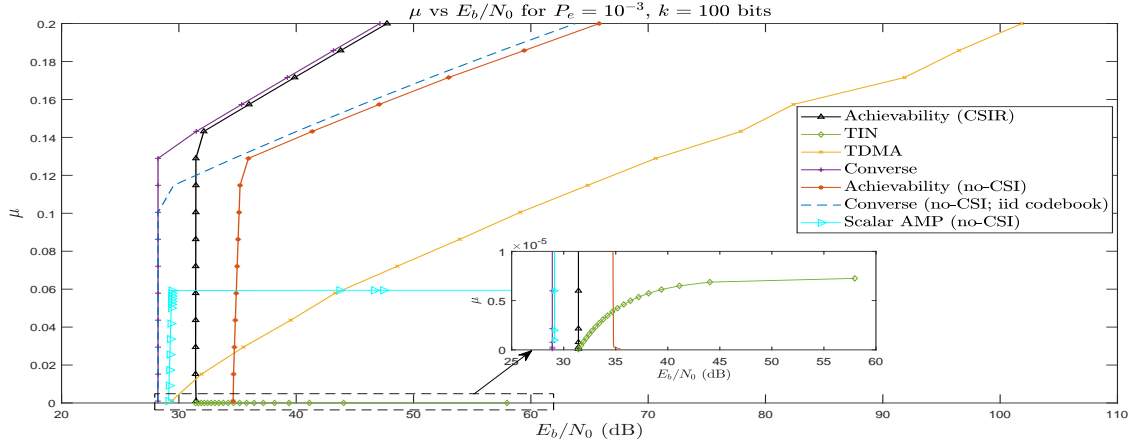

  \begin{center}
     \includegraphics *[height=5.5cm,width=\columnwidth]{{TIT_ebn0_mu_pe_0.001_k_100_new2020}}
     \caption {$\mu$ vs $E_b/N_0$ for $\epsilon\leq 10^{-3}$, $k=100$}
     \label{fig:4}
      \end{center} 
 \end{figure}

 \begin{figure}[h]
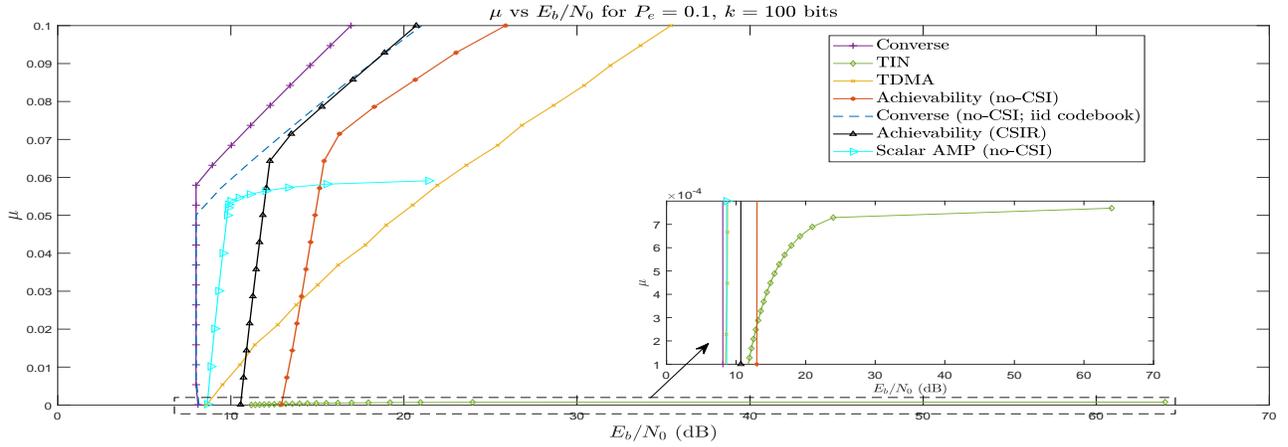

  \begin{center}
    \includegraphics *[height=5.5cm,width=\columnwidth]{{TIT_ebn0_mu_pe_0.1_k_100_new2020_1}}
     \caption {$\mu$ vs $E_b/N_0$ for $\epsilon\leq 10^{-1}$, $k=100$}
     \label{fig:3}
      \end{center} 
 \end{figure}

\subsection{The ``curious behavior'' in phase transition}\label{sec:curious}

As we emphasized in the Introduction, the most exciting conclusion of our work is the existence of the almost  vertical
part on the $\mu$ vs $\frac{E_b}{N_0}$ plots of Fig.~\ref{fig:3} and~\ref{fig:4}. In this section we want to explain how this effect arises, why it can be
called the ``almost perfect MUI cancellation'' and how it relates (but is not equivalent) to well-known phase transitions in compressed sensing.

To make things easier to evaluate, however, we depart from the model in the previous sections and do two relaxations.
First, we consider a non-fading AWGN. Second, we endow all users with the same codebook. The second assumption simply
means that the decoding from now on is only considered up to permutation of messages,
see~\cite{polyanskiy2017perspective} for more on this. Technically, these two assumptions mean that we are considering a
model~\eqref{eq:cs1} with $U$ vector that is $K$-sparse (as opposed to block-sparse) and that all non-zero entries of
$U$ are equal to 1. Finally, we will consider the real-valued channel. In all, we get the following signal
model~\cite[Section IV]{ZPT-isit19}:
\begin{equation}
\label{eq:awgn_cs_model}
Y^n=A U^p+Z^n\,, \qquad Z^n\distas{}\mathcal{N}(0,I_n)\,,
\end{equation}
with $A_{i,j} \stackrel{iid}{\sim}\mathcal{N}(0,b^2/n)$, $(i,j) \in {[n]\times [p]}$, $U_i \stackrel{iid}{\sim}\text{Ber}(K/p)$, so that $\EE[\|U^p\|_0] = K$. We take the proportional scaling limit with $K=\mu n$ and $p=KM$. 
Interpretation of these parameters in the context of communication problem are:
\begin{itemize}
\item $M$ as the number of messages that each user wants to communicate 
\item $\mu$ is the user density per (real) degree of freedom
\item $\PT \eqdef b^2\mu$ is the total received power from all $K$ users at the receiver. 
\item We also introduce \textit{the spectral efficiency (in nats)} as \begin{equation}
\label{eq:spec_eff_true}
S_e \eqdef \mu Mh\left(\frac{1}{M}\right) = \mu (\ln M - 1) + O(\frac{\mu}{M})
\end{equation}
where $h(x)=-x\ln x-(1-x)\ln(1-x)$ is the binary entropy function in nats. Note that $S_e = \frac{1}{n} H(U^p)$.
\item Consequently, we may define energy-per-bit as\footnote{\txb{This definition is slightly different, but appears more reasonable than the one in \citep{ZPT-isit19}}} \txb{$$\frac{E_b}{N_0}\triangleq \frac{\PT\ln 2}{2S_e}\,.$$}
\end{itemize}

Given $(Y^n,A)$, the decoder outputs an estimate $\hat{U}^p\in\{0,1\}^p$ with $\Ex{\|\hat{U}^p\|_0}=K$ and we are
interested in the minimal achievable PUPE, or
\begin{\Ieee}{LLL}
\label{eq:opt_pupe0}
P_e^*(\mu,M,b) \\
\eqdef \limsup_{n\to\infty} \min_{\hat U^p: \EE[\|\hat U\|_0] = K} \frac{1}{K}\sum_{i\in
[p]}\Pb{U_i=1, \hat{U}_i=0} \Ieeen
\end{\Ieee}
To discuss performance of the optimal decoder, we need to return to the scalar
channel~\eqref{eq:amp_scalar} with the following modifications: $X\distas{}\mathrm{Ber}(1/M)$,
$W\distas{}\mathcal{N}(0,1)$. Now, for every value of $\sigma$ in~\eqref{eq:amp_scalar} we may ask for the smallest
possible error $\epsilon^*(\sigma,M) = \min \Pb{\hat{X}\neq X}$ where minimum is taken over all estimators
$\hat{X}=\hat{X}(V)$ such that $\Pb{\hat{X}=1}=\frac{1}{M}$. As discussed in \citep[Section IV.B]{ZPT-isit19}, this
minimal $\epsilon^*(\sigma,M)$ satisfies \citep{ZPT-isit19} is found from solving: 
 \begin{equation}
 \label{eq:opt_pupe1}
\frac{1}{\sigma}=\cQ^{-1}\left(\frac{\epsilon^*}{M-1}\right)+\cQ^{-1}\left(\epsilon^*\right)
 \end{equation}
where $\cQ(\cdot)$ is the complementary CDF of the standard normal distribution.

Now the limit $P_e^*$ in~\eqref{eq:opt_pupe0} can be computed via the replica method.\footnote{Note that in~\cite{reeves2019replica,barbier2016mutual} it was shown that the replica-method
prediction is correct for estimating $I(U_i; Y^n, A)$ and $\Var[U_i|Y^n,A]$, but what we need for computing the
$P_e$ is asymptotic distribution of a random variable $\Pb{U_i=1|Y^n,A}$. 
First, it is known that AMP initialized at the true value $U$ converges to an asymptotically MMSE-optimal estimate.
Second, distribution of the AMP estimates are known to belong to a $P_{X|V}$ in~\eqref{eq:amp_scalar} (with $\sigma$
identified from the replica method). Finally, 
any asymptotically MMSE-optimal estimator $\hat U$ should satisfy $\hat U_i \stackrel{(d)}{\to} \EE[U_i|Y^n,A] =
\Pb{U_i=1|Y^n,A}$, and thus $\Pb{U_i=1|Y^n,A}$ should match the replica-method predicted one.}
 Namely, replica predictions tell us that 
$$ P_e^*(b,\mu) = \epsilon^*(\sigma, M)\,,$$
where $\sigma^2=\frac{1}{\eta^* b^2}$ and the {\it multiuser efficiency} 
$$\eta^* = \eta^*(M,b,\mu)$$ is given by~\citep{guo2005randomly,guo2009single,ZPT-isit19} \txb{
 \begin{equation}
 \label{eq:MU_eff}
 \eta^*\equiv \eta_M^*\in \arg\min_{\eta\in[0,1]}\tilde{\cF}_M(\eta; b^2, \mu)
 \end{equation}
 where 
 \begin{equation}
 \tilde{\cF}_M(\eta; b^2, \mu)=\frac{I_M\left(\frac{\mu}{\eta \PT}\right)}{h(1/M)}+\frac{1}{2S_e}\phi(\eta)
 \end{equation}
where $\phi(x)=x-1-\ln x$ and} $I_M(\sigma^2)=I(X;X+\sigma W)$ is the mutual information between the signal $X\distas{}\mathrm{Ber}(1/M)$ and observation in the scalar channel~\eqref{eq:amp_scalar} \txb{in nats}.

 In the figure \ref{fig:5} we have shown the plots of optimal PUPE $P_e$ for the model \eqref{eq:awgn_cs_model} versus $E_b/N_0$ for various values of $\mu$ when $M=2^{100}$, computed via replica predictions. What is traditionally referred to as the phase transition in compressed sensing is the step-function drop from $P_e \approx 1$ to $P_e \ll 1$. However, there is a \textit{second effect} here as well. Namely that all the curves with different $\mu$ seem to have a \textit{common envelope}. The former has not only been observed in compressed sensing, e.g.~\cite[Fig.1]{reeves2012sampling,reeves2013approximate} and~\cite[Fig.4]{barbier2012compressed} among others, but also in a number of other inference problems: randomly-spread CDMA~\cite{guo2005randomly}, LDPC
 codes~\cite{vicente2003low} and random SAT~\cite{kirkpatrick1994critical}. However, the second effect appears to be a rather different phenomenon, and in fact it is exactly the one that corresponds to the existence of the vertical part of the curves on Fig.~\ref{fig:4}-\ref{fig:3}.

 \begin{figure}[h]
  \begin{center}
	 \includegraphics *[height=5.5cm,width=\columnwidth]{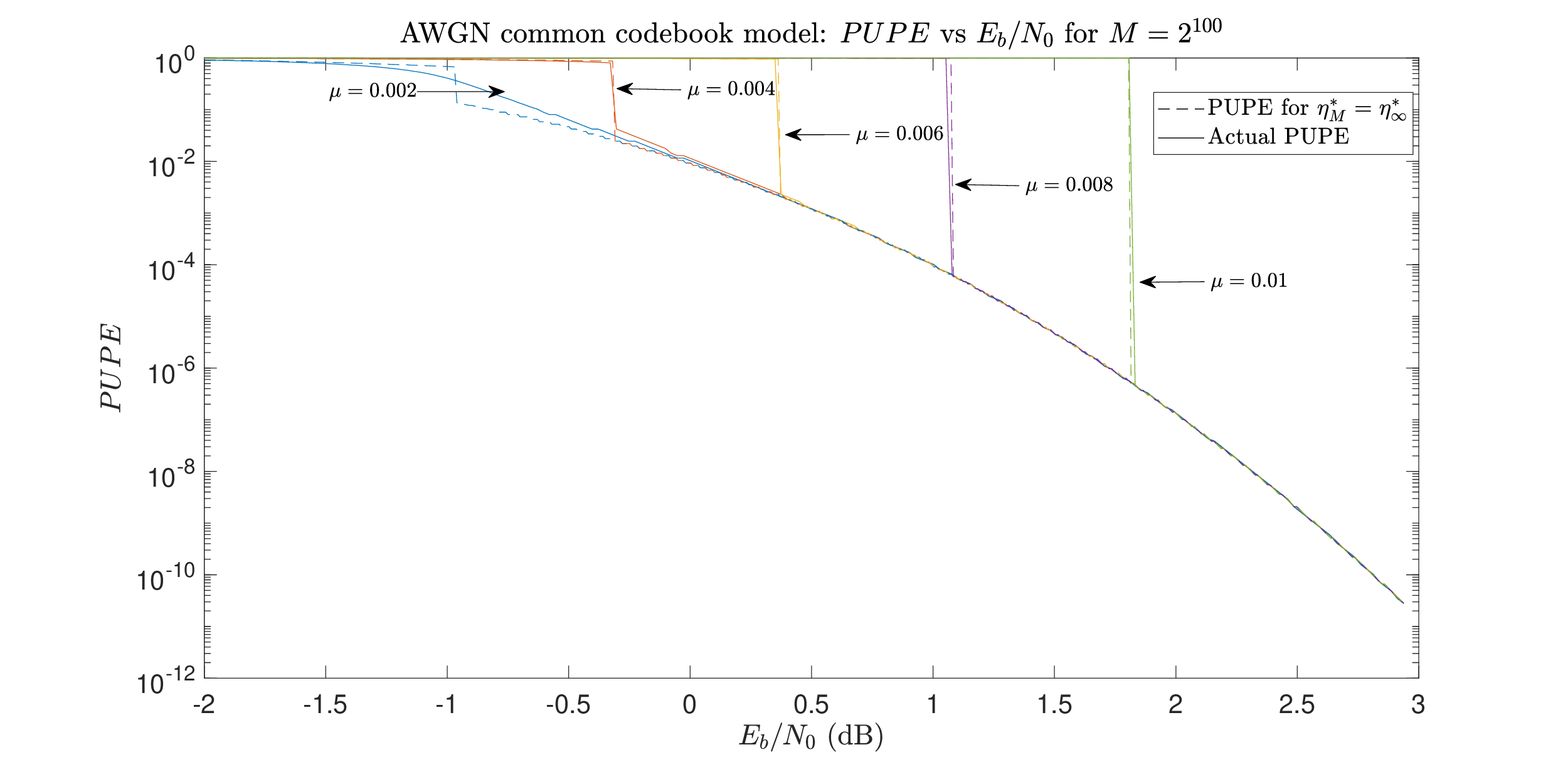}    
     \caption[]{\txb{AWGN same codebook model: $P_e$ vs $E_b/N_0$ for $M=2^{100}$. The solid lines correspond to using $\eta^*\equiv \eta_M^*$ from \eqref{eq:MU_eff} in~\eqref{eq:opt_pupe1}. The dashed lines correspond to  $\eta^*_{\infty}$ defined in \eqref{eq:eta_inf}.}}
     \label{fig:5}
      \end{center} 
 \end{figure}

Let us, for the moment, assume that the envelope is actually exactly the same for all $\mu$. Fix a value of PUPE $P_e=10^{-3}$ (say) and consider how the intercept of the horizontal line at $P_e=10^{-3}$ on Fig.~\ref{fig:5} changes with $\mu$. It is easy to see that as long as the value of $\mu$ is small enough the intercept will not be moving (corresponding to constancy of the $E_b/N_0$ as a function of $\mu$). However, once the value of $\mu$ exceeds a value (dependent on the fixed value of $P_e$) the intercept starts moving to the right together with the step-drop portion of the curves. From this we conclude that indeed, existence of the (almost) common envelope on Fig.~\ref{fig:5} results in the (almost) vertical part on Fig.~\ref{fig:4}-\ref{fig:3}. (As a side note, we also note that since the slanted portions of the tradeoff curves on those figures correspond to the vertical drop on the Fig.~\ref{fig:5} and hence the slanted portion is virtually independent of the fixed value of $P_e$ -- as predicted by~\eqref{eq:ebno_approx}.) 

How can the curves have common envelope? Notice that in the expression for $P_e^*$ only $\eta^*$ is a function of $\mu$. Thus, we conclude that for small $\mu$ we must have $\eta^*(\mu)\approx \mathrm{const}$. But as $\mu\to 0$ we should get a $\eta^*\to 1$. Thus we see that common envelopes are only possible if to the right of the step-drops on Fig.~\ref{fig:5} we get $\eta^*\approx 1$. Is that indeed so? Fig.~\ref{fig:6} provides an affirmative answer.

In reality, the ``vertical part'' is not truly vertical and the common envelope is not exactly common. In truth the right portions of the curves on Fig.~\ref{fig:5} (following the drop) are all very slightly different, but this difference is imperciptible to the eye (and irrelevant to an engineer). What makes them so close is the incredible degree of sparsity $\frac{1}{M} = 2^{-100}$. Indeed, as \txb{Fig.~\ref{fig:7}} demonstrates that as $M\to\infty$ the value of $\eta^*$ to the right of the step transition approaches 1.

To summarize, we conclude that what determines our ``curious behavior'' is not a sudden change in the estimation performance (typically credited as ``phase transition'' in compressed sensing), but rather a more subtle effect arising in the super-low sparsity limit: the step-transition of the parameter $\eta^*$ from a moderate value in the interior of $(0,1)$ to a value close to $1$. The fact that only the incredibly low sparsity values $\frac{1}{ M}$ are relevant for the many-MAC problems makes this new effect practically interesting.

\txb{To study $\eta^*$ in the limit of large $M$, we apply a clever observation of~\cite{reeves2019allB,reeves2019allB_arxiv}: in the limit of $M\to \infty$ (low sparsity), instead of using parameters $(M, \mu, b^2)$ it is better to use $(M, \PT, S_e)$. Holding $\PT$ and $S_e$ constant implies that 
\begin{align*}
\mu&=\frac{S_e}{Mh(1/M)} = \frac{S_e}{\ln M}\cdot (1+o(1))\\
   b^2 &= \frac{\PT {Mh(1/M)}}{S_e} = \frac{\PT}{S_e}\ln M \cdot (1+o(1))
\end{align*}
as $M\to \infty$. Under this scaling of $\mu$ and $b^2$ we get~\citep[Section 3]{reeves2019allB_arxiv}, for every $\eta\in(0,1]$ we have
\begin{align}
\label{eq:rescaled_replica_cgt}
\lim_{M\to\infty}\tilde{\cF}_M(\eta)&=\tilde{\cF}_{\infty}(\eta)\\
\label{eq:rescaled_replica_infty}
\tilde{\cF}_{\infty}(\eta)&\eqdef \min\left(1,\frac{\PT}{2S_e}\eta\right)+\frac{1}{2S_e}\phi(\eta)
\end{align}
Furthermore, for $S_e \neq \frac{1}{2}\ln(1+P_{tot})$ we have 
\begin{equation}
\label{eq:eta_inf}
\eta^*_M \to \eta_{\infty}^*=\begin{cases}
\frac{1}{1+\PT}, & \frac{1}{2}\ln(1+\PT)<S_e \\
1, & \frac{1}{2}\ln(1+\PT)>S_e
\end{cases}
\end{equation}

Thus, we see that (at least in this regime of $\mu$ and $b^2$) indeed the $\eta_M^*$ jumps from $<1$ to $\approx 1$ the energy-per-bit crosses over a threshold. Furthermore, in figures~\ref{fig:5},~\ref{fig:6} and \ref{fig:7} we show that $\eta_{\infty}^*(\PT,S_e)$ gives an excellent agreement with $\eta^*(M,\mu,b^2)$. 
}

\txb{
Incidentally, under this scaling of $\mu\to 0$ and $b^2\to \infty$ with a fixed $\PT$ and $S_e$, it turns out that PUPE experiences a step-transition from $0$ to $1$ upon crossing the ``slanted part'' of the $S_e$ vs $E_b/N_0$ diagram. This step transition is dubbed all-or-nothing phenomenon and was indeed the subject of~\cite{reeves2019allB}. 

To see that notice that~\eqref{eq:opt_pupe1} corresponds to
\begin{\Ieee}{LLL}
\label{eq:opt_pupe2}
\sqrt{2\eta_M^* Mh(1/M) \frac{E_b/N_0}{\ln 2}}&=&\cQ^{-1}\left(\frac{\epsilon^*}{M-1}\right)+\cQ^{-1}\left(\epsilon^*\right)\\
\Ieeen
\end{\Ieee}
Next, using the approximation 
\begin{equation}
\label{eq:qinv_appr}
\cQ^{-1}(\delta)\approx \sqrt{2\ln\frac{1}{\delta}-\ln\left(4\pi\ln\frac{1}{\delta}\right)}
\end{equation}
one can check that 
$$ \lim_{M\to \infty} \epsilon^*\to \begin{cases} 0, & \eta_{\infty}^* E_b/N_0>\ln 2\\
1, & \eta_{\infty}^* E_b/N_0<\ln 2
\end{cases}\,,$$
which incidentally corresponds to the two cases $\eta_{\infty}^*\frac{E_b}{N_0} \lessgtr -1.59~dB$. 
From~\eqref{eq:eta_inf} we get
$$ \lim_{M\to \infty} \epsilon^*\to \begin{cases} 0, & \frac{1}{2}\ln(1+\PT)>S_e\\
							1, & \frac{1}{2}\ln(1+\PT)<S_e
							\end{cases}\,,$$

In all, we see that the ``mother'' effect is that of $\eta^*$ jumping from $<1$ to $\approx 1$ in the regime of low sparsity (large $M$). It causes both the ``curious behavior'' for us, and the all-or-nothing for sparse regression.

Finally we note that when $\eta_M^*\approx 1$, then from~\eqref{eq:opt_pupe2} the asymptotic expansion of $\ln(M-1)$ as $\frac{\PT}{\mu}\to\infty$ (and fixed $\epsilon$) matches with~\citep[Theorem 3]{polyanskiy2011minimum} up to the $O(1)$ term, thus showing that the replica prediction for the location of the vertical part matches with the finite-blocklength prediction in~\eqref{eq:E_su}.
}

 \begin{figure}[h]
  \begin{center}
     \includegraphics *[height=5.5cm,width=\columnwidth]{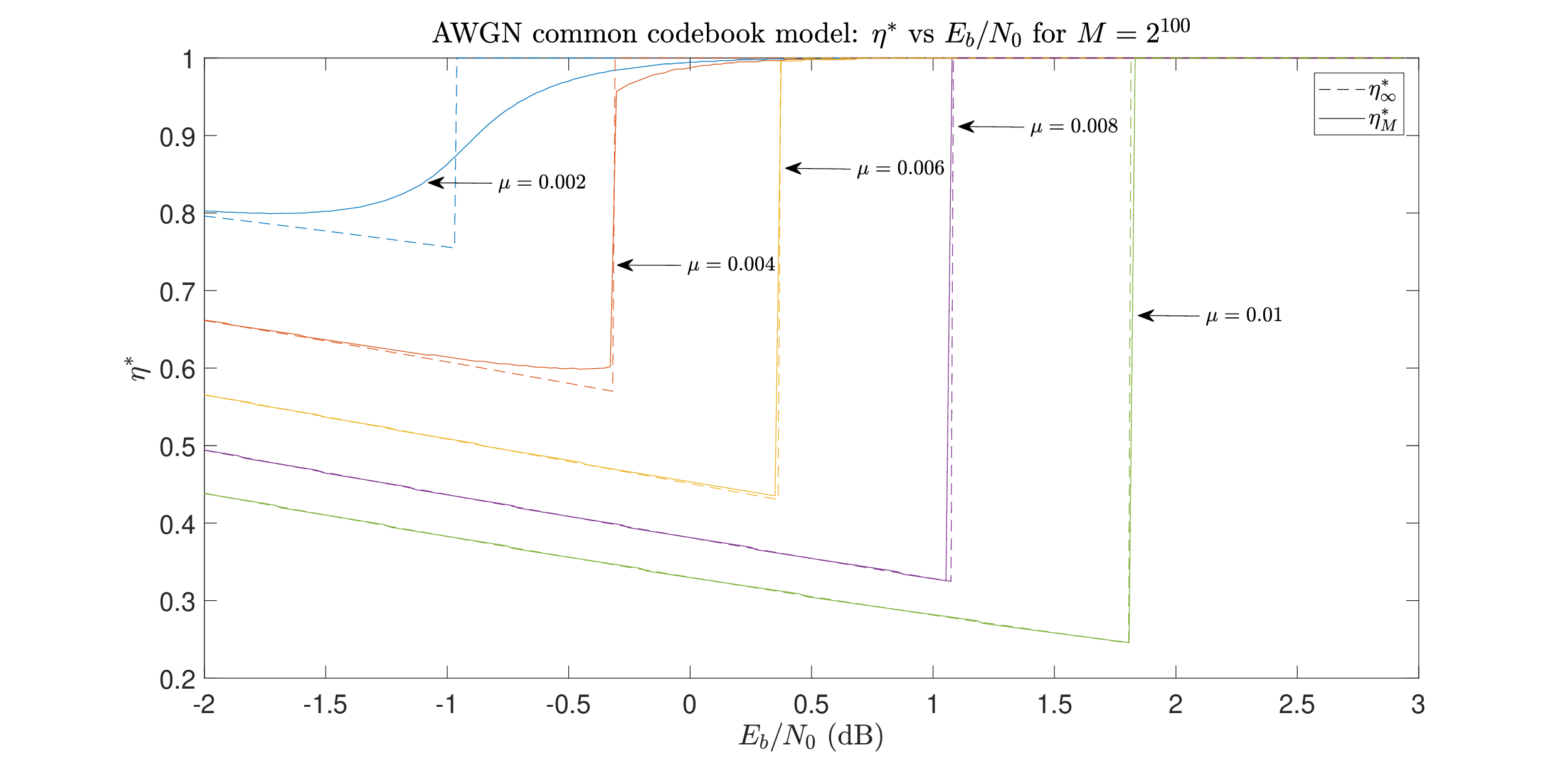}
     \caption[]{\txb{AWGN same codebook model: $\eta^*$ vs $E_b/N_0$ for $M=2^{100}$. The solid lines correspond to $\eta^*\equiv \eta_M^*$ from \eqref{eq:MU_eff}. The dashed lines correspond to  $\eta^*_{\infty}$ defined in \eqref{eq:eta_inf}.}}
     \label{fig:6}
      \end{center} 
 \end{figure}

 \begin{figure}[h]
  \begin{center}
     \includegraphics *[height=5.5cm,width=\columnwidth]{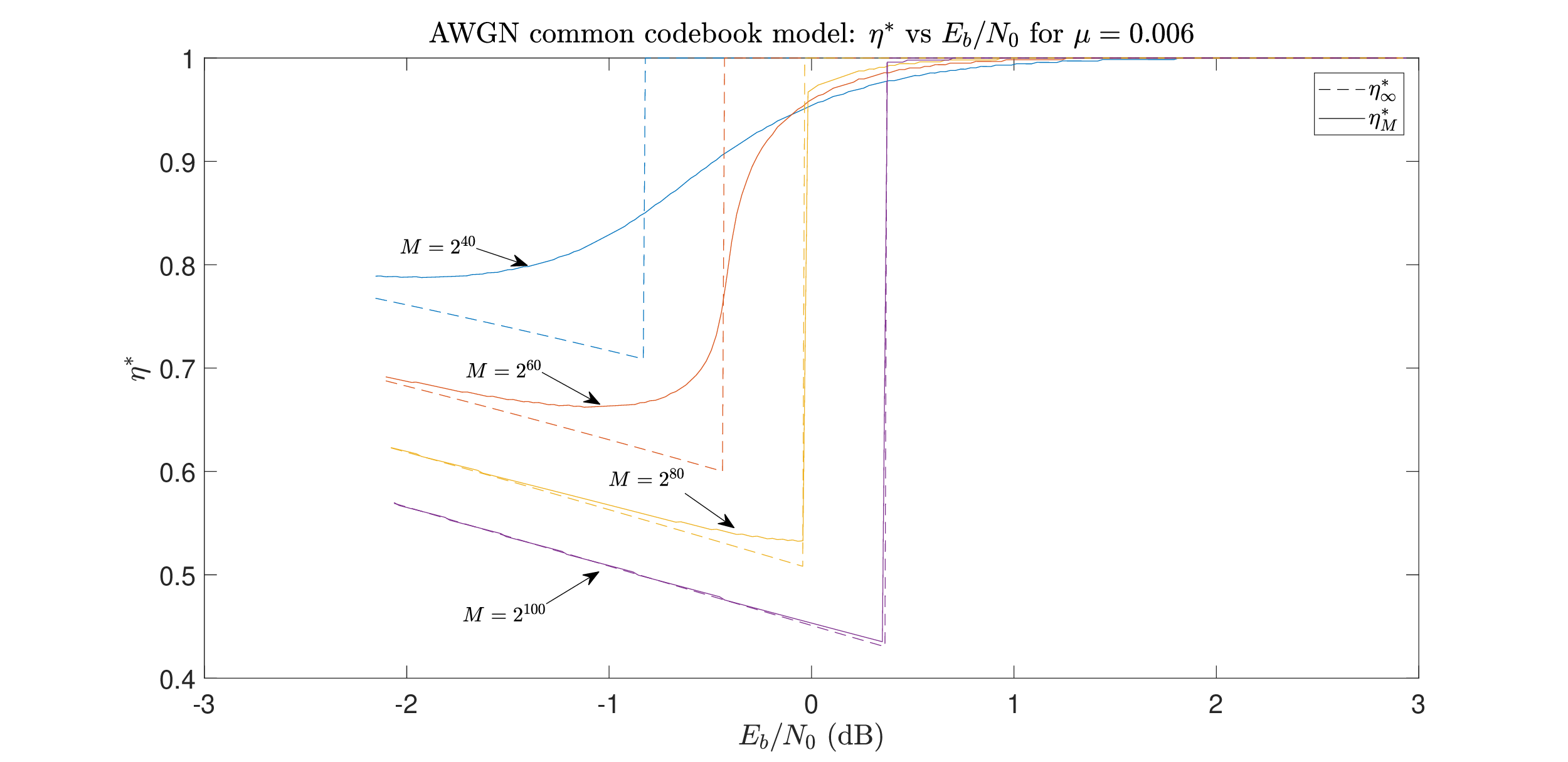}
     \caption[]{\txb{AWGN same codebook model: $\eta^*$ vs $E_b/N_0$ for $\mu=0.006$. The solid lines correspond to $\eta^*\equiv \eta_M^*$ from \eqref{eq:MU_eff}. The dashed lines correspond to  $\eta^*_{\infty}$ defined in \eqref{eq:eta_inf}.}}
     \label{fig:7}
      \end{center} 
 \end{figure}

\subsection{Future work}
There are a lot of interesting directions for future work. A natural extension, already undertaken for the setting of
random access~\cite{fengler2019massive}, would be to analyze the many-user massive
MIMO fading channel with receiver having $N>1$ antennas under different fading scenarios (block-fading and quasi-static
fading would probably be most relevant). Further, different asymptotic scalings of $N$ and $n$ may lead to radically
different tradeoffs. Another interesting direction is to see how much improvements would the (so far incomputable)
vector-AMP (and replica-method) yield over the bounds presented in this work. Note that for practical systems the
asymptotic limit $n\to\infty$ is less relevant than finite-$n$ bounds. However, AMP based bounds are inherently
asymptotic. In this regard, it would be interesting to derive finite-blocklength versions of achievability bounds.
Theorems~\ref{th:scaling_noCSI} and~\ref{th:scaling_CSIR} should be possible to ``de-asymptotize''. Similarly, recent
bounds based on Gaussian processes~\cite{ZPT-isit19} should be extendable to both the quasi-static fading and finite
blocklength. From a practical standpoint, the most pressing issue is finding \textit{any} architecture for the many-MAC
setting that would have an approximately perfect MUI cancellation (i.e. have vertical part in the $\mathcal{E}$ vs $\mu$
tradeoff). One imagines such a system should be comprised of a message-passing decoding alternating between interference
cancellation and signal re-estimation -- as for example proposed in~\cite{caire2001optimal}. 


\pagebreak
\appendices
\section{Proofs of section \ref{sec:classical}}
\label{app:1}
\subsection{Joint error}
\begin{proof}[Proof of theorem \ref{th:cap_joint}]
Let 
$$R=(R_1,...,R_K)\in C_{\epsilon,J}.$$
 We need to show that there exists a sequence of $$\left((M^{(n)}_1,M^{(n)}_2,...,M^{(n)}_K),n,\epsilon_n\right)_J$$ codes with projection decoding, such that
\begin{IEEEeqnarray}{LL}
\liminf\limits_{n\to\infty}\frac{1}{n}\log\left(M^{(n)}_i\right)\geq R_i, \forall i\in[K] \label{eq:1a}\\
\limsup\limits_{n\to\infty} \epsilon_n\leq \epsilon\label{eq:1b}.
\end{IEEEeqnarray}

Let $\eta_i>0,i\in[K]$. Choose $$M_i^{(n)}=\lfloor 2^{n(R_i-\eta_i)}\rfloor, \forall i\in [K].$$ We use random coding: user $j$, independently generates $M_j^{(n)}$ vectors, each independently and uniformly distributed on the $\sqrt{nP}$--complex sphere. That is $X_i\distas{iid}\mathrm{Unif}\left(\sqrt{nP}\cs^{n-1}\right)$. Hence the channel inputs  are given by $$X^{(n)}_i\distas{iid}\left(\sqrt{nP}\cs^{n-1}\right).$$ We will drop the superscript $n$ for brevity.

Suppose the codewords $(c_1,c_2,...,c_K)\in \prod_{i=1}^K\cd_i$ were actually sent. Then by \eqref{eq:dec1}, error occurs iff $\exists (c'_1,c'_2,...,c'_K)\in \cd_1 \times \cd_2 ... \times \cd_K $ such that $(c'_1,c'_2,...,c'_K)\neq (c_1,c_2,...,c_K)$ and 
\begin{equation}
\label{eq:4}
\norm{P_{c'_1,...,c'_K}Y}^2\geq\norm{P_{c_1,...,c_K}Y}^2.
\end{equation}
This can be equivalently written as follows. Let $S\subset [K]$ be such that 
\begin{IEEEeqnarray}{C}
\label{eq:5}
\IEEEyesnumber
i\in[S] \iff \hat{c}_i\neq c_i
\end{IEEEeqnarray}
where $(\hat{c_i})_{i=1}^K$ denote the decoded codewords.

Let $c_{[S]}\equiv \{c_i:i\in[S]\}$. Then, error occurs iff $\exists S\subset [K]$ and $S\neq \emptyset$, and $\exists \{c'_i:i\in [S], c'_i\neq c_i\}$ such that 
\begin{equation}
\label{eq:6}
\norm{\Pd{S}{S^c}Y}^2\geq\norm{\Pc{[K]}Y}^2.
\end{equation}

Let $B_S=\left\{\norm{\Pd{S}{S^c}Y}^2\geq\norm{\Pc{[K]}Y}^2\right\}$ (here primes denote unsent codewords i.e., $c'_i$ here means that it is independent of the channel inputs/output and distributed with the same law as $c_i$). Note that, for the sake of brevity, we are suppressing the dependence on $c'$.

So, the average probability of error is given by

\begin{IEEEeqnarray*}{LLL}
\label{eq:7}
\epsilon_n& =&\Prb{\bigcup_{\substack{ S\subset [K]\\ S\neq \emptyset}}\bigcup_{\substack{\{c'_i\in \cd_i:\\ i\in S, c'_i\neq c_i\}}} B_S}\\
&=&\Prb{\bigcup_{t\in[K]}\bigcup_{ \substack{S\subset [K]\\ |S|=t}}\bigcup_{\substack{\{c'_i\in \cd_i:\\ i\in S, c'_i\neq c_i\}}}B_S}\IEEEyesnumber
\end{IEEEeqnarray*}

Using ideas similar to the Random Coding Union (RCU) bound \citep{polyanskiy2010channel}, we have
\begin{\Ieee}{LLL}
\label{eq:8}
\epsilon_n &\leq & \Exa{\min\biggr\{1,\sum_{t\in[K]}\sum_{S\subset[K]:|S|=t}\left(\prod_{j\in S}(M_j-1)\right)\cdot\\
&&\Prb{B_S\left. \right|c_{[K]},H_{[K]},Z}\biggr\}}\Ieeen
\end{\Ieee}
where $H_{[K]}=\{H_i:i\in[K]\}$. 

From now on we denote 
$$\bigcup_{t\in[K]}\bigcup_{ \substack{S\subset [K]\\ |S|=t}}\equiv \bigcup_{t,S}$$
$$\sum_{t\in[K]}\sum_{S\subset[K]:|S|=t}\equiv \sum_{t,S}$$
$$\bigcap_{t\in[K]}\bigcap_{ \substack{S\subset [K]\\ |S|=t}}\equiv\bigcap_{t,S}.$$

\begin{claim}
\label{claim:1}
For $t\in[K]$ and $S\subset[K]$ with $|S|=t$, 
\begin{\Ieee}{LLL}
\label{eq:9}
\Prb{\norma{\Pd{S}{S^c}Y}^2>\norm{\Pc{[K]}Y}^2\biggr|c_{[K]},H_{[K]},Z}\\
=F\biggr(\frac{\norm{Y}^2-\norm{\Pc{[K]}Y}^2}{\norm{Y}^2-\norm{\Pc{S^c}Y}^2};n-K,t\biggr)\Ieeen
\end{\Ieee}
where $F(x;a,b)$ is the cdf of beta distribution $Beta(a,b)$. Further, from \citep{yang2014quasi}, we have 
\begin{equation}
\label{eq:10}
F(x;n-K,t)\leq (n-K+t-1)^{t-1}x^{n-K}
\end{equation}
\begin{proof}
 See proof of claim \ref{claim:sc_nc1}.
\end{proof}
\end{claim}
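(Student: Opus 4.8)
The plan is to condition on $(c_{[K]}, H_{[K]}, Z)$ — which freezes the received vector $Y$ — and study the conditional law of $\norm{\Pd{S}{S^c}Y}^2$ generated purely by the randomness of the substituted (wrong) codewords $\{c'_i : i\in S\}$, which are i.i.d.\ uniform on $\sqrt{nP}\,\cs^{n-1}$ and independent of the conditioning. First I would decompose the subspace onto which $\Pd{S}{S^c}$ projects: writing $A_1=\{c_i:i\in S^c\}$, $B_1=\{c'_i:i\in S\}$, $V=\mathrm{span}(A_1\cup B_1)$, $A=\mathrm{span}(A_1)$, and $B=V\cap A^\perp$, we obtain an orthogonal split $V=A\oplus B$ with (a.s., since the codewords have a continuous law and $K\le n$) $\dim A=|S^c|=K-t$ and $\dim B=t$. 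Since $P_A=\Pc{S^c}$, the Pythagorean identity gives
\[
\norm{\Pd{S}{S^c}Y}^2 \;=\; \norm{P_V Y}^2 \;=\; \norm{\Pc{S^c}Y}^2 + \norm{P_B Y}^2 .
\]

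Next I would pin down the conditional law of $\norm{P_B Y}^2$. Because $B\subseteq A^\perp$ we have $P_B Y=P_B P_A^\perp Y$, and $B=\mathrm{span}\{P_A^\perp c'_i:i\in S\}$. Conditionally, $P_A^\perp$ is fixed while the $c'_i$ are i.i.d.\ uniform on the sphere, so by rotational invariance (rotations fixing $A$ act transitively on the unit sphere of $A^\perp$) the directions of the $P_A^\perp c'_i$ are i.i.d.\ uniform in the $(n-K+t)$-dimensional space $A^\perp$; since $t\le n-K+t$ these are a.s.\ independent, hence $B$ is a uniformly random $t$-dimensional subspace of $A^\perp$, independent of the fixed vector $P_A^\perp Y$. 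Invoking the standard fact used also in the proof of Claim~\ref{claim:sc_nc1} (cf.\ \citep[Eq.\ 79]{yang2013quasi}) — that the normalized squared length of the projection of a fixed vector onto a uniformly random $t$-dimensional subspace of a $d$-dimensional space is distributed as $\mathrm{Beta}(t,d-t)$ — and noting $\norm{P_A^\perp Y}^2=\norm{Y}^2-\norm{\Pc{S^c}Y}^2$, I get, with $d=n-K+t$,
\[
\norm{P_B Y}^2 \;\overset{d}{=}\; \bigl(\norm{Y}^2-\norm{\Pc{S^c}Y}^2\bigr)\,\mathrm{Beta}(t,\,n-K).
\]

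Finally I would assemble the probability. Combining the last two displays and dividing by the (a.s.\ positive) quantity $\norm{Y}^2-\norm{\Pc{S^c}Y}^2$,
\[
\Pb{\norm{\Pd{S}{S^c}Y}^2>\norm{\Pc{[K]}Y}^2 \mid c_{[K]},H_{[K]},Z}
=\Pb{\mathrm{Beta}(t,n-K)>\tfrac{\norm{\Pc{[K]}Y}^2-\norm{\Pc{S^c}Y}^2}{\norm{Y}^2-\norm{\Pc{S^c}Y}^2}},
\]
where the ratio lies in $[0,1]$ a.s.\ because $\mathrm{span}\{c_i:i\in S^c\}\subseteq\mathrm{span}\{c_i:i\in[K]\}$ and the noise $Z$ a.s.\ contributes a component off $\mathrm{span}\{c_i:i\in S^c\}$ so that $\norm{Y}^2>\norm{\Pc{S^c}Y}^2$. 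Using $1-\mathrm{Beta}(t,n-K)\overset{d}{=}\mathrm{Beta}(n-K,t)$ together with $1-\tfrac{\norm{\Pc{[K]}Y}^2-\norm{\Pc{S^c}Y}^2}{\norm{Y}^2-\norm{\Pc{S^c}Y}^2}=\tfrac{\norm{Y}^2-\norm{\Pc{[K]}Y}^2}{\norm{Y}^2-\norm{\Pc{S^c}Y}^2}$ rewrites the right-hand side as $F\!\left(\tfrac{\norm{Y}^2-\norm{\Pc{[K]}Y}^2}{\norm{Y}^2-\norm{\Pc{S^c}Y}^2};\,n-K,\,t\right)$, which is exactly \eqref{eq:9}. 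The bound \eqref{eq:10} then follows by dropping $(1-w)^{t-1}\le 1$ inside the Beta integral, giving $F(x;n-K,t)\le\binom{n-K+t-1}{t-1}x^{n-K}\le(n-K+t-1)^{t-1}x^{n-K}$, the estimate recorded in \citep{yang2014quasi}.

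I expect the only genuinely delicate step to be the assertion that, conditionally, $B$ is a uniformly random $t$-dimensional subspace of $A^\perp$ that is independent of $P_A^\perp Y$; this rests entirely on the rotational invariance of the uniform law on the sphere together with the a.s.\ linear independence of $\{P_A^\perp c'_i:i\in S\}$ (valid since $t\le\dim A^\perp$), with the edge case $S=[K]$ handled by the convention $P_\emptyset=0$. Everything else — the Pythagorean split, the Beta change of variables, and the tail estimate — is routine, and this same chain of steps is precisely what drives the more elaborate Claim~\ref{claim:sc_nc1} in the noCSI analysis (there with decoding-set parameter $K_1=\nu K$ in place of $K$).
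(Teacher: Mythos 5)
Your proof is correct and follows essentially the same route as the paper's proof of Claim~\ref{claim:sc_nc1}, to which Claim~\ref{claim:1} defers: the orthogonal split $V=A\oplus B$, the observation that conditionally $B$ is a uniformly random $t$-dimensional subspace of $A^\perp$ independent of $P_A^\perp Y$, and the identification of the projected squared length with a $\mathrm{Beta}(t,n-K)$ variable via \citep[Eq.\ 79]{yang2013quasi}. You additionally spell out two steps the paper leaves implicit or cites externally — the conversion to the complementary $\mathrm{Beta}(n-K,t)$ CDF form of \eqref{eq:9} (done by the paper in the surrounding text at \eqref{eq:sc_nc6}–\eqref{eq:sc_nc7}) and the derivation of the tail estimate \eqref{eq:10} by bounding the Beta integral — and you correctly adapt the rotational-invariance argument to the spherical codebook used here rather than the Gaussian one used in Claim~\ref{claim:sc_nc1}.
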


Letting $G_S\equiv g(Y,c_{[K]},S)=\frac{\norm{Y}^2-\norm{\Pc{[K]}Y}^2}{\norm{Y}^2-\norm{\Pc{S^c}Y}^2}$, $M_{S}=\prod_{j\in S}(M_j-1)$, $s_t= (t-1)\frac{\ln(n-K+t-1)}{(n-K)}$ and $r_t=s_t + \frac{\ln M_S}{ (n-K) }$, we have the following from \eqref{eq:8}, \eqref{eq:9} and \eqref{eq:10}
\begin{IEEEeqnarray*}{LLL}
\label{eq:11}
\epsilon_n\leq & \mathbb{E}\left[\min \left\{1,  \sum_{t,S}\exp\left(-(n-K)\left[-r_t-\ln G_S\right]\right)\right\}\right]\\
 \IEEEyesnumber
\end{IEEEeqnarray*}
\vspace{0.5cm}

Let $\delta>0$ and  let $E_1$ be the following event
\begin{IEEEeqnarray*}{LLLL}
E_1 &=& \bigcap_{t,S} & \left\{-\ln G_S -r_t>\delta\right\}\label{eq:12a}\IEEEyesnumber\\
&=&\bigcap_{t,S}&\left\{-\ln G_S >\tilde{V}_{n,S}\right\}\\
&=&\bigcap_{t,S}&\left\{G_S<V_{n,S}\right\}\label{eq:12b}\IEEEyesnumber
\end{IEEEeqnarray*}
where $\tilde{V}_{n,S}=r_t+\delta$ and $V_{n,S}=e^{-\tilde{V}_{n,S}}$. Note that $V_{n,S}$ depends on $S$ and $t$.

Then, from \eqref{eq:11} we have the following

\begin{lemma}
\label{lem:1a}
For the $K$-user MAC defined above, with the projection decoder, the average probability of error is upper bounded as
\begin{IEEEeqnarray*}{LLL}
\label{eq:13}
\epsilon_n 
& \leq &\sum_{t,S}e^{-(n-K)\delta} +\mathbb{P}\left[\bigcup_{t,S} G_S\geq V_{n,S}\right] \IEEEyesnumber
\end{IEEEeqnarray*}
\begin{proof}
By \eqref{eq:11}, 
\begin{IEEEeqnarray*}{LLL}
&&\epsilon_n\\
&\leq & \mathbb{E}\left[\min \left\{1,  \sum_{t,S}e^{\left(-(n-K)\left[-r_t-\ln G_S \right]\right)}\right\}(1[E_1]+1[E_1^c])\right]\\
&\leq & \sum_{t,S}e^{-(n-K)\delta} + \mathbb{P}[E_1^c] \IEEEyesnumber
\end{IEEEeqnarray*}
\end{proof}
\end{lemma}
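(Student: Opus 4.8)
The plan is to start from the random-coding-union bound \eqref{eq:11} and perform a single thresholding step governed by the auxiliary event $E_1$ of \eqref{eq:12a}--\eqref{eq:12b}. Recall that on $E_1$ one has, simultaneously for every pair $(t,S)$, the inequality $-\ln G_S - r_t > \delta$. First I would write $1 = 1[E_1] + 1[E_1^c]$ inside the expectation in \eqref{eq:11} and bound the two resulting contributions separately.

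On the event $E_1$, every summand of $\sum_{t,S}\exp\!\big(-(n-K)[-r_t - \ln G_S]\big)$ is at most $e^{-(n-K)\delta}$, since $-r_t - \ln G_S = -\big((-\ln G_S) - r_t\big) < -\delta$ there; hence the whole sum, and a fortiori its minimum with $1$, is bounded \emph{deterministically} by $\sum_{t,S} e^{-(n-K)\delta}$. On the complementary event $E_1^c$ I would simply use $\min\{1,\cdot\} \le 1$, so that this part contributes at most $\mathbb{P}[E_1^c]$. Since $E_1^c = \bigcup_{t,S}\{G_S \ge V_{n,S}\}$ by \eqref{eq:12b}, taking expectations and adding the two pieces yields $\epsilon_n \le \mathbb{P}[E_1]\sum_{t,S} e^{-(n-K)\delta} + \mathbb{P}[E_1^c] \le \sum_{t,S} e^{-(n-K)\delta} + \mathbb{P}\big[\bigcup_{t,S} G_S \ge V_{n,S}\big]$, which is the claimed bound.

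I do not expect any real obstacle in this lemma itself: the only points requiring care are bookkeeping the sign in the exponent (so that crossing the threshold $\delta$ genuinely produces the decaying factor $e^{-(n-K)\delta}$) and checking that the definition $V_{n,S} = e^{-r_t - \delta}$ makes the two descriptions of $E_1$ in \eqref{eq:12a} and \eqref{eq:12b} literally equivalent. The substantive work is postponed to the estimates that follow: choosing $\delta$ so that the combinatorial prefactor $\sum_{t,S} e^{-(n-K)\delta}$ still vanishes as $n\to\infty$ (the number of terms grows only subexponentially in $n$), and controlling the tail probability $\mathbb{P}\big[\bigcup_{t,S} G_S \ge V_{n,S}\big]$ via concentration of the projection-ratio statistics $G_S$ --- that is where the genuine effort of the achievability argument will lie.
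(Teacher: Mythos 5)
Your proposal is correct and follows the paper's argument exactly: insert $1=1[E_1]+1[E_1^c]$ into the RCU bound \eqref{eq:11}, bound the sum deterministically by $\sum_{t,S}e^{-(n-K)\delta}$ on $E_1$, and use $\min\{1,\cdot\}\le 1$ on $E_1^c$. There is a small sign slip in your parenthetical justification: on $E_1$ one has $-r_t-\ln G_S>\delta$ (equivalently $r_t+\ln G_S<-\delta$), not ``$-r_t-\ln G_S<-\delta$'', and the written identity $-r_t-\ln G_S=-\big((-\ln G_S)-r_t\big)$ is wrong (the right side equals $r_t+\ln G_S$, the negative); nonetheless, the correct inequality $-r_t-\ln G_S>\delta$ together with $n-K>0$ does give $\exp\big(-(n-K)[-r_t-\ln G_S]\big)<e^{-(n-K)\delta}$, so the conclusion you state is right and the proof is sound.
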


Hence, as $n\to\infty$, it is the second term in the above expression that potentially dominates.

\begin{claim}
\label{claim:2}
For $t\in[K]$, $S\subset[K]$ with $|S|=t$, we have
\begin{IEEEeqnarray*}{LLL}
\label{eq:14}
&&\Pb{G_S\geq V_{n,S}}\\
&\leq & \Prb{\norma{(1-V_{n,S})\Pcp{S^c}Z-V_{n,S}\Pcp{S^c}\hc}^2 \\
&& \geq V_{n,S}\norma{\Pcp{S^c}\hc}^2}\IEEEyesnumber
\end{IEEEeqnarray*}
where $\Pcp{S^c}$ represents the orthogonal projection onto the orthogonal complement of the space spanned by $c_{[S^c]}$.
\begin{proof}
See proof of claim \ref{claim:sc_nc2}
\end{proof}
\end{claim}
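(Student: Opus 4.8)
The plan is to prove the claim by establishing a pointwise inclusion between the two events, so that the probability inequality follows at once from monotonicity of $\mathbb{P}$; this is exactly the argument used for Claim~\ref{claim:sc_nc2} in the no-CSI section, specialized to the simpler situation $S_2^\ast=S$, $\hat Z=Z$.

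First I would rewrite $G_S$ in projection form. Using $\norm{Y}^2-\norm{P_C Y}^2=\norm{P_C^\perp Y}^2$ for any finite set $C$, we get $G_S=\norm{\Pcp{K}Y}^2/\norm{\Pcp{S^c}Y}^2$. Since $\sum_{i\in[K]}H_ic_i\in\mathrm{span}(c_{[K]})$, the numerator equals $\norm{\Pcp{K}Z}^2$; and since $\mathrm{span}(c_{[S^c]})\subseteq\mathrm{span}(c_{[K]})$ we have $\Pcp{K}=\Pcp{K}\Pcp{S^c}$, hence $\Pcp{K}\preceq\Pcp{S^c}$ as operators and $\norm{\Pcp{K}Z}^2\le\norm{\Pcp{S^c}Z}^2$. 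Set $N\eqdef\Pcp{S^c}Z$ and $W\eqdef\Pcp{S^c}\hc$; because $\sum_{i\in S^c}H_ic_i\in\mathrm{span}(c_{[S^c]})$ we have $\Pcp{S^c}Y=W+N$. Combining these observations, the event $\{G_S\ge V_{n,S}\}$ is contained in $\{\norm{N}^2\ge V_{n,S}\norm{W+N}^2\}$.

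The second ingredient is a two-line algebraic identity: for any scalar $V$,
\begin{equation*}
\norm{(1-V)N-VW}^2-V\norm{W}^2=(1-V)\left(\norm{N}^2-V\norm{N+W}^2\right),
\end{equation*}
obtained by expanding both squared norms and collecting the $\norm{N}^2$, $\norm{W}^2$, and $\mathrm{Re}\,\dop{N}{W}$ terms. One then checks that $V_{n,S}\in(0,1)$: indeed $\tilde V_{n,S}=r_t+\delta>0$ because $r_t\ge 0$ ($s_t\ge 0$ and $M_S\ge 1$, assuming each $M_j\ge 2$) and $\delta>0$, so $V_{n,S}=e^{-\tilde V_{n,S}}<1$. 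Since $1-V_{n,S}>0$, the identity shows that $\norm{N}^2\ge V_{n,S}\norm{N+W}^2$ holds if and only if $\norm{(1-V_{n,S})N-V_{n,S}W}^2\ge V_{n,S}\norm{W}^2$. Together with the inclusion from the previous step this gives $\{G_S\ge V_{n,S}\}\subseteq\{\norm{(1-V_{n,S})\Pcp{S^c}Z-V_{n,S}\Pcp{S^c}\hc}^2\ge V_{n,S}\norm{\Pcp{S^c}\hc}^2\}$, and taking probabilities completes the proof.

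I do not anticipate a genuine obstacle: every step is either a projection identity (nestedness of $\mathrm{span}(c_{[S^c]})$ inside $\mathrm{span}(c_{[K]})$) or an inner-product expansion, all pointwise in $(c_{[K]},H_{[K]},Z)$. The only points worth stating carefully are the operator comparison $\Pcp{K}\preceq\Pcp{S^c}$ and the elementary bound $V_{n,S}<1$.
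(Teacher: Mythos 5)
Your proof is correct and follows essentially the same route as the paper's proof of Claim~\ref{claim:sc_nc2}: both use the nesting $\mathrm{span}(c_{[S^c]})\subseteq\mathrm{span}(c_{[K]})$ to bound the numerator by $\norm{\Pcp{S^c}Z}^2$, and then the completion-of-the-square identity relating $\{\norm{N}^2\ge V\norm{N+W}^2\}$ to $\{\norm{(1-V)N-VW}^2\ge V\norm{W}^2\}$ (with $N=\Pcp{S^c}Z$, $W=\Pcp{S^c}\hc$, valid since $0<V_{n,S}<1$). Your formulation as a single algebraic identity is a cleaner packaging of the paper's stepwise expansion, but the content and the pointwise event-inclusion logic are identical.
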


To evaluate the above probability, we condition on $c_{[K]}$ and $H_{[K]}$. For ease of notation, we will not explicitly write the conditioning. 

Let $\chi'_2(\lambda,d)$ denote the non-central chi-squared distributed random variable with non-centrality $\lambda$ and degrees of freedom $d$. That is, if $Z_i\distas{}\mathcal{N}(\mu_i,1),i\in[d]$ and $\lambda=\sum_{i\in[d]}\mu_i^2$, then $\chi'_2(\lambda,d)$ has the same distribution as that of $\sum_{i\in[d]}Z_i^2$.

Since $Z\distas{}\cn(0,I_n)$, we have $$Z-\frac{V_{n,S}}{1-V_{n,S}}\hc\distas{}\cn(-\frac{V_{n,S}}{1-V_{n,S}}\hc,I_n).$$ Hence 
\begin{\Ieee}{LLL}
\Pcp{S^c}\left(Z-\frac{V_{n,S}}{1-V_{n,S}}\hc\right)\\
\distas{}\cn(-\frac{V_{n,S}}{1-V_{n,S}}\Pcp{S^c}\hc,\Pcp{S^c}).
\end{\Ieee}
 Now using the fact that if $W=P+iQ \distas{}\cn(\mu,\Gamma,0)$ then 
\begin{equation}
\label{eq:gauss}
\begin{bmatrix} P \\ Q \end{bmatrix} \distas{} \mathcal{N}\left(\begin{bmatrix} Re(\mu) \\ Im(\mu) \end{bmatrix}, \frac{1}{2}\begin{bmatrix}
Re(\Gamma) & -Im(\Gamma)\\ Im(\Gamma) & Re(\Gamma) \end{bmatrix}\right)
\end{equation}
we can show the following
\begin{lemma}
\label{lem:1b}
Let $F=\norm{\frac{V_{n,S}}{1-V_{n,S}}\Pcp{S^c}\hc}^2$ and $n'=n-K+t$. Conditioned on $H_{[K]}$ and $c_{[K]}$, we have
\begin{IEEEeqnarray*}{LLL}
\label{eq:16}
\norm{\Pcp{S^c}\left(Z-\frac{V_{n,S}}{1-V_{n,S}}\hc\right)}^2 \distas{}\frac{1}{2}\chi'_2\left(2F,2n'\right)\IEEEyesnumber
\end{IEEEeqnarray*}
Hence its conditional expectation is
\begin{equation}
\label{eq:17}
\mu=n'+F.
\end{equation}

\begin{proof}
See proof of claim \ref{claim:sc_nc3}.
\end{proof}
\end{lemma}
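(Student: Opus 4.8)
The plan is to identify the law of the squared norm by passing to the real $2n$-dimensional representation of the complex Gaussian and recognizing the result as a scaled non-central chi-square; this is the same computation as in the proof of Claim~\ref{claim:sc_nc3}, specialized to the case with no residual interference term (i.e.\ $\hat Z=Z$, equivalently $S_2^*=S$), so it suffices to spell out that reduction. After conditioning on $H_{[K]}$ and $c_{[K]}$, both $\hc=\sum_{i\in S}H_ic_i$ and the projector $\Pcp{S^c}$ are deterministic, while $Z\distas{}\cn(0,I_n)$ is unchanged (the noise is independent of the codewords and gains). Hence $W\eqdef\Pcp{S^c}\bigl(Z-\tfrac{V_{n,S}}{1-V_{n,S}}\hc\bigr)$ is an affine image of $Z$, so $W\distas{}\cn\bigl(\mu_0,\Pcp{S^c}(\Pcp{S^c})^{*}\bigr)=\cn\bigl(\mu_0,\Pcp{S^c}\bigr)$ with $\mu_0=-\tfrac{V_{n,S}}{1-V_{n,S}}\Pcp{S^c}\hc$, using that $\Pcp{S^c}$ is Hermitian and idempotent (and that its pseudo-covariance matrix vanishes, so $W$ is proper). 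Since the codewords are drawn independently and uniformly on the power sphere, the $K-t$ vectors $\{c_j:j\in S^c\}$ are a.s.\ linearly independent, whence $\operatorname{rank}\Pcp{S^c}=n-(K-t)=n-K+t=n'$; and $\mu_0=\Pcp{S^c}\mu_0\in\operatorname{range}\Pcp{S^c}$.

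Next I would apply the real representation~\eqref{eq:gauss}: writing $W=P+iQ$, the stacked real vector $\xi\eqdef\bigl[(\operatorname{Re}W)^{T},(\operatorname{Im}W)^{T}\bigr]^{T}\in\mathbb R^{2n}$ is Gaussian with mean $m\eqdef\bigl[(\operatorname{Re}\mu_0)^{T},(\operatorname{Im}\mu_0)^{T}\bigr]^{T}$ and covariance $\tfrac12\widetilde\Gamma$, where $\widetilde\Gamma$ is the real $2n\times2n$ form of the Hermitian matrix $\Pcp{S^c}$ as in~\eqref{eq:gauss}. The one genuinely non-routine point is the linear-algebra fact that $\widetilde\Gamma$ is a \emph{real symmetric orthogonal projection of rank $2n'$}: the map $X\mapsto\widetilde X$ is a unital ring homomorphism $\mathbb C^{n\times n}\to\mathbb R^{2n\times2n}$ that carries Hermitian matrices to symmetric ones and doubles the rank, so it sends the rank-$n'$ Hermitian idempotent $\Pcp{S^c}$ to a rank-$2n'$ symmetric idempotent; moreover $\mu_0\in\operatorname{range}\Pcp{S^c}$ forces $m\in\operatorname{range}\widetilde\Gamma$.

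Finally I would diagonalize: picking $U\in\mathbb R^{2n\times2n'}$ with orthonormal columns spanning $\operatorname{range}\widetilde\Gamma$, we may write $\xi=U\bigl(\widetilde m+\tfrac1{\sqrt2}g\bigr)$ with $g\distas{}\mathcal N(0,I_{2n'})$ and $\widetilde m\in\mathbb R^{2n'}$, $\norm{\widetilde m}^2=\norm{m}^2=\norm{\mu_0}^2$. Then $\norm{W}^2=\norm{\xi}^2=\sum_{j=1}^{2n'}\bigl(\widetilde m_j+\tfrac1{\sqrt2}g_j\bigr)^2$, so $2\norm{W}^2=\sum_{j=1}^{2n'}(\sqrt2\,\widetilde m_j+g_j)^2\distas{}\chi'_2\bigl(2\norm{\mu_0}^2,2n'\bigr)$; and since $\norm{\mu_0}^2=\norm{\tfrac{V_{n,S}}{1-V_{n,S}}\Pcp{S^c}\hc}^2=F$, this is precisely $\norm{W}^2\distas{}\tfrac12\chi'_2(2F,2n')$. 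The conditional-expectation claim is then immediate from $\Ex{\chi'_2(\lambda,d)}=d+\lambda$, which gives $\mu=\tfrac12(2n'+2F)=n'+F$. The main (indeed only) obstacle is establishing the rank-doubling / orthogonal-projection property of $\widetilde\Gamma$ and that the mean lies in its range; everything else is the routine bookkeeping already carried out for Claim~\ref{claim:sc_nc3}.
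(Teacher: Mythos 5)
Your proposal is correct and is essentially the paper's own argument for Claim~\ref{claim:sc_nc3} (specialized, as you note, to $\hat Z=Z$): condition to freeze $\hc$ and the projector, observe $W\sim\cn(\mu_0,\Pcp{S^c})$, pass to the real $2n$-dimensional representation, and read off a scaled non-central chi-square on the $2n'$-dimensional range. The only cosmetic difference is the order of diagonalization — the paper applies a unitary $\mathcal U$ in $\mathbb C^n$ first, so $\Pcp{S^c}$ becomes $\mathrm{diag}(I_{n'},0)$ and the rank $2n'$ of the real form is immediate, whereas you go to $\mathbb R^{2n}$ first and invoke the rank-doubling property of the realification homomorphism; both land in the same place.Your proposal is correct and is essentially the paper's own argument for Claim~\ref{claim:sc_nc3} (specialized, as you note, to $\hat Z=Z$): condition to freeze $\hc$ and the projector, observe $W\sim\cn(\mu_0,\Pcp{S^c})$, pass to the real $2n$-dimensional representation, and read off a scaled non-central chi-square on the $2n'$-dimensional range. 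The only cosmetic difference is the order of diagonalization --- the paper applies a unitary $\mathcal U$ in $\mathbb C^n$ first so that $\Pcp{S^c}$ becomes $\mathrm{diag}(I_{n'},0)$ and the rank $2n'$ of the real form is immediate, whereas you go to $\mathbb R^{2n}$ first and invoke the rank-doubling property of the realification homomorphism; both land in the same place.
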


Let $U=\frac{V_{n,S}}{(1-V_{n,S})}\norm{\Pcp{S^c}\hc}^2-n'$. Hence $F=\frac{V_{n,S}}{1-V_{n,S}}(U+n')$. Note that $V_{n,S}$, $U$ ,$\lambda$ all depend on $t$ and $S$.

Letting $T=\frac{1}{2}\chi'_2(2F,2n')-(F+n')$, we have,
\begin{IEEEeqnarray*}{LLL}
\label{eq:18}
&& \Prb{\bigcup_{t,S}\norma{\Pcp{S^c}\biggr(Z-\frac{V_{n,S}}{1-V_{n,S}}\hc\biggr)}^2 \\
&& \geq \frac{V_{n,S}}{(1-V_{n,S})^2}\norma{\Pcp{S^c}\hc}^2} \\
&=&\Prb{\bigcup_{t,S}\norma{\Pcp{S^c}\biggr(Z-\frac{V_{n,S}}{1-V_{n,S}}\hc\biggr)}^2-\mu\geq U}\\
 &=& \Exa{\Prb{\biggr\{\bigcup_{t,S}\left\{T\geq U\right\}\biggr\}\biggr|c_{[K]},H_{[K]} }}.\IEEEyesnumber
\end{IEEEeqnarray*}

Next we use lemma \ref{lem:chi2} to bound \eqref{eq:18}. 

First, note that 
{\allowdisplaybreaks
\begin{IEEEeqnarray*}{LLL}
\label{eq:21}
U & =\frac{V_{n,S}}{(1-V_{n,S})}\norm{\Pcp{S^c}\hc}^2-n'\\
& =\frac{n'}{1-V_{n,S}}\left(V_{n,S} \left(1+\frac{\norm{\Pcp{S^c}\hc}^2}{n'}\right)-1\right)\\
& =n'U^1 \IEEEyesnumber
\end{IEEEeqnarray*}
}
where $$U^1=\frac{1}{1-V_{n,S}}\left(V_{n,S} W_S-1\right)$$ and $$W_S=\left(1+\frac{\norm{\Pcp{S^c}\hc}^2}{n'}\right).$$ Hence 
\begin{equation}
\label{eq:22}
F=\frac{V_{n,S}}{1-V_{n,S}}(U+n')=n'\frac{V_{n,S}}{1-V_{n,S}}(U^1+1).
\end{equation}

Let $\delta_1>0$. Let $E_{11}=\bigcap_{S,t}\{U^1>\delta_1\}$. From \eqref{eq:18} we have 
{\allowdisplaybreaks
\begin{IEEEeqnarray*}{LLL}
\label{eq:23}
&& \Prb{\bigcup_{t,S}T\geq U}\\
 & \leq & \sum_{t,S}\Exa{\Prb{\left\{T \geq U\right\}\biggr| c_{[K]},H_{[K]} }1[E_{11}]}\\
 &&+\Pb{E_{11}^c} \\
& \leq & \sum_{t,S}\Ex{\Pb{\left\{T\geq U\right\}\biggr|c_{[K]},H_{[K]} }1[U^1>\delta_1]} \\
&&+\Pb{E_{11}^c}\\
&\leq & \sum_{t,S} \mathbb{E}\left[e^{-n'f_n(U^1)}1[U^1>\delta_1]\right]\\
&&+\Pb{E_{11}^c} \IEEEyesnumber 
\end{IEEEeqnarray*}
}
where the last inequality follows from \eqref{eq:sc_nc25a}, and 
\begin{IEEEeqnarray*}{LLL}
\label{eq:24}
&& f_n(x) \\
&=& x+1+\frac{2V_{n,S}}{1-V_{n,S}}(1+x)\\
&& -\sqrt{1+\frac{2V_{n,S}}{1-V_{n,S}}(1+x)}\sqrt{2x+1+\frac{2V_{n,S}}{1-V_{n,S}}(1+x)}\\
\IEEEyesnumber 
\end{IEEEeqnarray*}

Now, from claim \ref{claim:sc_nc4}, we have that for $0<V_{n,S}<1$ and $x>0$, $f_n(x)$ is a monotonically increasing function of $x$

Hence we have
\begin{IEEEeqnarray*}{LLL}
\label{eq:25}
\Prb{\bigcup_{t,S}\left\{T\geq U\right\}} \leq \sum_{t,S}e^{-n' f_n(\delta_1)}+\Pb{E_{11}^c}. \IEEEyesnumber
\end{IEEEeqnarray*}

So, we have the following claim
\begin{claim}
\label{claim:4}
Let $A_S=\left\{V_{n,S} W_S-1 \leq \delta_1\right\}$ and $E_{12}=\bigcup_{t,S} A_S $. If $0<V_{n,S}<1$ for all $t\in[K]$, $S\subset[K]$ with $|S|=t$ then we have
\begin{IEEEeqnarray*}{LLL}
\label{eq:26a}
& \Prb{\bigcup_{t,S}\left\{G_S\geq V_{n,S}\right\}}  \leq \sum_{t,S}e^{-n'f_n(\delta_1)}+ \Pb{E_{12}}.\\ \IEEEyesnumber
\end{IEEEeqnarray*}
\begin{proof}
\begin{IEEEeqnarray*}{LLL}
\label{eq:26b}
 \Pb{\bigcup_{t,S}\left\{G_S\geq V_{n,S}\right\}}  &\leq &\sum_{t,S} e^{-n'f_n(\delta_1)}+ \mathbb{P}\left[E_{11}^c\right]\\
& \leq &\sum_{t,S}e^{-n'f_n(\delta_1)}+ \Pb{E_{12}}.\\ \IEEEyesnumber
\end{IEEEeqnarray*}
\end{proof}
\end{claim}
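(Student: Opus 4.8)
The plan is to handle the union over index pairs $(t,S)$ one term at a time, first turning each event $\{G_S\geq V_{n,S}\}$ into an upper-tail event for a non-central $\chi^2$ variable and then concentrating it.

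First I would invoke Claim~\ref{claim:2}, which gives $\Pb{G_S\geq V_{n,S}}\leq \Pb{\norm{(1-V_{n,S})\Pcp{S^c}Z-V_{n,S}\Pcp{S^c}\hc}^2\geq V_{n,S}\norm{\Pcp{S^c}\hc}^2}$. Since the hypothesis guarantees $0<V_{n,S}<1$, dividing the inequality inside the probability by $(1-V_{n,S})^2>0$ is harmless and rewrites the right-hand side as $\Pb{\norm{\Pcp{S^c}\left(Z-\frac{V_{n,S}}{1-V_{n,S}}\hc\right)}^2\geq \frac{V_{n,S}}{(1-V_{n,S})^2}\norm{\Pcp{S^c}\hc}^2}$. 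By Lemma~\ref{lem:1b}, conditionally on $c_{[K]}$ and $H_{[K]}$ the left-hand norm is distributed as $\frac12\chi'_2(2F,2n')$ with conditional mean $\mu=n'+F$; setting $U^1=\frac{1}{1-V_{n,S}}(V_{n,S}W_S-1)$ and $U=n'U^1$ one checks algebraically that the threshold equals $\mu+U$, so the event is exactly $\{\frac12\chi'_2(2F,2n')-\mu\geq U\}$.

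Next, after a union bound over $(t,S)$ I would split on the event $E_{11}=\bigcap_{t,S}\{U^1>\delta_1\}$ and its complement. On $E_{11}$ the threshold $U=n'U^1$ is strictly positive, so the upper-tail estimate for the non-central $\chi^2$ (the form of Lemma~\ref{lem:chi2} recorded in \eqref{eq:sc_nc25a}) applies and gives, conditionally on $(c_{[K]},H_{[K]})$, the bound $e^{-n'f_n(U^1)}$ with $f_n$ the function appearing in Claim~\ref{claim:sc_nc4}. Because $\{U^1>\delta_1\}$ is measurable with respect to $\sigma(c_{[K]},H_{[K]})$, the indicator can be kept inside the conditional expectation, and the monotonicity of $f_n$ on $(0,\infty)$ from Claim~\ref{claim:sc_nc4} gives $f_n(U^1)\geq f_n(\delta_1)$ there; summing over $(t,S)$ produces the first term $\sum_{t,S}e^{-n'f_n(\delta_1)}$.

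It remains to control $\Pb{E_{11}^c}$. Since $1-V_{n,S}>0$, the inequality $U^1\leq\delta_1$ reads $V_{n,S}W_S-1\leq\delta_1(1-V_{n,S})$, and as $\delta_1(1-V_{n,S})\leq\delta_1$ this event is contained in $\{V_{n,S}W_S-1\leq\delta_1\}=A_S$; hence $E_{11}^c=\bigcup_{t,S}\{U^1\leq\delta_1\}\subseteq\bigcup_{t,S}A_S=E_{12}$, so $\Pb{E_{11}^c}\leq\Pb{E_{12}}$, and adding the two bounds finishes the proof. The only genuinely delicate point is the $\chi^2$ step: one must make sure that after replacing $U^1$ by $\delta_1$ the exponent $n'f_n(\delta_1)$ is still nonnegative, which is guaranteed (via AM--GM, as in the proof of Lemma~\ref{lem:chi2}) precisely because $U^1$, hence $\delta_1$, is positive on $E_{11}$, and it is exactly why the monotonicity of $f_n$ is needed rather than a naive substitution.
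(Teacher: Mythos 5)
Your proof is correct and follows essentially the same path as the paper: Claim~\ref{claim:2} to reduce to a non-central $\chi^2$ tail event, Lemma~\ref{lem:1b} for the conditional distribution, the split on $E_{11}$ combined with the Birg\'{e}--Massart bound \eqref{eq:sc_nc25a} and the monotonicity of $f_n$ (Claim~\ref{claim:sc_nc4}), and finally the containment $E_{11}^c\subseteq E_{12}$ via $\delta_1(1-V_{n,S})\le\delta_1$. One minor quibble: positivity of $f_n(\delta_1)$ follows simply because $\delta_1>0$ is a fixed constant and $f_n(0)=0$ with $f_n$ increasing --- it is not something that holds only ``on $E_{11}$'' --- but this does not affect the argument.
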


Now, we need to upper bound $\Pb{E_{12}}$.
 
 We have
\begin{IEEEeqnarray*}{LLL}
\label{eq:27}
\norma{\Pcp{S^c}\hc}^2=\sum_{i\in S}|H_i|^2\norm{\Pcp{S^c}c_i}^2\\
+ 2\sum_{i<j:i,j\in S}Re\left(\dop{\Pcp{S^c}c_i}{\Pcp{S^c}c_j}H_i\bar{H_j}\right).\IEEEyesnumber
\end{IEEEeqnarray*}
Further,
\begin{IEEEeqnarray}{LL}
\label{eq:28}
&\dop{\Pcp{S^c}c_i}{\Pcp{S^c}c_j}=\dop{c_i}{c_j}-\dop{\Pc{S^c}c_i}{\Pc{S^c}c_j}
\end{IEEEeqnarray}
Hence we have
{\allowdisplaybreaks
\begin{IEEEeqnarray*}{LLL}
\label{eq:29}
&&\left|Re\dop{\Pcp{S^c}c_i}{\Pcp{S^c}c_j}\right| \\
&\leq & \left|\dop{\Pcp{S^c}c_i}{\Pcp{S^c}c_j}\right|\\
&\leq & \left|\dop{c_i}{c_j}\right|+\left|\dop{\Pc{S^c}c_i}{\Pc{S^c}c_j}\right| \\
& \leq & \left|\dop{c_i}{c_j}\right|+\norm{\Pc{S^c}c_j}\norm{\Pc{S^c}c_i}\\
&=& nP\left(\left|\dop{\hat{c}_i}{\hat{c}_j}\right| +\norm{\Pc{S^c}\hat{c}_i}\norm{\Pc{S^c}\hat{c}_j}\right)\IEEEyesnumber \IEEEeqnarraynumspace
\end{IEEEeqnarray*}
}
where hats denote corresponding normalized vectors. Since these unit vectors are high dimensional, their dot products and projection onto a smaller, fixed dimension surface is very small. Indeed, we have the following two lemmas.

\begin{lemma}
\label{lem:2}
If $e_1,e_2\distas{iid} Unif(\cs^{n-1})$, then for any $\delta_2>0$, we have
\begin{equation}
\label{eq:30}
\Pb{|\dop{e_1}{e_2}|>\delta_2}\leq 4e^{-\frac{n\delta_2^2}{2}}
\end{equation}
\begin{proof}
First, lets take $e_1,e_2\distas{iid}S^{n-1}$. Let $x$ be a fixed unit vector in $\mathbb{R}^n$. Due to symmetry, we have $\Pb{\dop{e_1}{x}\geq 0}=1/2$. Hence, by Levy's Isoperimetric inequality on the sphere \citep{ledoux2001concentration}, we have
\begin{equation}
\label{eq:31}
\Pb{\dop{e_1}{x}>\delta_2}\leq e^{-n\delta_2^2/2}.
\end{equation}
Again by symmetry, and then taking $x$ as $e_2$, we have
\begin{equation}
\label{eq:32}
\Pb{\left|\dop{e_1}{e_2}\right|>\delta_2}\leq 2e^{-n\delta_2^2/2}.
\end{equation}

Now uniform distribution on $\cs^{n-1}$ is same as the uniform distribution on $S^{2n-1}$, and for complex vectors $z_1=x_1+iy_1$ and $z_2=x_2+iy_2$ we have $Re\dop{z_1}{z_2}=x_1^T x_2+y_1^T y_2=(x_1,y_1)^T (x_2,y_2)$. Hence if $e_1,e_2\distas{iid}\cs^{n-1}$, and $u_1,u_2\distas{iid}S^{2n-1}$ then $Re\dop{e_1}{e_2}$ has same law as $\dop{u_1}{u_2}$. Hence we have
\begin{equation}
\label{eq:33}
\Pb{|Re\dop{e_1}{e_2}|>\delta_2}\leq 2e^{-\frac{2n\delta_2^2}{2}}.
\end{equation}

Also, $Im\dop{z_1}{z_2}=x_1^T y_2-y_1^T x_2$. Hence $Im\dop{e_1}{e_2}$ has the same law as $Re\dop{e_1}{e_2}$. Hence we have 
{\allowdisplaybreaks
\begin{IEEEeqnarray*}{LLL}
\label{eq:34}
&\Pb{|\dop{e_1}{e_2}|>\delta_2} \\
&=\Pb{|\dop{e_1}{e_2}|^2>\delta_2^2} \\
& =\Pb{|Re\dop{e_1}{e_2}|^2+|Im\dop{e_1}{e_2}|^2>\delta_2^2}\\
&\leq \Pb{|Re\dop{e_1}{e_2}|>\frac{\delta_2}{\sqrt{2}}}+\Pb{|Im\dop{e_1}{e_2}|>\frac{\delta_2}{\sqrt{2}}}\\
& \leq 4e^{-\frac{n\delta_2^2}{2}}.\IEEEyesnumber
\end{IEEEeqnarray*}
}

\end{proof}
\end{lemma}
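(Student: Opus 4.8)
The plan is to exploit the unitary invariance of the uniform law on $\cs^{n-1}$ to fix one of the two vectors, and then to read off the exact distribution of the squared inner product. First I would condition on $e_2$: since $e_1$ and $e_2$ are independent and the uniform law on $\cs^{n-1}$ is invariant under unitary rotations, the conditional law of $\dop{e_1}{e_2}$ given $e_2$ does not depend on the value of $e_2$ (apply the unitary sending $e_2$ to a fixed unit vector $x$). Hence it suffices to bound $\Pb{|\dop{e_1}{x}|>\delta_2}$ for an arbitrary fixed unit vector $x\in\mathbb{C}^n$.

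Now $|\dop{e_1}{x}|^2$ is exactly the squared norm of the orthogonal projection of the uniformly random unit vector $e_1\in\mathbb{C}^n$ onto the one-(complex-)dimensional subspace spanned by $x$. By the same fact used in Claim~\ref{claim:sc_nc1} (cf.\ \citep[Eq.~79]{yang2013quasi}), this quantity has a $\mathrm{Beta}(1,n-1)$ distribution, whose complementary CDF is $s\mapsto(1-s)^{n-1}$ on $[0,1]$. Therefore, for $\delta_2\in(0,1)$,
\begin{equation*}
\Pb{|\dop{e_1}{x}|>\delta_2}=(1-\delta_2^2)^{n-1}\le e^{-(n-1)\delta_2^2}\le e\cdot e^{-n\delta_2^2}\le 4\,e^{-n\delta_2^2/2},
\end{equation*}
where I used $\delta_2^2\le1$ to pass to the third expression, and $e<4$ together with $e^{-n\delta_2^2}\le e^{-n\delta_2^2/2}$ in the last step; for $\delta_2\ge1$ the left-hand side is $0$ and the bound is trivial. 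Un-conditioning on $e_2$ then gives the claim.

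An alternative (the route hinted at in the excerpt), which avoids invoking the Beta law, is to identify $\cs^{n-1}$ with the real sphere $S^{2n-1}$, write $\dop{e_1}{x}=\mathrm{Re}\dop{e_1}{x}+i\,\mathrm{Im}\dop{e_1}{x}$ with each part equal to the Euclidean inner product of $e_1\in S^{2n-1}$ with a fixed unit vector of $\mathbb{R}^{2n}$, apply L\'evy's isoperimetric inequality $\Pb{\dop{u}{v}>t}\le e^{-(2n)t^2/2}$ to each part at level $t=\delta_2/\sqrt2$, and then take a union bound over the two parts and the two signs to collect the factor $4$. Either way, the single step I would be most careful about is the dimension count — the ambient real sphere is $2n$-dimensional and the relevant subspace is $2$-real-dimensional — since that is precisely what yields the exponent $n\delta_2^2/2$; everything else is routine bookkeeping and I do not anticipate a genuine obstacle.
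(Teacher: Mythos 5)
Your primary proof is correct, and it takes a genuinely different route from the paper. The paper first proves the bound on the real sphere via L\'evy's isoperimetric inequality, then embeds $\mathcal{CS}^{n-1}$ into $S^{2n-1}$, treats $\mathrm{Re}\dop{e_1}{e_2}$ and $\mathrm{Im}\dop{e_1}{e_2}$ as two real inner products, applies the concentration bound to each at level $\delta_2/\sqrt{2}$, and collects the factor $4$ from the union bound over real/imaginary parts and signs. You instead use unitary invariance to freeze $e_2$, then invoke the exact law $|\dop{e_1}{x}|^2\sim\mathrm{Beta}(1,n-1)$ — the same $t=1$ case of the projection fact the paper already records in the proof of Claim~\ref{claim:sc_nc1} — so that the tail is exactly $(1-\delta_2^2)^{n-1}$, which you then coarsen to $4e^{-n\delta_2^2/2}$. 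Your route is actually sharper as an intermediate statement (it gives the exact tail, and $e^{-(n-1)\delta_2^2}$ is much smaller than $4e^{-n\delta_2^2/2}$ for most parameters) and shorter, at the cost of relying on knowing the exact projection distribution; the paper's L\'evy route is a more general-purpose concentration argument that does not require identifying the law. Your conversion chain $(1-\delta_2^2)^{n-1}\le e^{-(n-1)\delta_2^2}=e^{\delta_2^2}e^{-n\delta_2^2}\le e\cdot e^{-n\delta_2^2}\le 4e^{-n\delta_2^2/2}$ is correct for $\delta_2\le 1$, and the $\delta_2>1$ case is trivially $0$ by Cauchy--Schwarz, as you note. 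Your "alternative" sketch reproduces the paper's own argument faithfully, including the crucial real-dimension count $2n$ that produces the exponent $n\delta_2^2/2$.
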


Next we have a similar lemma for low dimensional projections from \citep[Lemma 5.3.2]{vershynin2018high}
\begin{lemma}[\citep{vershynin2018high}]
\label{lem:3}
Let $x\distas{}Unif(S^{n-1})$ and $P$ be a projection to an $m$ dimensional subspace of $\mathbb{R}^n$. Then for any $\delta_3>0$, we have
\begin{equation}
\label{eq:35}
\Pb{\left|\norm{Px}-\sqrt{\frac{m}{n}}\right|>\delta_3}\leq 2e^{-cn\delta_3^2}
\end{equation}
where $c$ is some absolute constant. Hence, by symmetry, the result remains true if $P$ is a uniform random projection, independent of $x$.
\end{lemma}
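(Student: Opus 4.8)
\end{lemma}
\begin{proof}[Proof sketch]
The plan is to transfer the problem to the Gaussian model and invoke concentration of Lipschitz functions of a standard Gaussian vector. Since $\mathrm{Unif}(S^{n-1})$ is rotation invariant, I may assume without loss of generality that $P$ projects onto $\mathrm{span}(e_1,\dots,e_m)$, and I realize $x = g/\norm{g}$ with $g\sim\mathcal{N}(0,I_n)$. Then
\begin{equation*}
\norm{Px}=\frac{\norm{Pg}}{\norm{g}}=\frac{(g_1^2+\cdots+g_m^2)^{1/2}}{(g_1^2+\cdots+g_n^2)^{1/2}}\,,
\end{equation*}
so everything reduces to controlling the numerator around $\sqrt m$ and the denominator around $\sqrt n$.

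Both maps $g\mapsto\norm{g}$ and $g\mapsto\norm{Pg}$ are $1$--Lipschitz, so Gaussian concentration gives $\Pb{\big|\norm{g}-\Ex{\norm{g}}\big|>s}\le 2e^{-s^2/2}$ and the analogous bound for $\norm{Pg}$. Combined with $\Var(\norm g)\le 1$ (hence $\sqrt{n-1}\le\Ex{\norm g}\le\sqrt n$) and the corresponding statement for $\norm{Pg}$, one gets $\big|\Ex{\norm g}-\sqrt n\big|\le 1/\sqrt n$ and $\big|\Ex{\norm{Pg}}-\sqrt m\big|\le 1$. Next I intersect the two events $\{|\norm g-\sqrt n|\le \sqrt n\,\delta_3/4\}$ and $\{|\norm{Pg}-\sqrt m|\le \sqrt n\,\delta_3/4\}$; on this intersection $\norm g\ge \sqrt n/2$, and writing
\begin{equation*}
\norm{Px}-\sqrt{\tfrac mn}=\frac{\sqrt n\,(\norm{Pg}-\sqrt m)-\sqrt m\,(\norm g-\sqrt n)}{\sqrt n\,\norm g}
\end{equation*}
and bounding the numerator using $\sqrt m\le\sqrt n$ yields $\big|\norm{Px}-\sqrt{m/n}\big|\le\delta_3$. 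Finally, the complement of the intersection forces $|\norm g-\Ex{\norm g}|>\sqrt n\,\delta_3/4-1/\sqrt n$ or $|\norm{Pg}-\Ex{\norm{Pg}}|>\sqrt n\,\delta_3/4-1$, so the Gaussian concentration bounds above give a tail of the form $2e^{-cn\delta_3^2}$ for an absolute $c>0$. The only point requiring a little care is the regime where $n\delta_3^2$ is small (so the means are not negligibly close to $\sqrt m,\sqrt n$ relative to $\sqrt n\,\delta_3$): there the claimed right-hand side $2e^{-cn\delta_3^2}$ already exceeds $1$ once $c$ is chosen small enough, so the inequality holds trivially; choosing the absolute constant to cover both regimes finishes \eqref{eq:35}. (Alternatively one can use that $\norm{Px}^2\sim\mathrm{Beta}(m/2,(n-m)/2)$ and invoke a Beta-tail bound, but the Gaussian route above is cleaner to write.)

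For the last sentence of the lemma, if $P$ is a uniformly random $m$--dimensional projection independent of $x$, I simply condition on $P$: the bound just proved holds for every fixed realization of $P$ with the same absolute constant $c$, hence it survives taking expectation over $P$.
\end{proof}
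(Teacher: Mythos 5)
Your sketch is correct, and it gives a genuinely self-contained derivation of a result that the paper simply cites from Vershynin's book without reproducing the argument. The route is different from Vershynin's, though: Vershynin (Lemma 5.3.2) observes that $x\mapsto\|Px\|_2$ is $1$--Lipschitz on the sphere, applies concentration of Lipschitz functions on $\sqrt{n}S^{n-1}$ (which he derives from L\'evy's isoperimetric inequality, the same tool the paper uses in its Lemma~\ref{lem:2}), and then estimates the mean $\Ex{\|Px\|}\approx\sqrt{m/n}$ — a one-shot argument controlling a single random quantity. You instead transfer to the Gaussian model $x=g/\|g\|$ and control the numerator $\|Pg\|$ and the denominator $\|g\|$ separately via Gaussian concentration of Lipschitz functions, then handle the ratio by an algebraic decomposition; this avoids invoking spherical concentration explicitly at the price of a union bound over two events and an extra factor that must be absorbed into the constant $c$. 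Both approaches are valid. Two small points worth spelling out in a full write-up: (i) the step ``$\|g\|\ge\sqrt{n}/2$ on the intersection'' implicitly requires $\delta_3\le 2$, so you should note at the start that one may assume $\delta_3<1$ (for $\delta_3\ge 1$ the left-hand side is identically $0$ since $\|Px\|,\sqrt{m/n}\in[0,1]$); (ii) the union bound over the two bad events produces a leading factor $4$, not $2$, so absorbing it into $c$ requires one additional halving of the exponent rate — routine, but your sketch passes over it silently. Your closing remark about the $\mathrm{Beta}(m/2,(n-m)/2)$ alternative is accurate and is in fact close in spirit to the Beta-distribution arguments the paper uses elsewhere (Claim~\ref{claim:sc_nc1}), and your conditioning argument for random $P$ is exactly right.
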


Now we need to prove that a similar result holds for the complex variable case as well. We have the following lemma 
\begin{lemma}
\label{lem:4}
Let $z\distas{}Unif{\cs^{n-1}}$ and $P$ be a projection to an $m$ dimensional subspace $V$ of $\mathbb{C}^n$. Then for any $\delta_3>0$, we have
\begin{equation}
\label{eq:36}
\Pb{\left|\norm{Pz}-\sqrt{\frac{m}{n}}\right|>\delta_3}\leq 2e^{-2cn\delta_3^2}
\end{equation}
where $c$ is some absolute constant. Hence, by symmetry, the result remains true if $P$ is a uniform random projection, independent of $z$.
\begin{proof}
Consider $\norm{Pz}$. Let $U$ be the unitary change of basis matrix which converts $V$ to first $m$ coordinates. Hence $\norm{Pz}=\norm{UPz}$. Therefore we can just consider the orthogonal projection onto first $m$ coordinates. Hence the projection matrix $P$ is real. Let $e_1,...,e_m$ be the standard basis corresponding to the first $m$ coordinates. Let $A$ be the $n\times m$ matrix whose columns are $e_1,...,e_m$. Then $P=AA^*$ ($*$ denotes conjugate transpose). Since $A$ is real, we have $Re(Pz)=AA^* Re(z)$ and $Im(Pz)=AA^*Im(z)$. Now, if $z\distas{}Unif(\cs^{n-1})$ then $Re(z)$ has same law as $Im(z)$. Hence $Re(Pz)$ has same law as $Im(Pz)$. Further $A^*=A^T$. Also note that, if $z=x+iy$ then $\norm{Pz}^2=z^* AA^* Z=x^T A A^T x+y^T AA^Ty= \begin{bmatrix} x^T & y^T \end{bmatrix} \begin{bmatrix} AA^T & 0 \\ 0 & AA^T \end{bmatrix} \begin{bmatrix} x \\ y \end{bmatrix}= \norm{\hat{P}\begin{bmatrix} x\\y \end{bmatrix}} $ where $\hat{P}$ denotes the orthogonal projection from $\mathbb{R}^{2n}$ to a $2m$ dimensional subspace. Hence $\norm{Pz}^2$ has the same law as that of the projection of a uniform random vector on $S^{2n-1}$ to a $2m$ dimensional subspace. Hence using lemma \ref{lem:3}, we have
\begin{equation}
\label{eq:37}
\Pb{\left|\norm{Pz}-\sqrt{\frac{m}{n}}\right|>\delta_3}\leq 2e^{-2cn\delta_3^2}
\end{equation}
\end{proof}
\end{lemma}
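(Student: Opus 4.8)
The plan is to reduce the complex statement to the real-valued concentration bound already quoted as Lemma~\ref{lem:3}, using the standard identification $\mathbb{C}^n\cong\mathbb{R}^{2n}$. First I would normalize the geometry: since the uniform law on $\cs^{n-1}$ is invariant under unitary maps and $\norm{\cdot}$ is unitarily invariant, I may pre-apply a unitary $U$ carrying the subspace $V$ onto $\mathrm{span}(e_1,\dots,e_m)$, so that $\norm{Pz}$ has the same law as the norm of the coordinate projection onto the first $m$ coordinates of a uniform $z\distas{}\mathrm{Unif}(\cs^{n-1})$. In this reduced picture $P=AA^*$ with $A=[e_1\cdots e_m]$ is a \emph{real} matrix.

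Next, writing $z=x+iy$ with $x,y\in\mathbb{R}^n$, I would observe that $\norm{z}^2=\norm{x}^2+\norm{y}^2$, so ``$z$ uniform on the complex unit sphere'' is literally the statement that $w=(x,y)\in\mathbb{R}^{2n}$ is uniform on the real sphere $S^{2n-1}$. A one-line computation then gives $\norm{Pz}^2=z^*AA^*z=x^TAA^Tx+y^TAA^Ty=\norm{\hat P w}^2$, where $\hat P=\mathrm{diag}(AA^T,AA^T)$ is the orthogonal projection of $\mathbb{R}^{2n}$ onto a $2m$-dimensional subspace. Now I would simply invoke Lemma~\ref{lem:3} with ambient dimension $2n$ and target dimension $2m$: since $\sqrt{2m/(2n)}=\sqrt{m/n}$, it yields $\Pb{\left|\norm{Pz}-\sqrt{m/n}\right|>\delta_3}\le 2e^{-c(2n)\delta_3^2}=2e^{-2cn\delta_3^2}$.

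Finally, for the ``uniform random projection'' version I would condition on $P$, apply the deterministic bound (whose right-hand side does not depend on $P$), and take expectations; equivalently, by rotational invariance the law of $\norm{Pz}$ is unchanged whether $V$ is fixed or Haar-distributed and independent of $z$. The only point needing care is the bookkeeping of the identification $\mathbb{C}^n\cong\mathbb{R}^{2n}$ — namely that Haar measure on the complex unit sphere corresponds to Haar measure on the real unit sphere of twice the real dimension, and that a complex orthogonal projector of complex rank $m$ becomes a real orthogonal projector of rank $2m$. Both facts are immediate from writing $z=x+iy$, and once they are in hand everything else is a direct appeal to Lemma~\ref{lem:3}; there is no genuine obstacle here, this being a packaging lemma whose purpose is just to port the real concentration estimate to the complex sphere used in the rest of the argument.
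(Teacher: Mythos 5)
Your proposal is correct and follows essentially the same route as the paper's proof: a unitary reduction to a coordinate projection, the identification $z=x+iy\in\cs^{n-1}$ with $(x,y)\in S^{2n-1}$, the observation $\norm{Pz}^2=\norm{\hat P(x,y)}^2$ for a real projector $\hat P$ of rank $2m$, and then an application of Lemma~\ref{lem:3} in ambient dimension $2n$. The only (harmless) addition is your explicit handling of the random-projection case; the paper dispatches it with the same symmetry remark in the lemma statement.
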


Since $H_i\distas{}\cn(0,1)$, we have $|H_i|^2\distas{}\frac{1}{2}\chi_2(2)=\text{exp}(1)$ where $\chi_2(d)$ denotes the chi-squared distribution with $d$ degrees of freedom and $\text{exp}(1)$ represents an exponentially distributed random variable with rate $1$. Therefore, for $\nu\geq 0$,
\begin{equation}
\label{eq:38}
\Pb{|H_i|^2\geq \nu}=e^{-\nu}
\end{equation}

Now, we are in a position to bound $\Pb{E_{12}}$.

 For $S\subset[K]$ with $|S|=t$, define the events $E_2$, $E_3$ and $E_4$ as follows:
\begin{IEEEeqnarray*}{LLL}
\label{eq:40}
& E_2=\bigcap_{i<j:i,j\in [K]}\left\{|\dop{\hat{c}_i}{\hat{c}_j}|\leq \delta_2\right\}\IEEEyessubnumber\\
& E_3(S,t)=\bigcap_{i\in S}\left\{\left|\norm{\Pc{S^c}\hat{c}_i}-\sqrt{\frac{K-t}{n}}\right|\leq\delta_3\right\}\IEEEyesnumber \IEEEyessubnumber\\
& E_4=\bigcap_{i\in [K]}\left\{|H_i|^2\leq \nu\right\}\IEEEyessubnumber
\end{IEEEeqnarray*}
where we choose $\delta_2=n^{-\frac{1}{3}}=\delta_3$ and $\nu=n^{\frac{1}{4}}$. Hence we have

\begin{IEEEeqnarray*}{LLL}
\label{eq:41}
\Pb{E_{12}} &\leq &   \Pb{\bigcup_{t,S}\left( A_S\cap E_2\cap E_3\cap E_4\right)} \\
&&+ \Pb{E^c_2}+\Pb{ E^c_4}+\sum_{t,S} \Pb{ E^c_3(S,t)}.\IEEEyesnumber
\end{IEEEeqnarray*}

Using lemmas \ref{lem:2} and \ref{lem:4} and eq. \eqref{eq:38}, we have 
\begin{IEEEeqnarray*}{LLL}
\label{eq:42}
\Pb{E^c_2}+\Pb{ E^c_4}+\sum_{t,S} \Pb{ E^c_3(S,t)}\\
\leq 2K(K-1)e^{-\frac{n\delta_2^2}{2}} +Ke^{-\nu}+\sum_{t,S}2te^{-cn\delta_3^2}.\IEEEyesnumber
\end{IEEEeqnarray*}
Note that the above quantity goes to $0$ as $n\to \infty$ due to the choice of $\delta_2$, $\delta_3$ and $\nu$. Also, the choice of parameters is not the optimum. Nevertheless, this is enough to prove the result.

Let $$\delta_{1,t}=\left(\delta_2+\left(\delta_3+\sqrt{\frac{K-t}{n}}\right)^2\right)$$ $$\delta_{2,t}=\left(\delta_3+\sqrt{\frac{K-t}{n}}\right)^2$$

 Observe that on the sets $E_2$, $E_3$ and $E_4$, we have from \eqref{eq:29}
\begin{IEEEeqnarray*}{LLL}
\label{eq:43}
 \left|Re\dop{\Pcp{S^c}c_i}{\Pcp{S^c}c_j} H_i\bar{H}_j\right|\leq  \nu\delta_{1,t} =  O(n^{-\frac{1}{12}}) \IEEEeqnarraynumspace\IEEEyesnumber \IEEEyessubnumber\\
 |H_i|^2\norm{\Pc{S^c}\hat{c}_i}^2\leq \nu \delta_{2,t}=O(n^{-\frac{5}{12}})\IEEEyessubnumber \IEEEeqnarraynumspace
\end{IEEEeqnarray*}

 So we have
{\allowdisplaybreaks
\begin{IEEEeqnarray*}{LLL}
\label{eq:44}
&& \Prb{\bigcup_{t,S}\left(A_S\cap E_2 \cap E_3 \cap E_4\right)} \\
 &\leq & \Prb{\bigcup_{t,S}\biggr\{V_{n,S}\biggr[1+\frac{nP}{n'}\biggr\{\sum_{i\in S}|H_i|^2\norm{\hat{c}_i}^2 \\
 &&-t\nu\delta_{2,t}  -t(t-1)\delta_{1,t} \biggr\}\biggr]-1\leq \delta_1\biggr\}}\\
&\leq & \Prb{\bigcup_{t,S}\biggr\{V_{n,S}\left[1+\frac{nP}{n'}\sum_{i\in S}|H_i|^2\right] \\
&&\leq 1+\delta_1+O(n^{-\frac{1}{12}})\biggr\}}\\
& \leq &\Pb{\bigcup_{t,S}\left\{\ln\left[1+P\sum_{i\in S}|H_i|^2\right]\leq V'_{n,S}\right\}} \IEEEyesnumber
\end{IEEEeqnarray*}
}
where $V'_{n,S}=\tilde{V}_{n,S}+\ln(1+\delta_1+O(n^{-1/12}))$, and $O$ depends on $K$ and $t$.

Let $\delta_n=\ln(1+\delta_1+O(n^{-1/12}))$. We have $\frac{\log M_S }{n-K}=\left(\sum_{i\in S}(R_i-\eta_i)\right) (1+o(1))$ and $s_t=O\left(\frac{\log n}{n}\right)$. 

By the choice of $M^{(n)}_i$, for sufficiently large $n $, sufficiently small $\delta$ and $\delta_1$, we have

{\allowdisplaybreaks
\begin{IEEEeqnarray*}{LLL}
\label{eq:46}
&&\Prb{\bigcup_{t,S} \biggr\{\ln\left[1+P\sum_{i\in S}|H_i|^2\right]\leq V'_{n,S}\biggr\}} \\
 &= & \Prb{\bigcup_{t,S}\ln\biggr[1+P\sum_{i\in S}|H_i|^2\biggr] \leq s_t+\frac{\ln M_S }{n-K}+\delta+\delta_n}\\
& =&\Prb{\bigcup_{t,S}\biggr\{\log\biggr[1+P\sum_{i\in S}|H_i|^2\biggr]\leq s_t\log_2 (e)+\\
&&\biggr(\sum_{i\in S}(R_i-\eta_i)\biggr) (1+o(1))+(\delta+\delta_n)\log_2 (e)\biggr\}}\\
&\leq &\Prb{\bigcup_{t,S}\biggr\{\log\biggr[1+P\sum_{i\in S}|H_i|^2\biggr]\leq \biggr(\sum_{i\in S}R_i\biggr) \biggr\}}\IEEEyesnumber
\end{IEEEeqnarray*}
}

Finally combining everything, we have
{\allowdisplaybreaks
\begin{IEEEeqnarray*}{LLL}
\label{eq:47}
&&\epsilon_n \\
&\leq &\Prb{\bigcup_{t,S}\biggr\{\log\biggr[1+P\sum_{i\in S}|H_i|^2\biggr]\leq s_t\log_2 (e)\\
&&+\biggr(\sum_{i\in S}(R_i-\eta_i)\biggr) (1+o(1))+(\delta+\delta_n)\log_2 (e)\biggr\}}\\
&& +2K(K-1)e^{-\frac{n^{1/3}}{2}} +Ke^{-n^{1/4}}\\
&&+ \sum_{t,S}\biggr[2te^{-cn^{1/3}}+e^{-\delta(n-K)}+e^{-nf_n(\delta_1)}\biggr].\IEEEyesnumber
\end{IEEEeqnarray*}
}
Therefore for this choice of $\left(M^{(n)}_i\right)$, from \eqref{eq:46} we have
{\allowdisplaybreaks
\begin{IEEEeqnarray*}{LLL}
\label{eq:48}
\limsup\limits_{n\to\infty}\epsilon_n \\ 
\leq \Pb{\bigcup_{t,S}\left\{\log\left[1+P\sum_{i\in S}|H_i|^2\right]\leq \left(\sum_{i\in S}R_i\right)\right\} }\\
\leq \epsilon \IEEEyesnumber
\end{IEEEeqnarray*}
}

Since $\eta_i>0$ were arbitrary, we are done. That is \eqref{eq:1b} is also satisfied.

\end{proof}
\subsection{Per-user error}
\label{app:per_user}
\begin{proof}[Proof of theorem \ref{th:pu1}]
We need to show that there exists a sequence of $\left((M_1^{(n)},M_2^{(n)},...,M_K^{(n)}),n,\epsilon_n\right)_{PU}$ codes with the decoder given by \eqref{eq:pu1} and \eqref{eq:pu2} such that 
\begin{IEEEeqnarray}{LL}
\liminf_{n\to\infty}\frac{1}{n}\log\left(M_{i}^{(n)}\right)\geq R_i,\forall i\in [K]\label{eq:pu4a}\\
\limsup_{n\to\infty}\epsilon_n\leq \epsilon\label{eq:pu4b}.
\end{IEEEeqnarray}

Let $P_e^S (R)<\epsilon$ and $\eta_i>0,i\in[K]$. Choose $M_i^{(n)}=\lceil e^{n(R_i-\eta_i)}\rceil, \forall i\in [K]$. We use random coding with Gaussian codebooks: user $i$ generates $M_i$ codewords $\{c_{j}^{i}:j\in[M_i]\}\distas{iid}\cn(0,P'_n I_n)$ independent of other users, where $P'_n=\frac{P}{1+n^{-\frac{1}{3}}}$. Here $\cn(\mu,\Sigma)$ denotes the complex normal distribution with mean $\mu$, covariance $\Sigma$ and pseudo-covariance $0$. For the (random) message $W_{i}\in[M_i]$, user $i$ transmits $X_i=c^{i}_{W_{i}} 1\{\norm{c^{i}_{W_{i}}}^2>nP\}$. The channel model is given in \eqref{eq:sys1} and the decoder is given by \eqref{eq:pu1} and \eqref{eq:pu2}. The per-user probability of error is given by \eqref{eq:def2}
\begin{equation}
\label{eq:pu5}
P_e= \Ex{\frac{1}{K}\sum_{j=1}^{K}1\left\{W_j\neq \left(g_{D}(Y)\right)_j\right\}}.
\end{equation}

Similar to the proof of \citep[Theorem 1]{polyanskiy2017perspective}, we change the measure over which $\mathbb{E}$ is taken in \eqref{eq:pu5} to the one where $X_i=c^{i}_{W_{i}}$ at the cost of adding a total variation distance. Hence the probability of error under this change of measure becomes
\begin{IEEEeqnarray*}{LLL}
\label{eq:pu6}
P_e\leq & p_1 +p_0
\end{IEEEeqnarray*}

with 
\begin{IEEEeqnarray}{LLL}
p_0=K\Pb{\norm{w}^2>n\frac{P}{P'_n}}\label{eq:pu7}\\
p_1=\Ex{\frac{1}{K}\sum_{j=1}^{K}1\left\{W_j\neq \left(g_{D}(Y)\right)_j\right\}}\label{eq:pu8}
\end{IEEEeqnarray}
where $w\distas{}\cn(0,I_n)$ and, with abuse of notation, $\mathbb{E}$ in $p_1$ is taken over the new measure. It can be easily seen that by the choice of $P'_n$ and lemma \ref{lem:chi2}, $p_0\to 0$ as $n\to\infty$. From now on, we exclusively focus on bounding $p_1$.

$p_1$ can also be written as 
\begin{IEEEeqnarray*}{LL}
\label{eq:pu9}
p_1& =\frac{1}{K}\Ex{\sum_{i\in D}1\left\{W_j\neq \left(g_{D}(Y)\right)_j\right\}+|D^c|}\\
&=1-\frac{\Ex{D}}{K}+\frac{1}{K}\Ex{\sum_{i\in D}1\left\{W_j\neq \left(g_{D}(Y)\right)_j\right\}}\IEEEyesnumber
\end{IEEEeqnarray*}
because, for $i\in D^c$, $1\left\{W_j\neq \left(g_{D}(Y)\right)_j\right\}=1,\ a.s$. Define $p_2$ as

\begin{equation}
\label{eq:pu10}
p_2=\Pb{\sum_{i\in D}1\left\{W_j\neq \left(g_{D}(Y)\right)_j\right\}>0}.
\end{equation}
So, it's enough to show that $p_2\to 0$ as $n\to\infty$. This is because, if $p_2\to 0$, then the non-negative random variables $A_n=\sum_{i\in D}1\left\{W_j\neq \left(g_{D}(Y)\right)_j\right\}$ converge to $0$ in probability. Since $A_n\leq K,\ a.s$, we have, by dominated convergence, $\Ex{A_n}=\Ex{\sum_{i\in D}1\left\{W_j\neq \left(g_{D}(Y)\right)_j\right\}}\to 0 $. To this end, we upper bound $p_2$.

Let $c=(c_1\in\mathcal{C}_1,...,c_K\in\mathcal{C}_K)$ be the tuple of sent codewords. Let $K_1=|D|$. Let $c_{(D)}$ denote the ordered tuple corresponding to indices in $D$. That is, if $i_1<i_2<...<i_{K_1}$ are the elements of $D$, then $(c_{(D)})_{j}=c_{i_j},\forall j\in[K_1]$. Let $B_S=\left\{ \norm{\Pd{S}{S^c}Y}^2>\norm{\Pc{D}Y}^2 \right\}$. Then $p_2$ can also be written as
\begin{IEEEeqnarray}{LLL}
\label{eq:pu11}
p_2 &=&\Prb{\sum_{i\in D}1\left\{W_j\neq \left(g_{D}(Y)\right)_j\right\}>0}\\
&=&\Pb{\exists S\subset D, S\neq \emptyset: \forall i\in S, (g_D(Y))_i\neq W_i}\\
& =& \Prb{\exists c'_{(D)}\neq c_{(D)}: \norm{P_{c'_{[D]}}Y}^2>\norm{\Pc{D}Y}^2}\\
&=& \Prb{\bigcup _{t\in[K_1]}\bigcup_{\substack{S\subset D\\ |S|=t}} \bigcup_{\substack{c'_i\in \mathcal{C}_i\setminus\{c_i\}\\ i\in S}} B_S}.
\end{IEEEeqnarray}

Let $\delta>0$, $G_S\equiv g(Y,c_{[K]},S,D)=\frac{\norm{Y}^2-\norm{\Pc{D}Y}^2}{\norm{Y}^2-\norm{\Pc{S^c}Y}^2}$, $M_{S}=\prod_{j\in S}(M_j-1)$, $s_t=(t-1)\frac{\ln(n-K_1+t-1)}{n-K_1}$, $r_t=s_t+\frac{\ln M_S }{n-K_1}$, $\tilde{V}_{n,S}=r_t+\delta$ and $V_{n,S}=e^{-\tilde{V}_{n,S}}$. Denote $\bigcup_{t\in[K_1]}\bigcup_{\substack{S\subset D\\ |S|=t}}$ as $\bigcup_{t,S,K_1}$, similarly for $\bigcap$ and $\sum$. Further, denote $\bigcup_{t\in[K]}\bigcup_{\substack{S\subset [K]\\ |S|=t}}$ as $\bigcup_{t,S}$, again, similarly for $\bigcap$ and $\sum$.

 Note that, since $D$ is random, both $M_S$ and $V_{n,S}$ are random. But in the symmetric case only $M_S$ is not random. Now, following steps similar to \eqref{eq:8}, \eqref{eq:9}, \eqref{eq:11} and \eqref{eq:13}, we have
\begin{IEEEeqnarray*}{LLL}
p_2 & \leq & \Exa{\sum_{t,S,K_1}e^{-(n-K_1)\delta} }+\Prb{\bigcup_{t,S,K_1} G_S\geq V_{n,S}} \label{eq:pu12a}\IEEEyesnumber\\
&\leq &\sum_{t,K}e^{-(n-K)\delta} +\Prb{\bigcup_{t,S,K_1} G_S\geq V_{n,S}}. \label{eq:pu12b}\IEEEyesnumber\\
\end{IEEEeqnarray*}
So, the first term goes to $0$ as $n\to\infty$.

Let $Z_D=Z+\sum_{i\in D^c}H_i c_i$. It can be easily seen that, similar to \eqref{eq:14}, we have
\begin{IEEEeqnarray*}{LLL}
\label{eq:pu13}
&&\Prb{G_S\geq V_{n,S}} \\
&\leq & \Prb{\norma{(1-V_{n,S})\Pcp{S^c}Z_D-V_{n,S}\Pcp{S^c}\hc}^2 \\
&&\geq V_{n,S}\norma{\Pcp{S^c}\hc}^2}.\IEEEyesnumber
\end{IEEEeqnarray*}
 
 Now, conditional of $H_{[K]}$ and $c_{[D]}$, $Z_D\distas{}\cn(0,(1+P'_n\sum_{i\in D^c}|H_i|^2))$. Hence
 \begin{\Ieee}{LLL}
 \Pcp{S^c}\biggr(Z_D-\frac{V_{n,S}}{1-V_{n,S}}\hc\biggr) \distas{}\\
\cn\biggr(-\frac{V_{n,S}}{1-V_{n,S}}\Pcp{S^c}\hc,(1+P'_n\sum_{i\in D^c}|H_i|^2)\Pcp{S^c}\biggr)
 \end{\Ieee}
  Therefore
 \begin{IEEEeqnarray*}{LLL}
\label{eq:pu14}
\norm{\Pcp{S^c}\left(Z_D-\frac{V_{n,S}}{1-V_{n,S}}\hc\right)}^2 \\
\distas{}\phd\frac{1}{2}\chi'_2\left(2F,2 n'\right)\IEEEyesnumber
\end{IEEEeqnarray*}
where 
\begin{IEEEeqnarray}{LLL}
F=\frac{\norm{\frac{V_{n,S}}{1-V_{n,S}}\Pcp{S^c}\hc}^2}{\phd}\label{eq:pu15a}\\
n'=n-K_1+t\label{eq:15b}.
\end{IEEEeqnarray}

Let 
\begin{IEEEeqnarray}{LLL}
U=\frac{V_{n,S}}{(1-V_{n,S})}\frac{\norm{\Pcp{S^c}\hc}^2}{\phd}-n' \label{eq:pu16a}\\
U^1=\frac{1}{1-V_{n,S}}\left(V_{n,S}W_S-1\right)\label{eq:pu16b}\IEEEeqnarraynumspace
\end{IEEEeqnarray}
where $W_S= \left(1+\frac{\norm{\Pcp{S^c}\hc}^2}{n'\phd}\right)$

Hence $U=n'U^1$ and $F=\frac{V_{n,S}}{1-V_{n,S}}n'(1+U^1)$. So, similar to \eqref{eq:18}, we have
\begin{IEEEeqnarray*}{LLL}
\label{eq:pu17}
& \Pb{\bigcup_{t,S,K_1} \left\{G_S\geq V_{n,S}\right\}}\leq \Pb{\bigcup_{t,S,K_1}\left\{T\geq U\right\}}\IEEEyesnumber
\end{IEEEeqnarray*}
where $T=\frac{1}{2}\chi'_2(2F,2n')-(F+n')$.

Let $\delta_1>0$ and $E_{11}=\bigcap_{t,S,K_1}\{U^1>\delta_1\} \in \sigma (H_{[K]},c_{[D]})$. 

Now, similar to \eqref{eq:25}, we have

\begin{IEEEeqnarray*}{LLL}
\label{eq:pu18}
\Prb{\bigcup_{t,S,K_1}\left\{ T\geq U\right\}}\\
 \leq \Exa{\sum_{t,S,K_1}e^{-n' f_n(\delta_1)}}+\Pb{E_{11}^c}. \IEEEyesnumber
\end{IEEEeqnarray*}
where $f_n$ (now a random function) was defined in \eqref{eq:24}. So, again by claim \ref{claim:sc_nc4} and dominated convergence, the first term in \eqref{eq:pu18} converges to $0$ as $n\to \infty$. Next, we upper bound the second term $\Pb{E^c_{11}}$.

Let $A_S=\left\{V_{n,S} W_S-1 \leq \delta_1\right\}$ and $E_{12}=\bigcup_{t,S,K_1}A_S$. Similar to \eqref{eq:26b}, we have

\begin{IEEEeqnarray*}{LLL}
\label{eq:pu19}
\Pb{E^c_{11}}=\Pb{\bigcup_{t,S,K_1}\{U^1\leq \delta_1\}}\leq \Pb{E_{12}}.\\
\IEEEyesnumber \IEEEeqnarraynumspace
\end{IEEEeqnarray*}

Let $\hat{c}_i=c_i/\norm{c_i}$. Let $\delta_2>0$, $\delta_3>0$, $\delta_4>0$ and $\nu>1$. Define the events 
\begin{IEEEeqnarray*}{LLL}
\label{eq:pu20}
& E_2=\bigcap_{i<j:i,j\in [K]}\left\{|\dop{\hat{c}_i}{\hat{c}_j}|\leq \delta_2\right\} \IEEEyesnumber \IEEEyessubnumber\\
& E_3(S,t)=\bigcap_{i\in S}\left\{\left|\norm{\Pc{S^c}\hat{c}_i}-\sqrt{\frac{K_1-t}{n}}\right|\leq\delta_3\right\} \IEEEyessubnumber\\
& E_4=\bigcap_{i\in [K]}\left\{|H_i|^2\leq \nu\right\}\IEEEyessubnumber\\
& E_5=\bigcap_{i\in [K]}\left\{|\norm{c_i}-\sqrt{nP'_n}\leq \delta_4 \sqrt{nP'}|\right\}\IEEEyessubnumber
\end{IEEEeqnarray*}
and choose $\delta_2=O(n^{-\frac{1}{3}})=\delta_3=\delta_4$ and $\nu=O(n^{1/4})$.

Using these events we can bound $\Pb{E^c_{11}}$ as
\begin{IEEEeqnarray*}{LLL}
\label{eq:pu21}
\Pb{E^c_{11}} &\leq &  \Prb{\bigcup_{t,S,K_1}\biggr( A_S\cap E_2\cap E_3(S,t)\cap E_4\cap E_5\biggr)} \\
&&+\Pb{ E^c_2}+\Pb{ E^c_4}+\Pb{E^c_5}\\
&&+ \Exa{\sum_{t,S,K_1} \Pb{E^c_3(S,t)|H_{[K]}}}.\IEEEyesnumber
\end{IEEEeqnarray*}

From \citep[Theorem 3.1.1]{vershynin2018high}, we have 
\begin{IEEEeqnarray}{LL}
\label{eq:pu22}
\Pb{E_5^c}\leq 2Ke^{-c_1 n\delta_4^2}
\end{IEEEeqnarray}
for some constant $c_1>0$.
So, from lemma \ref{lem:2}, lemma \ref{lem:4}, \eqref{eq:38} and \eqref{eq:pu22}, we have

\begin{IEEEeqnarray*}{LLL}
\label{eq:pu23}
\Pb{E^c_{11}} &\leq &  \Pb{\bigcup_{t,S,K_1}\left( A_S\cap E_2\cap E_3(S,t)\cap E_4\cap E_5\right)}  + \\
&& 2K(K-1)e^{-\frac{n\delta_2^2}{2}}+ +Ke^{-\nu}+2Ke^{-c_1n\delta_4^2}\\
&&+ \sum_{t,S}2te^{-cn\delta_3^2}.\IEEEyesnumber
\end{IEEEeqnarray*}

So, by the chose of $\delta_i, i\in\{2,3,4\}$ and $\nu$, the exponential terms in the last expression go to $0$ as $n\to \infty$. 

Let $N=\phd$, $$\delta_{1,t}=\left(\delta_2+\left(\delta_3+\sqrt{\frac{K_1-t}{n}}\right)^2\right),$$ $$\delta_{2,t}=\left(\delta_3+\sqrt{\frac{K_1-t}{n}}\right)^2.$$ 

 Let $SINR_n=\frac{{P'_{n}}\sum_{i\in S}|H_i|^2}{N}$. Now, arguing similar to \eqref{eq:44}, we get

{\allowdisplaybreaks
\begin{IEEEeqnarray*}{LLL}
\label{eq:pu24a}
&&  \Prb{\bigcup_{t,S,K_1}\biggr( A_S\cap E_2\cap E_3(S,t)\cap E_4\cap E_5\biggr)} \\
&\leq & \Prb{\bigcup_{t,S,K_1} \biggr\{ V_{n,S}\biggr[1+\biggr\{\frac{n(1-\delta_4)^2}{n'}SINR_n \\
&&-(1+\delta_4)^2\biggr(\frac{n{P'_{n}}}{n-K} t\nu\delta_{2,t} +\frac{n{P'_{n}} t(t-1)}{n-K}\delta_{1,t}\biggr) \biggr\}\biggr] \\
&&-1\leq\delta_1\biggr\}}\IEEEyesnumber\\
& \leq &\Prb{\bigcup_{t,S,K_1}\biggr\{\ln\left[1+SINR_n\right]\leq  V'_{n,S}\biggr\}}\label{eq:pu24b} \IEEEyesnumber
\end{IEEEeqnarray*}
}
where $V'_{n,S}=\tilde{V}_{n,S}+\ln(1+\delta_1+O(n^{-1/12}))$. 

Let $\delta_n=\ln(1+\delta_1+O(n^{-1/12})$. We have $\frac{\log M_S }{n-K}=\left(\sum_{i\in S}(R_i-\eta_i)\right) (1+o(1))$. There for sufficiently large $n$ and sufficiently small $\delta$ and $\delta_1$, we have $V'_{n,S}\leq \sum_{i\in S}R_i\ a.s.$ Hence 
\begin{IEEEeqnarray*}{LLL}
\Pb{\bigcup_{t,S,K_1}\log\left[1+SINR_n\right]\leq \log_2(e) V'_{n,S}}\\
 \leq \Pb{\bigcup_{t,S,K_1}\log\left[1+SINR_n\right]\leq  \sum_{i\in S}R_i}. \label{eq:pu25}\IEEEyesnumber
\end{IEEEeqnarray*}

But we know that $P'_n\to P$, and on $D$, from \eqref{eq:pu1} we have 
\begin{equation}
\label{eq:pu26}
\sum_{i\in S}R_i<\log\left(1+\frac{P\sum_{i\in S}|H_{i}|^2}{1+P\sum_{i\in D^c}|H_i|^2}\right)\ a.s.
\end{equation}

Hence the probability in \eqref{eq:pu25} goes to $0$ as $n\to \infty$. 

So combining everything from \eqref{eq:pu12b}, \eqref{eq:pu19}, \eqref{eq:pu23}, \eqref{eq:pu24a}, \eqref{eq:pu24b}, \eqref{eq:pu25} and \eqref{eq:pu26}, we get $p_2\to 0$ as $n\to \infty$. Therefor $p_1\to 1-\frac{\Ex{D}}{K}$ as $n\to \infty$. Hence we have
\begin{IEEEeqnarray}{LLL}
\label{eq:pu27}
\epsilon_n=P_e \to 1-\frac{\Ex{D}}{K} <\epsilon.
\end{IEEEeqnarray}

Hence $\limsup_{n\to\infty}\epsilon_n\leq \epsilon$. Further, since $\eta_i>0$ were arbitrary, we can ensure $\liminf_{n\to \infty}\frac{1}{n}\log M_i^{(n)}\geq R_i,\forall i\in [K]$.

\end{proof}

\subsection{Proof of proposition \ref{prop:conv}}
\label{app:per_user_converse}
\begin{proof}
 We prove the second upper bound in \eqref{eq:pu_conv}. This is based on a single-user converse using the genie argument. Formally, since we consider per-user error, it is enough to look at the event that a particular user is not decoded. Let $W_i\distas{iid} unif[M]$ be the message of user $i$. The channel \eqref{eq:sys1} can be written as $Y=H_1 X_1 +\hat{Z}+Z$ where $\hat{Z}=\sum_{i=2}^{K}H_i X_i$ denotes the interference.  Let $L(Y)$ be the decoder output. Also, let $L(Y,\hat{Z})$ be the decoder output when it has knowledge of $\hat{Z}$. Hence a converse bound $\Pb{W_1\neq \left(L(Y)\right)_1}\geq \epsilon$ is implied by $\Pb{W_1\neq \left(L(Y,\hat{Z})\right)_1}\geq \epsilon$ for all $L(\cdot,\cdot)$. Since $Y-\hat{Z}$ is a sufficient statistic of $(Y,\hat{Z})$ for $W_1$, we have, equivalently, $\Pb{W_1\neq \left(L(Y-\hat{Z})\right)_1}\geq \epsilon$ for all $L(\cdot)$. Letting $\hat{Y}=Y-\hat{Z}$, this is equivalent to a converse for the channel $\hat{Y}=H_1 X_1+Z$: $\Pb{W_1\neq \left(L(\hat{Y})\right)_1}\geq \epsilon$ for all $L(\cdot)$. This is just the usual single user converse, and hence the bound is given by $R\leq C_{\epsilon}=\sup\{\xi:\Pb{\log_2(1+P|H_1|^2)\leq \xi}\leq\epsilon\}=\log_2(1-P\ln(1-\epsilon))$\citep{yang2014quasi}.

\end{proof}

\section{Proofs of certain claims}
\label{app:2}
\begin{proof}[Proof of claim \ref{claim:sc_nc2}]
We have $\norm{Y}^2-\norm{P_{A_0}Y }^2=\norm{\PP_{A_0}\hat{Z}}^2\leq \norm{\hat{Z}}^2-\norm{P_{A_1}\hat{Z}}^2=\norm{\PPA \hat{Z}}^2$.

 Also, $\norm{\PPA Y}^2=\norm{\PPA\hcc +\PPA \hat{Z}}^2$. Hence we have
{\allowdisplaybreaks
\begin{IEEEeqnarray*}{LLLL}
\label{eq:A1}
p_2 &=& \Prb{\bigcup_{t,S,K_1}\biggr\{\frac{\norm{Y}^2-\norm{P_{A_0}Y}^2}{\norm{Y}^2-\norm{P_{A_1}Y}^2}\geq V_{n,t}\biggr\}} \\
& =& \Prb{\bigcup_{t,S,K_1}\biggr\{\norm{\hat{Z}}^2-\norm{P_{A_0}\hat{Z}}^2\\
&&\geq V_{n,t}\norma{\PPA\hcc+\PPA \hat{Z}}^2\biggr\}}\\
& \leq &\Prb{\bigcup_{t,S,K_1}\biggr\{\norm{\PPA \hat{Z}}^2\\
&&\geq V_{n,t}\norma{\PPA\hcc+\PPA \hat{Z}}^2\biggr\}}\\
& =&\Prb{\bigcup_{t,S,K_1}\biggr\{(1-V_{n,t})\norm{\PPA \hat{Z}}^2 \\
&&-2V_{n,t} Re \dop{\PPA \hat{Z}}{\PPA \hcc}\\
&&\geq V_{n,t}\norma{\PPA\hcc}^2 \biggr\}}\\
& =& \Prb{\bigcup_{t,S,K_1}\biggr\{(1-V_{n,t})^2\norm{\PPA \hat{Z}}^2  \\
&&-2V_{n,t} (1-V_{n,t}) Re \dop{\PPA \hat{Z}}{\PPA\hcc}\\
&&\geq V_{n,t} (1-V_{n,t})\norma{\PPA\hcc}^2 \biggr\}}\\
& =&\Prb{\bigcup_{t,S,K_1}\biggr\{\norma{(1-V_{n,t})\PPA \hat{Z}-V_{n,t}\PPA\hcc}^2 \\
&&\geq V_{n,t}\norma{\PPA\hcc}^2\biggr\}} \IEEEyesnumber
\end{IEEEeqnarray*}
} 
 
\end{proof}

\begin{proof}[Proof of claim \ref{claim:sc_nc3}]
Conditional of $H_{[K]}$ and $A_0$, $$\hat{Z}\distas{}\cn\biggr(0,\phsa\biggr).$$ Hence 
\begin{\Ieee}{LLL}
\label{eq:A1a}
\PPA\biggr(\hat{Z}-\frac{V_{n,t}}{1-V_{n,t}}\hcc\biggr)\distas{}\\
\cn\biggr(-\frac{V_{n,t}}{1-V_{n,t}}\PPA\hcc,\phsa\PPA\biggr)\\
\Ieeen
\end{\Ieee}

Now, the rank of $\PPA$ is $n-K_1+t$ because the vectors in $A_1$ are linearly independent almost surely. Let $\mathcal{U}$ be a unitary change of basis matrix that rotates the range space of $\PPA$ to the space corresponding to first $(n-K_1+t)$ coordinates. Then

{\allowdisplaybreaks
\begin{IEEEeqnarray*}{LLL}
\label{eq:A2}
&&\norma{\cn\biggr(-\frac{V_{n,t}}{1-V_{n,t}}\PPA\hcc,\\
&&\phsa\PPA\biggr)}^2 \\
&=&\norma{\mathcal{U} \biggr(\cn\biggr(-\frac{V_{n,t}}{1-V_{n,t}}\PPA\hcc,\\
&& \phsa\PPA\biggr)\biggr)}^2 \\
&=&\norma{\cn\biggr(-\frac{V_{n,t}}{1-V_{n,t}}\mathcal{U}\PPA\hcc,\\
&&\phsa\mathcal{U}\PPA\mathcal{U}^*\biggr)}^2. \IEEEyesnumber
\end{IEEEeqnarray*}
}

Observe that $\mathcal{U} \PPA\mathcal{U}^*$ is a diagonal matrix with first $(n-K_1+t)$ diagonal entries being ones and rest all $0$. Also, if $W=P+iQ \distas{}\cn(\mu,\Gamma)$ (with pseudo-covariance being $0$) then 
\begin{equation}
\label{eq:gauss1}
\begin{bmatrix} P \\ Q \end{bmatrix} \distas{} \mathcal{N}\left(\begin{bmatrix} Re(\mu) \\ Im(\mu) \end{bmatrix}, \frac{1}{2}\begin{bmatrix}
Re(\Gamma) & -Im(\Gamma)\\ Im(\Gamma) & Re(\Gamma) \end{bmatrix}\right).
\end{equation}

Using this and the definition of non-central chi-squared distribution the claim follows. 

\end{proof}

\begin{proof}[Proof of Claim \ref{claim:sc_nc4}]
We have
{\allowdisplaybreaks
\begin{IEEEeqnarray*}{LLL}
\label{eq:A3}
&&f_n(x)\\
&= & x+1+\frac{2V_{n,t}}{1-V_{n,t}}(1+x)\\
&&-\sqrt{1+\frac{2V_{n,t}}{1-V_{n,t}}(1+x)}\sqrt{2x+1+\frac{2V_{n,t}}{1-V_{n,t}}(1+x)}\\
&=&\frac{1}{1-V_{n,t}}\biggr[(1+V_{n,t})(x+1)\\
&&-2\sqrt{V_{n,t}}\sqrt{\left(x+\frac{(1+V_{n,t})^2}{4V_{n,t}}\right)^2-\frac{(1-V_{n,t}^2)^2}{16V_{n,t}^2}}\biggr]\IEEEyesnumber
\end{IEEEeqnarray*}
}

Hence

\begin{IEEEeqnarray*}{LLL}
\label{eq:A4}
f'(x)&=& \frac{1}{1-V_{n,t}}\left[1+V_{n,t}-2\sqrt{V_{n,t}}\frac{a}{\sqrt{a^2-b^2}}\right]\\
&=& \frac{1}{1-V_{n,t}}\biggr(\sqrt{V_{n,t}}-\sqrt{\frac{a+b}{a-b}}\biggr)\cdot\\
&&\biggr(\sqrt{V_{n,t}}-\sqrt{\frac{a-b}{a+b}}\biggr)\\
\IEEEyesnumber
\end{IEEEeqnarray*}

where $a=\left(x+\frac{(1+V_{n,t})^2}{4V_{n,t}}\right)$ and $b=\frac{1-V_{n,t}^2}{4V_{n,t}}$. Also $a>0$ and $b>0$. Further $a+b>a-b$ and 
{\allowdisplaybreaks
\begin{IEEEeqnarray*}{LLL}
\sqrt{V_{n,t}}<\sqrt{\frac{a-b}{a+b}}=\sqrt{\frac{V_{n,t}(1+V_{n,t}+2x)}{1+V_{n,t}+2V_{n,t} x}}\\
\iff 2V_{n,t} x +1 +V_{n,t} < 2x+1+V_{n,t}\\
\iff 0<V_{n,t}<1
\end{IEEEeqnarray*}
}
which is true. Hence both the factors in \eqref{eq:A4} are negative. Therefore $f'(x)>0$.
\end{proof}

\section{Maximal per-user error}
\label{app:pupe-max}
In this section we briefly describe relations between maximal per-user error (PUPE-max) defined in \eqref{eq:pupe_max} and PUPE. First, we represent our system as in \eqref{eq:sys_comp_sens}
\begin{equation}
\label{eq:sys_comp_sens1}
Y=AHW+Z.
\end{equation}

Let $P_{e,i}(A)=\Pb{W_i\neq \hat{W}_i }$. We are interested in bounding the variance of $P_{e,i}(A)$ so that $$\Ex{P_{e,u}^{\max}(A)}=\Ex{\max\{P_{e,i}(A):i\in[K]\}}$$ can be related to $\Ex{P_{e,i}(A)}=\Ex{P_{e,u}}$ due to symmetry on users by random codebook generation. Consider two coupled systems
\begin{\Ieee}{LLL}
Y=AHW+Z\label{eq:pupe_max1}\Ieeen\\
Y'=A'HW +Z \label{eq:pupe_max2}\Ieeen
\end{\Ieee}
where $A$ and $A'$ are fixed so that the channels are dependent on these. 

Now we have
\begin{\Ieee}{LLL}
\label{eq:pupe_max3}
|P_{e,i}(A)-P_{e,i}(A')|&\leq & d_{TV}(P_{Y,H,W},P_{Y',H,W})\\
&\leq & \sqrt{\frac{1}{2}D\left(P_{Y,H,W}||P_{Y',H,W}\right)}\Ieeen
\end{\Ieee}
where $d_{TV}(P,Q)=\sup\{|P(A)-Q(A)|:A\, \mathrm{is\, measurable}\}$ is the total variation distance between measures $P$ and $Q$, $D(P||Q)=\mathbb{E}_{P}\left[\ln\frac{dP}{dQ}\right]$ is the Kullback-Leibler divergence (in nats) and the last inequality is the Pinsker's inequality (see \citep{verdu2014total}). Now using properties of $D$ (see \citep[Theorem 2.2]{polyanskiy2017lecture}) 
\begin{\Ieee}{LLL}
\label{eq:pupe_max4}
&&D\left(P_{Y,H,W}||P_{Y',H,W}\right)\\
& =& D(P_{Y|H,W}||P_{Y'|H,W}|P_{H,W})\\
&=& \int_{H,W}D(P_{Y|H=h,W=w}||P_{Y'|H=h,W=w})dP_{H,W}(h,w)\\
\Ieeen
\end{\Ieee}

Now note that conditioned on $H=h, W=w$, we have $Y\distas{}\cn(Ahw,I_n)$ and $Y'\distas{}\cn(A'hw,I_n)$. Hence a simple computation shows that $D(P_{Y|H=h,W=w}||P_{Y'|H=h,W=w})=\norm{Ahw-A'hw}^2$. Therefore we have

\begin{\Ieee}{LLL}
\label{eq:pupe_max5}
D\left(P_{Y,H,W}||P_{Y',H,W}\right)&=&\Ex{\norm{(A-A')HW}^2}.\Ieeen
\end{\Ieee}

Now let $B=A-A'$ and $X=HW$. Then 
\begin{\Ieee}{LLL}
\label{eq:pupe_max6}
\Ex{\norm{BX}^2}=\sum_{i\in[n]}\Ex{\sum_{j,k\in [KM]}B_{i,j}\bar{B}_{i,k}X_i\bar{X}_k}\Ieeen
\end{\Ieee}

Note that $\Ex{X_j\bar{X}_k}$ is zero if $j\neq k$ and it is $1/M$ otherwise. Hence 
\begin{\Ieee}{LLL}
\label{eq:pupe_max7}
\Ex{\norm{BX}^2}=\frac{1}{M}\sum_{i\in[n]}\sum_{j\in [KM]}B_{i,j}\bar{B}_{i,j}=\frac{1}{M}\norm{B}_F^2.\\
\Ieeen
\end{\Ieee}

Therefore 
\begin{\Ieee}{LLL}
\label{eq:pupe_max8}
D\left(P_{Y,H,W}||P_{Y',H,W}\right)=\frac{1}{M}\norm{A-A'}_F^2.\Ieeen
\end{\Ieee}

So combining this with \eqref{eq:pupe_max3}, we obtain
\begin{\Ieee}{LLL}
\label{eq:pupe_max9}
|P_{e,i}(A)-P_{e,i}(A')|\leq \sqrt{\frac{1}{2M}}\norm{A-A'}_F.\Ieeen
\end{\Ieee}

Now let each entry of $A$ and $A'$ to be distributed iid as $\cn(0,P)$ where $P=\PT/K$. Further, let $\tilde{A}=\sqrt{\frac{K}{\PT}}A$ and $\tilde{A'}=\sqrt{\frac{K}{\PT}}A'$. So the entries of $\tilde{A}$ and $\tilde{A'}$ are iid $\cn(0,1)$. Therefore, with slight abuse of notation, we can rewrite \eqref{eq:pupe_max9} as 
\begin{\Ieee}{LLL}
\label{eq:pupe_max10}
|P_{e,i}(\tilde{A})-P_{e,i}(\tilde{A'})|\leq \sqrt{\frac{\PT}{2MK}}\norm{\tilde{A}-\tilde{A'}}_F.\Ieeen
\end{\Ieee}

Hence the function $P_{e,i}$ is Lipschitz with Lipschitz constant $L=\sqrt{\frac{\PT}{2MK}}$. By concentration of Lipschitz functions of Gaussian random vectors \citep[Theorems 5.5, 5.6]{boucheron2013concentration}, we have that $P_{e,i}(\tilde{A})$ is sub-Gaussian with 
\begin{\Ieee}{LLL}
\label{eq:pupe_max11}
\mathrm{Var}(P_{e,i}(A))\leq 4L^2=\frac{2\PT}{KM}.\Ieeen
\end{\Ieee}

Hence, using bounds on the expected maximum of sub-Gaussian random variables (see~\citep[Section 2.5]{boucheron2013concentration}), we obtain
\begin{\Ieee}{LLL}
\label{eq:pupe_max12}
&&\Ex{\max_{i\in [K]}P_{e,i}(A)}\\
& \leq & \Ex{P_{e,u}}+\sqrt{\mathrm{Var}(P_{e,i}(A))\ln K}\\
&= &\Ex{P_{e,u}} +\sqrt{\frac{2\PT}{M}\frac{\ln K}{K}}\xrightarrow{K\to\infty} \Ex{P_{e,u}}.\Ieeen
\end{\Ieee}

Therefore, a random coding argument along with \eqref{eq:pupe_max12} shows that PUPE-max has same asymptotics as PUPE in the linear scaling regime. For FBL performance, if each user sends $k=100$ bits then $M=2^k$ and hence $\Ex{P_{e,u}^{\max}}\approx \Ex{P_{e,u}}$

\section*{Acknowledgements}
\txb{We thank the anonymous reviewer for suggesting to apply the scalar AMP to derive explicit achievability bounds in the many MAC case (resulting in Section \ref{sec:amp_ach}) and pointing out to use  \eqref{eq:eta_inf} to approximate the $\eta^*$.} We also thank Prof. A. Montanari for the footnote in
Section~\ref{sec:curious} regarding rigorization of the replica-method prediction for PUPE.


  \bibliographystyle{IEEEtran}
	\bibliography{IEEEabrv,refs}

\begin{IEEEbiographynophoto}{Suhas S Kowshik}
is a PhD student in the department of Electrical Engineering and Computer Science (EECS) at Massachusetts Institute of Technology (MIT), Cambridge, MA, USA. He obtained his S.M. degree from EECS, MIT in 2018, and B.Tech and M.Tech dual degrees from Indian Institute of Technology Madras, Chennai, India in 2016. His research interests include information theory, communication, machine learning and optimization. 
\end{IEEEbiographynophoto}

\begin{IEEEbiographynophoto}{Yury Polyanskiy}
Yury Polyanskiy (S'08-M'10-SM'14) is an Associate Professor of Electrical Engineering and Computer Science and a member of IDSS and LIDS at MIT. Yury received M.S. degree in applied mathematics and physics from the Moscow Institute of Physics and Technology, Moscow, Russia in 2005 and Ph.D. degree in electrical engineering from Princeton University, Princeton, NJ in 2010. His research interests span information theory, statistical learning, error-correcting codes, wireless communication and fault tolerance.
Dr. Polyanskiy won the 2020 IEEE Information Theory Society James Massey Award, 2013 NSF CAREER award and 2011 IEEE Information Theory Society Paper Award.
\end{IEEEbiographynophoto}

\end{document}